\journalname{Information Systems}
\newtheorem{definition}{Definition}[section]
\newtheorem{lemma}{Lemma}[section]
\newtheorem{theorem}{Theorem}[section]
\newcommand{\nodes}{\mathit{N}}
\newcommand{\arcs}{\mathrm{A}}
\newcommand{\labels}{\Sigma}
\newcommand{\fsm}{\mathit{FSM}}
\newcommand{\outT}[1]{#1{\blacktriangleright}}
\newcommand{\inT}[1]{{\blacktriangleright}#1}
\newcommand{\logL}{\mathit{L}}
\newcommand{\dafsa}{\mathit{DAFSA}}
\newcommand{\initState}{s}
\newcommand{\finalStates}{\mathit{R}}
\newcommand{\tieconcat}{\oplus}
\newcommand{\dotieconcat}[2]{\text{\raisebox{.8ex}{$\smallfrown$}}}
\newcommand{\prefixF}[1]{\mathit{pref}(#1)}
\newcommand{\suffix}{\mathit{suff}}
\newcommand{\suffixF}[1]{\mathit{suff}(#1)}
\newcommand{\comPrefix}{\mathit{Pref}}
\newcommand{\comSuffix}{\mathit{Suff}}
\newcommand{\lnet}{\mathit{PN}}
\newcommand{\places}{\mathit{P}}
\newcommand{\netTransitions}{\mathit{T}}
\newcommand{\netArcs}{\mathit{F}}
\newcommand{\netLabel}{\lambda}
\newcommand{\reachGraph}{\mathit{RG}}
\newcommand{\inTr}[1]{\bullet{#1}}
\newcommand{\outTr}[1]{#1\bullet}
\newcommand{\inP}[1]{\bullet#1}
\newcommand{\outP}[1]{#1\bullet}
\newcommand{\initMarking}{\mathit{m_0}}
\newcommand{\finalMarkings}{\mathit{m_f}}
\newcommand{\markings}{\mathit{M}}
\newcommand{\optional}[1]{}
\newcommand{\head}[1]{\mathit{head}~#1}
\newcommand{\tail}[1]{\mathit{tail}~#1}
\newcommand{\open}{\sigma}
\newcommand{\closed}{\Omega}
\newcommand{\Closed}{\Theta}
\newcommand{\out}{\alpha}
\newcommand{\f}{\mathit{f}}
\newcommand{\n}{\mathit{m}}
\newcommand{\replaceTau}{\mathit{replaceTau}}
\newcommand{\remnodes}{\Xi}
\newcommand{\INV}{\Psi}
\newcommand{\inv}{\mathit{a}}
\newcommand{\prefMem}{\mathit{PrefixTable}}
\newcommand{\suffMem}{\mathit{SuffixTable}}
\newcommand{\psp}{\mathit{PSP}}
\newcommand{\configs}{\mathit{C}}
\newcommand{\conf}{\mathit{g}}
\newcommand{\op}{\mathit{op}}
\newcommand{\match}{\mathit{match}}
\newcommand{\lhide}{\mathit{lhide}}
\newcommand{\rhide}{\mathit{rhide}}
\newcommand{\sync}{\beta}
\newcommand{\syncs}{\mathit{B}}
\newcommand{\curCost}{\mathit{g}}
\newcommand{\futCost}{\mathit{h}}
\newcommand{\costF}{\rho}
\newcommand{\futL}{\mathit{F_{Log}}}
\newcommand{\futM}{\mathit{F_{Model}}}
\newcommand{\node}{\mathit{n}}
\newcommand{\mSet}{\mathit{MultiSet}}
\newcommand{\optimal}{\mathit{Opt}}
\newcommand{\potN}{\mathit{N_{new}}}
\newcommand{\maxcost}{\mathit{\rho_{max}}}
\newcommand{\actN}{\mathit{n_{act}}}
\newcommand{\nextN}{\mathit{n_{next}}}
\newcommand{\scomp}{\mathscr{C}}
\newcommand{\trPr}[1]{\mathit{\upharpoonright}_{#1}}
\begin{document}

\begin{frontmatter}




\title{Scalable Alignment of Process Models and Event Logs:\\ An Approach Based on Automata and S-Components}


\author[1]{Daniel Rei{\ss}ner}
\ead{dreissner@student.unimelb.edu.au}
\author[1]{Abel Armas-Cervantes}
\ead{abel.armas@unimelb.edu.au}
\author[1]{Raffaele Conforti}
\ead{raffaele.conforti@unimelb.edu.au}
\author[2]{Marlon Dumas}
\ead{marlon.dumas@ut.ee}
\author[3]{Dirk Fahland}
\ead{d.fahland@tue.nl}
\author[1]{Marcello La Rosa}
\ead{marcello.larosa@unimelb.edu.au}

\address[1]{University of Melbourne, Australia}
\address[2]{University of Tartu, Estonia}
\address[3]{Eindhoven University of Technology, Netherlands}

\begin{abstract}
Given a model of the expected behavior of a business process and given an event log recording its observed behavior, the problem of business process conformance checking is that of identifying and describing the differences between the process model and the event log.
A desirable feature of a conformance checking technique is that it should identify a minimal yet complete set of differences.
Existing conformance checking techniques that fulfill this property exhibit limited scalability when confronted to large and complex process models and event logs.
One reason for this limitation is that existing techniques compare each execution trace in the log against the process model separately, without reusing computations made for one trace when processing subsequent traces. Yet, the execution traces of a business process typically share common fragments (e.g. prefixes and suffixes).
A second reason is that these techniques do not integrate mechanisms to tackle the combinatorial state explosion inherent to process models with high levels of concurrency.
This paper presents two techniques that address these sources of inefficiency.
The first technique starts by transforming the process model and the event log into two automata. These automata are then compared based on a synchronized product, which is computed using an A* heuristic with an admissible heuristic function, thus guaranteeing that the resulting synchronized product captures all differences and is minimal in size. The synchronized product is then used to extract optimal (minimal-length) alignments between each trace of the log and the closest corresponding trace of the model. By representing the event log as a single automaton, this technique allows computations for shared prefixes and suffixes to be made only once.
The second technique decomposes the process model into a set of automata, known as S-components, such that the product of these automata is equal to the automaton of the whole process model. A product automaton is computed for each S-component separately. The resulting product automata are then recomposed into a single product automaton capturing all the differences between the process model and the event log, but without minimality guarantees. 
An empirical evaluation using 40 real-life event logs shows that, used in tandem, the proposed techniques outperform state-of-the-art baselines in terms of execution times in a vast majority of cases, with improvements ranging from several-fold to one order of magnitude. Moreover, the decomposition-based technique leads to optimal trace alignments for the vast majority of datasets and close to optimal alignments for the remaining ones.
\end{abstract}


\begin{keyword}
Process Mining \sep Conformance Checking \sep Automata \sep Petri nets \sep S-components

\end{keyword}

\end{frontmatter}


\sloppy
\section{Introduction}\label{sec:introduction}

Modern information systems maintain detailed business process execution trails. For example, an enterprise resource planning system keeps records of key events related to a company's order-to-cash process, such as the receipt and confirmation of purchase orders, the delivery of products, and the creation and payment of invoices.
Such records can be grouped into an \emph{event log} consisting of sequences of events (called \emph{traces}), each consisting of all event records pertaining to one case of a process.

Process mining techniques~\cite{ProcessMiningBook} allow us to exploit such event logs in order to gain insights into the performance and conformance of business processes. One widely used family of process mining techniques is conformance checking~\cite{ConformanceCheckingBook}.
A conformance checking technique takes as input a process model capturing the expected behaviour of a business process, and an event log capturing its observed behaviour. The goal of conformance checking is to identify and describe the differences between the process model and the event log. A common approach to achieve this is by computing \emph{alignments} between traces in the event log and traces that may be generated by the process model. In this context, a trace alignment is a data structure that describes the differences between a trace of the log and a possible trace of the model. These differences are captured as a sequence of moves, including synchronous moves (moving forward both in the trace of the log and in the trace of the model) and asynchronous moves (moving forward either only in the trace of the log or only in the trace of the model). A desirable feature of a conformance checking technique is that it should identify a minimal (yet complete) set of behavioural differences. In the context of trace alignments, this means that the computed alignments should have a minimal length (or more generally minimal-cost).


Existing techniques that fulfill these properties~\cite{AdriansyahDA11,BVD-Alignment}
exhibit scalability limitations when confronted to large and complex event logs. For example, in a collection of 40 real-life event logs presented later in this paper, the execution times of these techniques are over 10 seconds in about a quarter of cases and over 5 seconds in about half of cases, which hampers the use of these techniques in interactive settings as well as in use cases where it is necessary to apply conformance checking repeatedly, for example in the context of automated process discovery~\cite{AugustoCDRMMMS19}, where several candidate models need to be compared by computing their conformance with respect to a given log.

The scalability limitations of existing conformance checking techniques stem, at least in part, from two sources of inefficiency:
\begin{enumerate}
\item These techniques compute an alignment for each trace of the log separately. They do not reuse partial computations made for one trace when processing other traces. Yet, traces in an event log typically share common prefixes and suffixes.
We hypothesise that computing alignments for common prefixes or suffixes only once may improve performance in this context.
\item They do not integrate mechanisms to tackle the combinatorial state explosion inherent to process models with high levels of concurrency. Note that the number of possible interleavings of the parallel activities increases rapidly, for example four tasks in parallel can be executed in 24 different ways while eight tasks can already be executed in 40~320 different ways. This combinatorial explosion has an impact on the space of possible alignments between traces in an event log and the process model. 
\end{enumerate}

This paper presents two complementary techniques to address these issues.
The first technique starts by transforming the process model and the event log into two automata. Specifically, the process model is transformed into a minimal Deterministic Acyclic Finite State Automaton (DAFSA), while the process model is transformed into another automaton, namely its \emph{reachability graph}.
These automata are then compared using a synchronised product\footnote{A synchronised product of two automata is an automaton capturing the combined behaviour of the two automata executed in parallel, with synchronization occurring when transitions with the same symbol are taken in both automata (i.e. when both automata make the same move).}
 computed via an A* heuristic with an admissible heuristic function. The latter property guarantees that the resulting synchronised product captures all differences with a minimal number of non-synchronised transitions, which correspond to differences between the log and the model.
The synchronised product is then used to extract optimal (minimal-size) alignments between each trace of the log and the closest corresponding trace of the model.

To tackle the second of the above issues, the paper proposes a technique wherein the process model is first decomposed into a set of automata, known as S-components, such that the product of these automata is equal to the automaton of the whole process model. A synchronised product automaton is computed for each S-component separately. For example, given a model with four activities in a parallel block, this model is decomposed into four models -- each containing one of the four parallel tasks. These concurrency-free models are then handled separately, thus avoiding the computation of all possible interleavings and reducing the search space for computing a minimal synchronised product.
Once we have computed a product automaton for each S-component, these product automata are then recomposed into a single product automaton capturing all the differences between the process model and the event log. The article puts forward conditions under which this recomposition leads to a correct output. The article shows that the resulting recomposed product automaton is not necessarily minimal.





This article is an extended and revised version of a previous conference paper~\cite{ReissnerCDRA17}. This latter paper introduced the first technique mentioned above (automata-based alignment). With respect to the conference version, the additional contributions are the idea of using S-components decomposition in conjunction with automata-based alignment and the associated recomposition criteria and algorithms. These contributions are supported by correctness proofs both for the automata-based and for the decomposition-based technique as well as an empirical evaluation based on 40 real-life datasets and three baselines, including two baselines not covered in the conference version.

The next section discusses existing conformance checking techniques. Section 3 introduces definitions and notations related to finite state machines, Petri nets and event logs. Next, Section 4 introduces the automata-based technique, while Section 5 presents the technique based on S-component decomposition. Finally, Section 6 presents the empirical evaluation while Section 7 summarises the contributions and discusses avenues for future work. 
\section{Related Work}\label{sec:background}


The aim of conformance checking is to characterise the differences between the behaviour observed in an event log and the behaviour captured by a process model. In this article, we specifically aim to identify behaviour observed in the log that is disallowed by the model (a.k.a.\ unfitting behaviour). Below, we review existing techniques for this task. We first discuss techniques based on \emph{token replay}, which do not aim to achieve optimality. We then review techniques based on (exact) \emph{trace alignment}, which aim to minimise a cost function, such as the length of the computed alignments. Finally, we review two families of approaches to tackle the complexity of computing optimal trace alignments: \emph{approximate trace alignment} and \emph{divide-and-conquer approaches}.



%



\paragraph{Token replay} A simple approach to detect and measure unfitting behaviour is \emph{token-based replay}~\cite{RozinatA08}. The idea is to replay each trace against the process model. In the token-based replay technique presented in~\cite{RozinatA08}, the process model is represented as a Workflow net -- a type of Petri net with a single source place, a single sink place and such that every transition is on a path from the source to the sink. The token replay technique fires transitions in the Petri net, starting from an initial marking where there is one token in the source place, following the order of the events in the trace being replayed. Whenever the current marking is such that the transition corresponding to the next event in the trace is not enabled, the replay technique adds tokens to the current marking so as to enable this transition. Once the sink state of the Petri net is reached, the technique counts the number of \emph{remaining} tokens, i.e. tokens that were left behind in places other than the sink place of the Workflow net. The unfitting behaviour is quantified in terms of the number of added tokens and remaining tokens. 
An extended version of this approach, namely \emph{continuous semantics fitness}~\cite{Medeiros06}, achieves higher efficiency at the expense of incompleteness. Another extension of token replay~\cite{BrouckeMGCBV14} decomposes the model into single-entry single-exit fragments, such that each fragment can be replayed independently. Other extensions based on model decomposition are discussed in~\cite{Munoz-GamaCA14}. 

Intuitively, replay techniques count the number of ``passing errors'' that occur when parsing each trace against the process model. Replay fitness methods fail to identify a minimum number of parsing errors required to explain the unfitting behaviour, thus overestimating the magnitude of differences. To tackle this limitation, several authors have proposed to rely on optimal trace alignment instead of replay~\cite{AdriansyahDA11}.

\paragraph{Trace alignment} Trace alignment techniques extend replay techniques with the idea of computing an optimal alignment between each trace in the log and the closest corresponding trace of the process model (specifically the trace with the smallest Levenshtein distance) . In this context, an alignment of two traces is a sequence of \emph{moves} (or \emph{edit operations}) that describe how two cursors can move from the start of the two traces to their end. In a nutshell, there are two types of edit operations. A \emph{match} operation indicates that the next event is the same in both traces. Hence, both cursors can move forward synchronously by one position along both traces. 
Meanwhile, a \emph{hide} operation (deletion of an element in one of the traces) indicates that the next events are different in each of the two traces. Alternatively, one of the cursors has reached the end of its trace while the other has not reached its end yet. Hence, one cursor advances along its traces by one position while the other cursor does not move.
An alignment is optimal if it contains a minimal number of hide operations. This means that the alignment has a minimal length. 

Conformance checking techniques that produce trace alignments can be subdivided into \emph{all-optimal} and \emph{one-optimal}. A conformance checking technique is called \emph{all-optimal} if it computes every possible minimal-distance alignment between each log trace and the model. Meanwhile, a conformance checking technique is called \emph{one-optimal}, if it computes only one minimal-distance alignment for each log trace. 

The idea of computing alignments between a process model (captured as a Petri net) and an event log was developed in Adriansyah et al.~\cite{AdriansyahDA11, Adriansyah14}. This proposal maps each trace in the log into a (perfectly sequential) Petri net. It then constructs a synchronous Petri nets as a product out of the model and the perfectly sequential net corresponding to the trace. Finally, it applies an A* algorithm to find the shortest path through the synchronous net which represents an optimal alignment.
Van Dongen~\cite{BVD-Alignment} extends Adriansyah et al's technique~\cite{AdriansyahDA11, Adriansyah14} by strengthening the underlying heuristic function. This latter technique was shown to outperform~\cite{AdriansyahDA11, Adriansyah14} 
on an artificial dataset and a handful of real-life event log-model pairs. In the evaluation reported later in this article, we use both~\cite{AdriansyahDA11, Adriansyah14} and \cite{BVD-Alignment} as baselines.

De Leoni et al~\cite{leoni2017} translate the trace alignment problem into an automated planing problem. Their argument is that a standard automated planner provides a more standardised implementation and more configuration possibilities from the route planning domain. Depending on the planner implementation, this approach can either provide optimal or approximate solutions. In their evaluation, De Leoni et al. show that their approach can outperform~\cite{AdriansyahDA11} only on very large process models. Subsequently, \cite{BVD-Alignment} empirically showed that trace alignment techniques based on the A* heuristics outperform the technique of De Leoni et al. Accordingly, in this article we do not retain the technique by De Leoni et al. as a baseline.


In the above approaches, each trace is aligned to the process model separately. An alternative approach, explored in~\cite{GarciaL17}, is to align the entire log against the process model, rather than aligning each trace separately. 
Concretely, the technique presented in~\cite{GarciaL17} transforms both the event log and the process model into \emph{event structures}~\cite{NielsenPW1981}. It then computes a synchronised product of these two event structures. Based on this product, a set of statements are derived, which characterise all behavioural relations between tasks captured in the model but not observed in the log and vice-versa. The emphasis of behavioural alignment is on the completeness and interpretability of the set of difference statements that it produces. As shown in~\cite{GarciaL17}, the technique is less scalable than that of~\cite{AdriansyahDA11, Adriansyah14}, in part due to the complexity of the algorithms used to derive an event structure from a process model. Since the emphasis of the present article is on scalability, we do not retain~\cite{GarciaL17} as a baseline. On the other hand, the technique proposed in this article computes as output the same data structure as~\cite{GarciaL17} -- a so-called Partially Synchronised Product (PSP). Hence, the output of the techniques proposed in this article can be used to derive the same natural-language difference statements produced by the technique in~\cite{GarciaL17}.

\paragraph{Approximate trace alignment} In order to cope with the inherent complexity of the problem of computing optimal alignments, several authors have proposed algorithms to compute approximate alignments.
\emph{Sequential alignments} \cite{Boudewijn17} is one such approximate technique. This technique implements an incremental approach 
to calculate alignments. The technique uses an ILP program to find the cheapest edit operations for a fixed number of steps (e.g.\ three events) taking into account an estimate of the cost of the remaining alignment. The approach then recursively extends the found solution with another fixed number of steps until a full alignment is computed. 
We do not use this approach as a baseline in our empirical evaluation since the core idea of this technique was used in the extended marking equation alignment approach presented in \cite{BVD-Alignment}, which derives optimal alignments and exhibits better performance than Sequential Alignments. In other words, \cite{BVD-Alignment} subsumes \cite{Boudewijn17}. 

Another approximate alignment approach, namely \emph{Alignments of Large Instances} or ALI~\cite{ALI}, finds an initial candidate alignment using a replay technique and improves it using a local search algorithm until no further improvements can be found. The technique has shown promising results in terms scalability when compared to the exact trace alignment techniques presented in~\cite{AdriansyahDA11, Adriansyah14, BVD-Alignment}. Accordingly, we use this technique as a baseline in our evaluation.

An approximate model-log alignment approach for detecting all possible alignments for a trace is the \emph{evolutionary approximate alignments}~\cite{EvolutionaryAllOptimal}. It encodes the computation of alignments as a genetic algorithm. Tailored crossover and mutation operators are applied to an initial population of model mismatches to derive a set of alignments for each trace. In this article, we focus on computing one alignment per trace (not all possible alignments) and thus we do not consider these approaches as baselines in our empirical evaluation. Approaches that compute all-optimal alignments are slower than those that compute a single optimal alignment per trace, and hence the comparison is unfair.

Bauer et al.~\cite{TraceSampling} propose to use \emph{trace sampling} to approximately measure the amount of unfitting behaviour between an event log and a process model.  The authors use a measure of trace similarity in order to identify subsets of traces that may be left out without substantially affecting the resulting measure of unfitting behavior. This technique does not address the problem of computing trace alignments, but rather the problem of (approximately) measuring the level of fitness between an event log and a process model. In this respect, trace sampling is orthogonal to the contribution of this article. Trace sampling can be applied as a pre-processing step prior to any other trace alignment technique, including the techniques presented in this article.


Last, Burattin et al.~\cite{OnlineConformance} propose an approximate approach to find alignments in an online setting. In this approach, the input is an event stream instead of an event log. Since traces are not complete in such an online setting, the technique computes alignments of trace prefixes and estimates the remaining cost of a possible suffix. The emphasis of this approach is on the quality of the alignments made for trace prefixes, and as such, it is not directly comparable to trace alignment techniques that take full traces as input. 



\paragraph{Divide-and-conquer approaches} In divide-and-conquer approaches, the process model is split into smaller parts to speed up the computation of alignments by reducing the size of the search space. Van der aalst et al.~\cite{van2013decomposing} propose a set of criteria for a valid decomposition of a process model in the context of conformance checking. One decomposition approach that fulfills these criteria is the \emph{single-entry-single-exit} (SESE) process model decomposition approach. Munoz-Gama et al.~\cite{Munoz-GamaCA14} present a trace alignment technique based on SESE decomposition. The idea is to compute an alignment between each SESE fragment of a process model and the event log projected onto this model fragment. An advantage of this approach is that it can pinpoint mismatches to specific fragments of the process model. However, it does not compute alignments at the level of the full traces of the log -- it only produces partial alignments between a given trace and each SESE fragment. A similar approach is presented in~\cite{wang2017aligning}. 

Verbeek et al.~\cite{verbeek2016merging} presents an extension of the technique in~\cite{Munoz-GamaCA14}, which merges the partial trace alignments produced for each SESE fragment in order to obtain a full alignment of a trace.
This latter technique sometimes computes optimal alignments, but other times it produces so-called \emph{pseudo-alignments} -- i.e., alignments that correspond to a trace in the log but not necessarily to a trace in the process model. 
In this article, the goal is to produce actual alignments (not pseudo-alignments). Therefore, we do not retain \cite{verbeek2016merging} as a baseline.

Song et al.~\cite{song2017} present another approach for recomposing partial alignments, which does not produce pseudo-alignments. Specifically, if the merging algorithm in~\cite{verbeek2016merging} can not recompose two partial alignments into an optimal combined alignment, the algorithm  merges the corresponding model fragments and re-computes a partial alignment for the merged fragment. 
This procedure is repeated until the re-composition yields an optimal alignment. In the worst case, this may require computing an alignment between the trace and the entire process model. 
A limitation of~\cite{song2017} is that it requires a manual model decomposition of the process model as input. The goal of the present article is to compute alignments between a log and a process model automatically, and hence we do not retain~\cite{song2017} as a baseline.

\newcommand{\minPI}{\mathit{PI}}
\newcommand{\wNetDecomposition}{\mathit{C}}
\newcommand{\key}{\mathit{key}}
\newcommand{\sysNet}{\mathit{SN}}
\newcommand{\wNet}{\mathit{WN}}
\newcommand{\src}[1]{\mathit{src}(#1)}
\newcommand{\tgt}[1]{\mathit{tgt}(#1)}
\newcommand{\lbl}[1]{\lambda(#1)}
\newcommand{\replaceTauBack}{\mathit{replaceTauBackwards}}
\newcommand{\prAlign}{\overline{\epsilon}}

\newcommand{\dom}{\textit{dom }}
\newcommand{\indexNetLabel}{\netLabel}
\newcommand{\indexArcDafsa}{a_{\dafsa}}
\newcommand{\indexArcReachGraph}{a_\reachGraph}
\newcommand{\indexArcLabelDafsa}{l_\dafsa}
\newcommand{\indexArcLabelReachGraph}{l_\reachGraph}
\newcommand{\indexOp}{op}
\newcommand{\logPr}[1]{#1^{a_{\dafsa}}}
\newcommand{\modPr}[1]{#1^{a_{\reachGraph}}}
\newcommand{\nodeLog}{n_{\dafsa}}
\newcommand{\nodeMod}{m_{\reachGraph}}
\newcommand{\ellLog}[1]{\ell_{\dafsa}(#1)}
\newcommand{\ellMod}[1]{\ell_{\reachGraph}(#1)}
\newcommand{\cost}{\mathit{cost}}

\newcommand{\alignMap}{\alpha}
\newcommand{\alignFunc}{\phi}
\newcommand{\futm}{f_{Model}}
\newcommand{\hypArc}{\mathfrak{a}}
\newcommand{\rep}{\mathit{\#i}}

\SetAlFnt{\scriptsize}

\section{Preliminaries}

This section defines the formal concepts and notations used throughout the paper: finite state machines, Petri nets and event logs.
The various concepts presented herein use labelling functions to assign labels to elements. For the sake of uniformity, $\labels$ denotes a finite set of labels and $\tau\in\labels$ is a special ``silent'' label
.
We use Z-notation \cite{Diller90} operators over sequences. Given a sequence $c=\langle x_1, x_2, \dots, x_n\rangle$, $|c|$ denotes the size, and \emph{head} and \emph{tail} retrieve the first and last element of a sequence, respectively, i.e., $|c| = n$, $\mathit{head}(c) = x_1$ and $\mathit{tail}(c) = x_n$. The element at index $i$ in the sequence $c$ is retrieved as $c[i] = x_i$. The operators $\mathit{for}$ and $\mathit{after}$ retrieve the elements before and after $\mathit{i}$ in a sequence, respectively. For example, $\mathit{for}(c,i) = \langle x_1, \dots, x_i\rangle$ and $\mathit{after}(c,i)=\langle x_{i+1}, \dots x_n\rangle$. Finally, \emph{$\mSet$} denotes the multiset representation of a sequence.

\subsection{Finite state machines}

A pervasive concept in our approaches is that of finite state machine (FSM), which is defined as follows.

\begin{definition}[Finite State Machine (FSM)]
Given the set of labels $\labels$, a finite state machine is a directed graph $\fsm = (\nodes, \arcs, \initState, \finalStates)$, where $\nodes$ is a finite non-empty set of states, $\arcs \subseteq \nodes \times \labels \times \nodes$ is a set of arcs, $\initState \in \nodes$ is an initial state,
and $\finalStates \subseteq \nodes$ is a set of final states.
\end{definition}

An arc in a FSM is a triplet $a = ( n_s, l, n_t)$, where $n_s$ is the \emph{source} state, $n_t$ is the \emph{target} state and $l$ is the \emph{label} associated to the arc. We define functions $\src{a} = n_s$ to retrieve the source state, $\lbl{a} = l$ to retrieve the label and $\tgt{a} = n_t$ to retrieve the target state of $a$.
Furthermore, given a node $n \in N$ and arc $a = (n_s,l,n_t) \in \arcs$, let $n \blacktriangleright a = n_t$ if $n = n_s$, and $n \blacktriangleright a = n$ otherwise.
The set of incoming and outgoing arcs of a state $n$ is defined as $\inT{n} = \{a \in \arcs \mid n = \tgt{a}\}$ and $\outT{n} = \{a \in \arcs \mid n = \src{a}\}$, respectively. Finally, a sequence of (contiguous) arcs in a FSM is called a \emph{path}.


\subsection{Process models and Petri nets}\label{sub:reachGraph}

Process models are normative descriptions of business processes and define the expected behavior of the process.
For example, we consider the loan application process model displayed in Fig.~\ref{fig:running-example} using the BPMN notation.
Once this process starts, the credit history, the income sources, personal identification and other financial information are checked. Once the application is assessed, either a credit offer is made, the application is rejected or additional information is requested (the latter leading to a re-assessment).
The process model in Fig.~\ref{fig:running-example} will be used as a running example in this article. For the sake of brevity, we will refer to the activities by the letters attached to the them, e.g, ``Check credit history'' will be referred to as ``A'' etc.
\begin{figure}[htbp]
\centering
\vspace{-0.5em}
\includegraphics[width=0.85\textwidth]{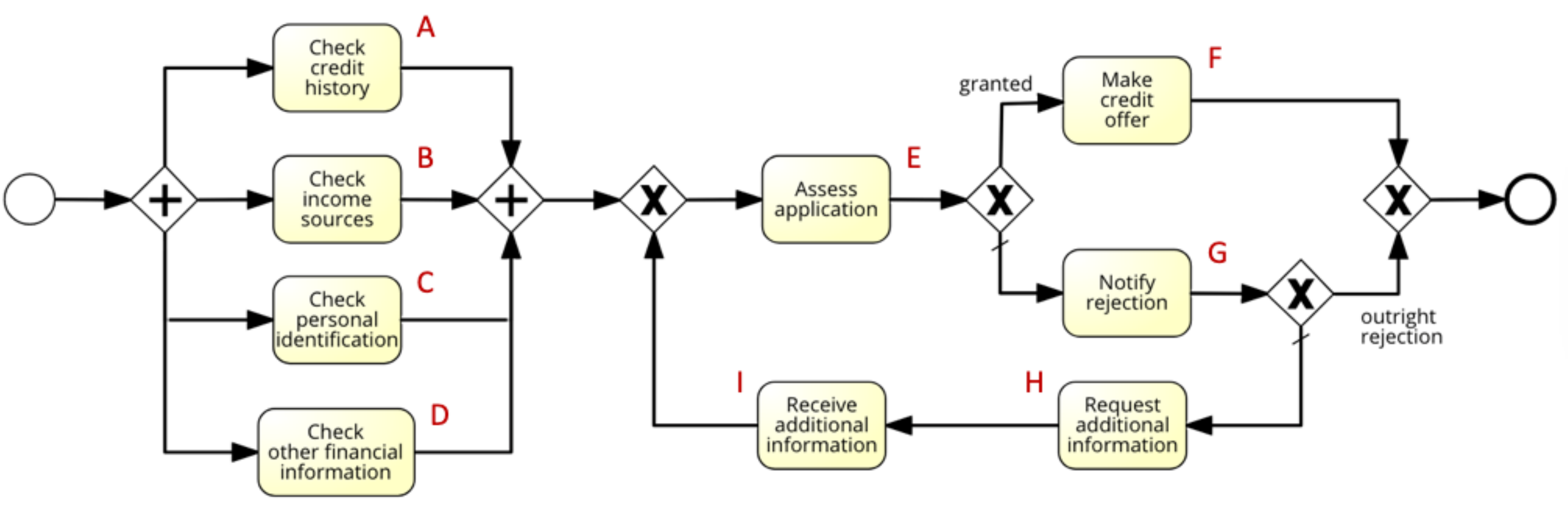}
\vspace{-1em}
\caption{Example loan application process model, adapted from \cite{GarciaL17}.}\label{fig:running-example}
\end{figure}

In the context of this work, business processes are represented as a particular family of Petri nets, namely \emph{labelled free-choice sound workflow nets}. This formalism uses transitions to represent activities, and places to represent resource containers. The formal definition of labelled Petri nets is given next.

\begin{definition}[Labelled Petri net]
A (labelled) \emph{Petri net} is the tuple $\lnet = (\places, \netTransitions, \netArcs, \netLabel)$, where $\places$ and $\netTransitions$ are disjoint sets of \emph{places} and \emph{transitions}, respectively (together called \emph{nodes}); $\netArcs \subseteq (\places \times \netTransitions) \cup (\netTransitions \times \places)$ is the flow relation, and $\netLabel : \netTransitions \to \labels$ is a labelling function mapping transitions to the set of task labels $\labels$ containing the special label $\tau$.
\end{definition}
Transitions labeled with $\tau$ describe invisible actions that are not recorded in the event log when executed. A node $x$ is in the preset of a node $y$ if there is a transition from $x$ to $y$ and, conversely, a node $z$ is in the postset of $y$ if there is a transition from $y$ to $z$. Then, the preset of a node $y$ is the set $\inTr{y} = \{x \in P \cup T | (x, y) \in F\}$ and the postset of $y$ is the set $\outTr{y} = \{z \in P \cup T | (y, z) \in F\}$.

Workflow nets \cite{WFNets} are Petri nets with two special places, an initial and a final place.

\begin{definition}[Labelled workflow net]\label{def:wNet}
A (labelled) \emph{workflow net} is a triplet $\wNet=(\lnet,i,o)$, where $\lnet = (\places, \netTransitions, \netArcs, \netLabel)$ is a labelled Petri net, $i\in\places$ is the initial and $o\in\places$ is the final place, and the following properties hold:
\begin{compactitem}
    \item The initial place $i$ has an empty present and the final place has an empty postset, i.e., $\inTr{i}=\outTr{o}=\varnothing$.
    \item If a transition $t^*$ were added from $o$ to $i$, such that $\inTr{i} = \outTr{o} = \{t^*\}$, then the resulting Petri net is strongly connected.
\end{compactitem}
\end{definition}
The execution semantics of a Petri net can be represented by means of markings. A marking $m: P \rightarrow \mathbb{N}_0$ is a function that associates places to natural numbers representing the amount of \emph{tokens} in each place at a given execution state. As we will later work with the so-called incidence matrix of a Petri net, we define the semantics already in terms of vectors over places. Fixing an order $\{p_1,\ldots,p_k\} = P$ over all places, we write a marking $m$ as a column vector $m = \langle m(p_1), \ldots, m(p_n) \rangle^\intercal$. We slightly abuse notation and write $m$ for both the function and the column vector; further we represent $m$ as the multiset of marked places in our examples. In vector notation, the pre-set $\inTr{t}$ of any transition $t$ defines a column-vector $N^-(t) = \langle x_1,\ldots,x_k \rangle^\intercal$ with $x_i = 1$ if $p_i \in \inTr{t}$, and $x_i = 0$ otherwise. Correspondingly, we define $N^+(t) = \langle z_1,\ldots,z_k\rangle^\intercal$ with $z_i = 1$ if $p_i \in \outTr{t}$, and $z_i = 0$ otherwise, for the post-set of $t$. We lift $+$, $-$, and $\leq$ to vectors by element-wise application.

A transition $t$ is \emph{enabled} at a marking $m$ if each pre-place of $t$ contains a token in $p$, i.e, $N^-(t) \leq m$. An enabled transition $t$ can \emph{fire} and yield a new marking $m' = m - N^-(t) + N^+(t)$ by consuming from all its pre-places ($N^-(t)$) and producing on all its post-places ($N^+(t)$). A marking $m$ is \emph{reachable} from another marking $m'$, if there exists a sequence of firing transitions $\sigma = \langle t_1,\dots t_n\rangle$ such that $\forall 1 \leq i < n : m_{i} = m_{i-1} - N^-(t_i) + N^+(t_i) \land N^-(t_i) \leq m_{i-1}$, where $m_0=m'$ and $m_n=m$. A marking $k$-bounded if every place at a marking $m$ has up to $k$ tokens, i.e., $m(p) \leq k$ for any $p \in P$. A Petri net equipped with an initial marking and a final marking is called a \emph{(Petri) system net}. 
The following definition for a system net refers specifically to workflow nets.

\begin{definition}[System net]
A \emph{System net} $\sysNet$ is a triplet $\sysNet = (\wNet, \initMarking, \finalMarkings)$, where $\wNet$ is a labelled workflow net, $\initMarking$ denotes the initial marking and $\finalMarkings$ denotes the final marking.
\end{definition}
A system net is $k$-bounded if every reachable marking in the workflow net is $k$-bounded. This work considers $1$-bounded system nets that are \emph{sound}~\cite{Verbeek01}, i.e., where from any marking $m$ reachable from $\initMarking$ we can always reach some $m_f \in \finalMarkings$, there is no reachable marking $m > m_f \in \finalMarkings$ that contains a final marking, and each transition is enabled in some reachable marking. Figure \ref{fig:runningExamplePetriNet} shows the system net representation for our running example.

\begin{figure}[htbp]
\centering
\includegraphics[width=0.85\textwidth]{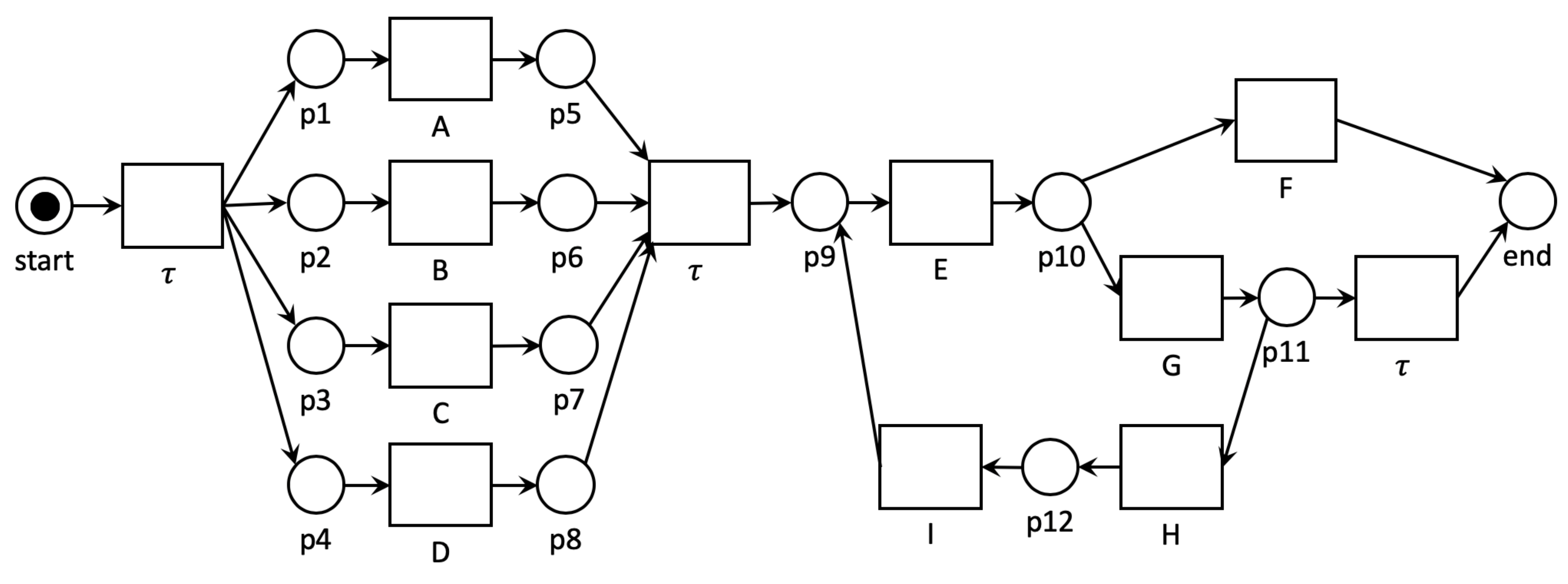}
\vspace{-1em}
\caption{System net representation of the running example of Fig.~\ref{fig:running-example}.}\label{fig:runningExamplePetriNet}
\end{figure}

The reachability graph~\cite{mayr84} of a system net $\sysNet$ contains all possible markings of $\sysNet$ -- denoted as $\markings$. Intuitively, a reachability graph is a non-deterministic FSM where states denote markings, and arcs denote the firing of a transition from one marking to another. The reachability graph for the running example is depicted in Fig. \ref{fig:runningExampleRG} showing markings as multi-sets of places. In this figure, every node contains the places with a token at each of the reachable markings. The complexity for constructing a reachability graph of a safe Petri net is $O(2^{\left|\places \cup \netTransitions\right|})$~\cite{lipton1976}. The formal definition of a reachability graph is presented next.

\begin{definition}[Reachability graph]
The \emph{reachability graph} of a System net $\sysNet = (\wNet, \initMarking, \finalMarkings)$ is a non-deterministic finite state machine $\reachGraph = (\markings, \arcs_{\reachGraph}, \initMarking, \finalMarkings)$, where $\markings$ is the set of reachable markings and $\arcs_{\reachGraph}$ is the set of arcs $\{(m_1, \netLabel(t), m_2) \in \markings \times \labels \times \markings \mid m_2 = m_1 - \inTr{t} + \outTr{t}\}$.
\end{definition}

\begin{figure}[h!]
\centering
\includegraphics[width=1\textwidth]{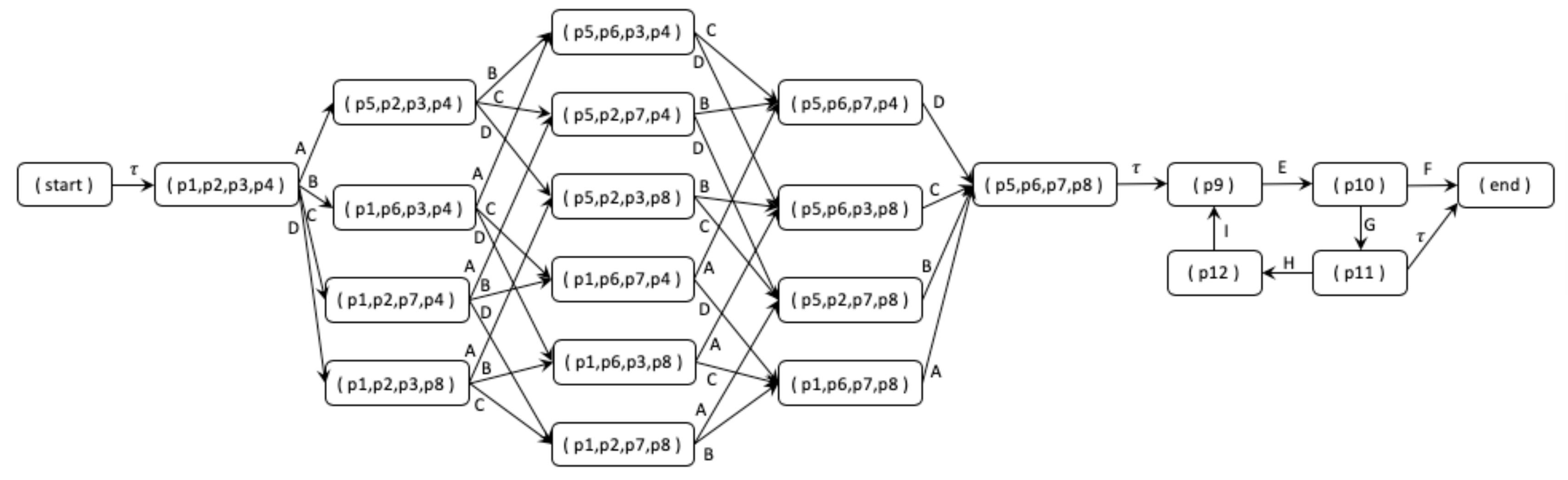}
\vspace{-1em}
\caption{Reachability graph of the running example.}\label{fig:runningExampleRG}
\end{figure}
\newpage
\subsection{Event logs}

\emph{Event logs}, or simply \emph{logs}, record the execution of activities in a business process.
These logs represent the executions of process instances as \emph{traces} -- sequences
of activity occurrences (\emph{a.k.a. events}). A trace can be represented as a sequence of
labels, such that each label signifies an event. Although an event log is a multiset
of traces containing several occurrences of the same trace, we are only interested in the distinct traces in the log and, therefore, we define a log as a set of traces. Figure~\ref{fig:runningExampleLog} depicts an example of a log containing activities of the loan application process in Fig.~\ref{fig:running-example}
We define the concept of a trace and an event log as follows:

\begin{definition}[Trace and event log]
Given a finite set of labels $L$, a \emph{trace} is a finite sequence of labels $\langle l_1, ..., l_n \rangle \in L^*$, such that $l_i \in L$ for any $1 \leq i \leq n$. An \emph{event log} $\logL$ is a set of traces.
\end{definition}

\begin{figure*}[htbp]
\centering
\resizebox{0.3\textwidth}{!}{
 \tikzstyle{block} = [draw, rectangle, fill=white, align=left, minimum height=5mm, text width={width("$t  C, A, B, E, H, I, E, F, G  t$")}, font=\footnotesize]
 \begin{tikzpicture}[>=stealth', node distance=-0.3pt]
  \node[block] (log) {\bf{Log}};
  \node[block, below=of log] (trace1) {$\langle B, D, C, E, G \rangle$};
  \node[block, below=of trace1] (trace2) {$\langle B, D, A, E, F, G \rangle$};
  \node[block, below=of trace2] (trace3) {$\langle C, A, B, E, E, G \rangle$};
  \node[block, below=of trace3] (trace4) {$\langle C, A, B, E, H, I, E, F, G \rangle$};
  \end{tikzpicture}
  }
  \vspace{-.5\baselineskip}
 \caption{Example log for our loan application process.}\label{fig:runningExampleLog}
 \vspace{-1.5\baselineskip}
\end{figure*}

\section{Automata-based conformance cheking}\label{sec:approach}

The objective of conformance checking is to identify a \emph{minimal} set of differences between the behavior of a given process model and a given log. As illustrated in Fig.~\ref{fig:approach}, the first approach proposed in this paper computes this minimal set of differences by constructing an error-correcting product, between the reachability graph of the model and an automaton-based representation of the log (called DAFSA).
An error-correcting product is an automaton that synchronizes the states and transitions of two automata. These synchronizations can represent both common and deviant behavior.
(1) First, the input process model is expanded into a reachability graph. (2) In parallel, the event log is compressed into a minimal, acyclic and deterministic FSM, a.k.a.\ DAFSA. The resulting reachability graph and DAFSA are then compared (3) to derive an error-correcting synchronized product automaton -- herein called a PSP. Each state in the PSP is a pair consisting of a state in the reachability graph and a state in the DAFSA.
A PSP represents a set of trace alignments that can be used for diagnosing behavioral difference statements via further analysis. The trace alignments contained in the PSP are minimal and deterministic (4).
The rest of this section starts by introducing some necessary concepts and is followed by a description of each of the steps including proofs of minimality and determinism of alignments.
\begin{figure}[htbp]
\centering
\includegraphics[scale=0.35]{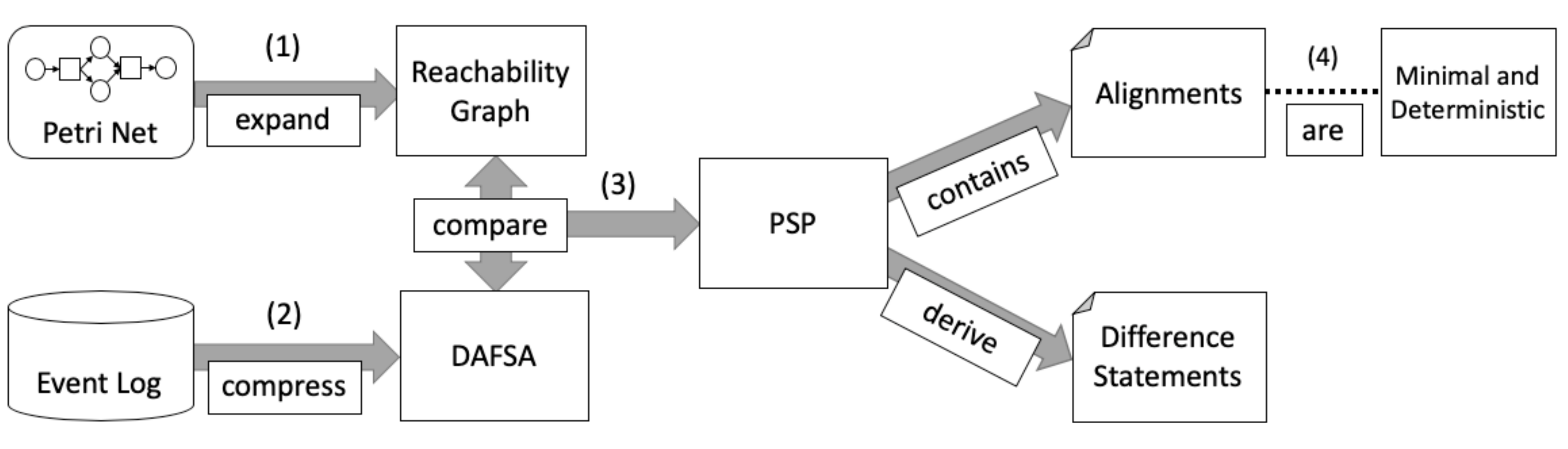}
\vspace*{-.5\baselineskip}
\caption{Overview of the automata-based approach.}\label{fig:approach}
\end{figure}

\subsection{Expanding a Petri net to its $\tau$-less reachability graph (1)}

\begin{algorithm}[h]{
  \SetKwInOut{Input}{input}
  \SetKwProg{Fn}{Function}{}{}
  \Input{Reachability Graph $\reachGraph$}
  $\open \leftarrow \langle \initMarking \rangle$; \tcp{List of markings to check}
  $\closed \leftarrow \{\initMarking\}$; \tcp{Markings checked}
  \While{$\open \neq \langle \rangle$}{\label{line:begin breadth-first}
   $\n \leftarrow \head{\open}$;\tcp{Remove marking m from the head the list}
   $\open \leftarrow \tail{\open}$\;
   $\INV \leftarrow \{a = (m_1, l, \n) \in \inT{\n} \mid l = \tau \land \n \notin \finalMarkings\}$;\tcp{Outgoing $\tau$ arcs at m}
  \For{$\inv \in \INV$}{
   $\replaceTau(\inv, \n, \{ \n\})$;\tcp{Replace $\tau$ arcs}
  }
  $\arcs_{\reachGraph} \leftarrow \arcs_{\reachGraph} \setminus \INV$;\tcp{Remove outgoing $\tau$ arcs}
  \For{$(\n, l, m_2) \in \outT{\n} \mid m_2 \notin \closed$}{\tcp{Insert the target marking of each outgoing arc into the auxiliary lists}
   $\open \leftarrow \open \tieconcat m_2$\;
   $\closed \leftarrow \closed \cup \{m_2\}$\;
  }\label{line:end breadth first}
 }
 $\remnodes \leftarrow \{\n \in \markings \mid (\inT{\n} = \varnothing \land \n \neq \initMarking) \lor (\outT{\n} = \varnothing \land \n \notin \finalMarkings)\}$;\tcp{Non-initial markings with no incoming arcs, and non-final markings with no outgoing arcs. These markings result from the deletion of $\tau$ arcs}
 \While{$ \remnodes \neq \varnothing$}{\label{line:begin remNodes}
  \For{$\n \in \remnodes$}{
   $\arcs \leftarrow \arcs \setminus (\inT{\n} \cup \outT{\n})$\tcp{Remove arcs from every marking in $\remnodes$}
  }
  $\markings \leftarrow \markings \setminus \remnodes$;\tcp{Remove all disconnected markings}
  $\remnodes \leftarrow \{\n \in \markings \mid (\inT{\n} = \varnothing \land \n \neq \initMarking) \lor (\outT{\n} = \varnothing \land \n \notin \finalMarkings)\}$;\tcp{Determine if any more markings became disconnected} \label{line:end remNodes}
 }
    \For{${a = (m_1,l,m_f) \in \inT{m_f} \mid l=\tau}$}
    {
        $\replaceTauBack(a)$;\tcp{Replace $\tau$ arcs that target final markings with a backwards replacement}
    }
 \Return{$\reachGraph$}\;
 \Fn{$\replaceTau((m_1, \tau, \n) \in \arcs, \n_t \in \markings, \Closed \in 2^\markings$) \tcp{Inputs are an arc, a marking and a closed list}}{\label{line:begin replaceTau}
  \For{$(\n_t, l, m_2) \in \outT{\n_t}$}{
   \uIf{$l \neq \tau \vee m_2 \in \finalMarkings$}{
   \tcp{Replace outgoing arcs of the input marking that is not $\tau$ and its target is not final}
    $\arcs_{\reachGraph} \leftarrow \arcs_{\reachGraph} \cup \{(m_1, l, m_2)\}$;  
   }
   \ElseIf{$m_2 \notin \Closed$} {
    $\Closed \leftarrow \Closed \cup \{m_2\}$;\tcp{Add $m_2$ to $\closed$ list if it has not been investigated yet.}
    $\replaceTau((m_1, \tau, \n), m_2, \Closed)$;\tcp{Try to replace the input $\tau$ arc from the new target marking $m_2$}
   }
  }\label{line:end replaceTau}
 }
 \Fn{$\replaceTauBack((m_1,\tau,m_f) \in \arcs_{\reachGraph})$ \tcp{Function for replacing a given $\tau$ arc backwards}}
 {
     \tcp{Replace incoming arcs of $m_1$ with arcs from the predecessors of $m_1$ to $m_f$}
     \For{$(m_2,l,m_1) \in \inT{m_1} \mid l \neq \tau$}
     {
        $\arcs_{\reachGraph} = \arcs_{\reachGraph} \cup \{(m_2,l,m_f)\}$;
     }
     $\arcs_{\reachGraph} = \arcs_{\reachGraph} \setminus \{(m_1,\tau,m_f)\}$; \tcp{Remove the $\tau$ arc from the reachability graph}
 }
 \caption{Remove Tau Transitions}\label{alg:remTau}
}
\end{algorithm}
\setlength{\textfloatsep}{\baselineskip}

Petri nets can contains $\tau$-transitions representing invisible steps that are not recorded in an event log and they are captured in the reachability graph of a Petri net. Our approach aims at matching events from the event log to activities in the process model, and thus $\tau$ transitions can never be matched. Other approaches like \cite{AdriansyahDA11,BVD-Alignment} handle these $\tau$-transitions as automatically matched events. We argue however that a pre-emptive removal of $\tau$ transitions from the reachability graph can speed up the presented technique since less steps need to be considered.
In principle, we assume that a Petri net has a minimal number of $\tau$-transitions, for instance, by applying structural reduction rules that preserve all visible behavior~\cite{murata}. However not all $\tau$-transitions can be removed by structural reduction of the Petri net. We therefore remove the remaining $\tau$-transitions through behavior preserving reduction rules on the reachability graph by the breadth-first search algorithm given in Alg.~\ref{alg:remTau}.
Intuitively, for every marking $m$ reached by a $\tau$-transition $a_1 = (m_1,\tau,m) \in \arcs_{\reachGraph}$ and any outgoing transition $a_2 = (m,l,m_2) \in A_R$, the algorithm replaces $a_1$ with $a_{12} = (m_1,l,m_2)$ (lines 6-8 and lines 19-21).
This replacement is repeated until all arcs representing $\tau$-transitions are removed. In case all incoming arcs of a state get replaced we also remove $m$ and its outgoing arcs (Lines 12-16). Function $\emph{replaceTau}$ also handles the case of another outgoing $\tau$-labeled transition $a_2 = (m,\tau,m_2)$ by a depth-first search along $\tau$-transitions in $A_R$ (lines 22-24).
The algorithm then removes each remaining $\tau$ transition $a = ( m_1, \tau, m_f )$ targeting the final marking while introducing new replacement arcs $a' = ( m_2, l, m_f )$ for each incoming arc of $m_1$, such that $(m_2,l,m_1) \in \arcs_{\reachGraph}$ (Line 17 and function \emph{replaceTauBackwards}).
The reachability graph returned by Alg.~\ref{alg:remTau} is now free of $\tau$ transitions.
Figure~\ref{fig:modelFSM} shows the $\tau$-less reachability graph of the loan application process.
Observe that the node $[p1,p2,p3,p4]$ is removed and its outgoing arcs are connected to the node $[start]$, and, similarly, node $[p5,p6,p7,p8]$ is removed and its incoming arcs now target the node $[p9]$ instead. 
In addition, the arc $([p11],\tau,[\text{end}])$ is replaced with the newly introduced arc $([p10],G,[\text{end} ])$.

\begin{figure}[htbp]
\centering
\includegraphics[width=0.9\textwidth]{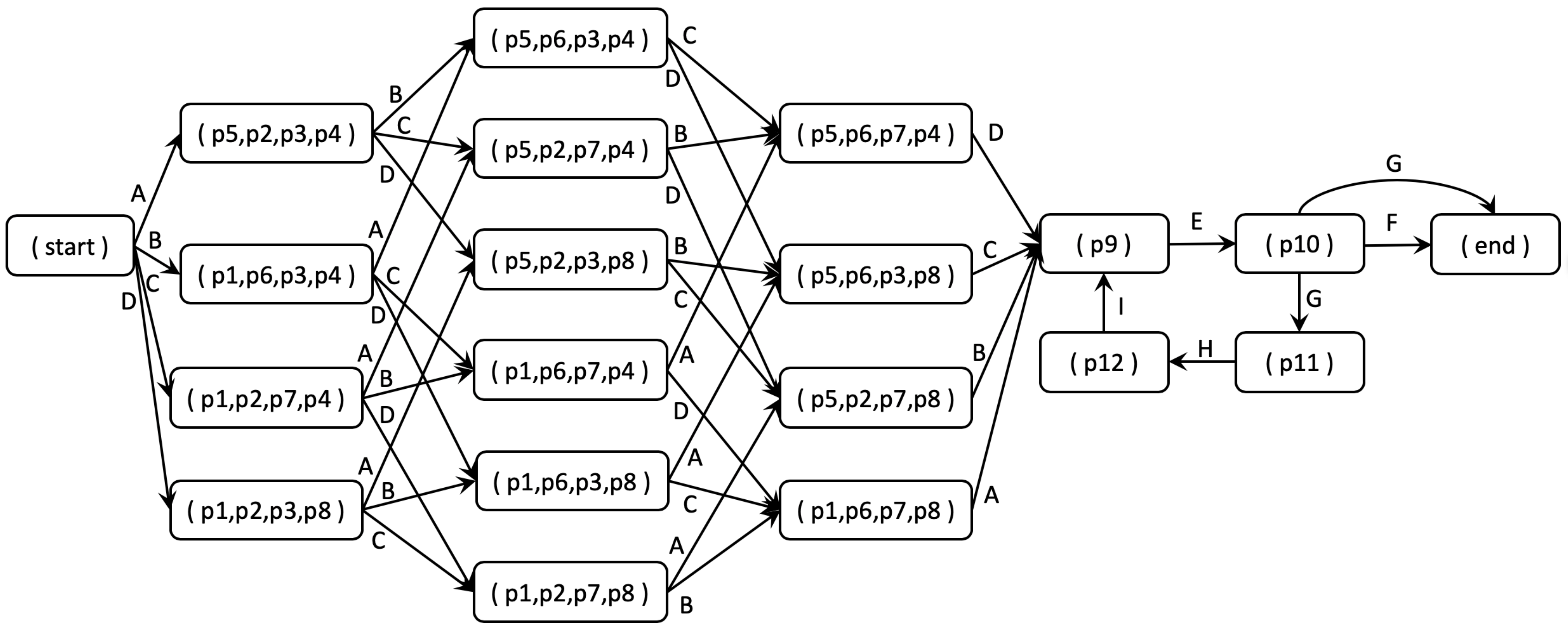}
\vspace{-.5\baselineskip}
\caption{Tau-less reachability graph of the running example.}\label{fig:modelFSM}
\end{figure}
\vspace{-1\baselineskip}

\subsection{Compressing an event log to its DAFSA representation (2)}\label{sub:dafsa}

Event logs can be represented as \emph{Deterministic Acyclic Finite State Automata} (DAFSA), which are acyclic and deterministic FSMs. A DAFSA can represent words, in our case \emph{traces}, in a
compact manner by exploiting prefix and suffix compression.

\begin{definition}[DAFSA]
Given a finite set of labels $L$, a DAFSA is an acyclic and deterministic finite state machine $\dafsa = (\nodes_{\dafsa}, \arcs_{\dafsa}, \initState_{\dafsa}, \finalStates_{\dafsa})$,
where $\nodes_{\dafsa}$ is a finite non-empty set of states, $\arcs_{\dafsa} \subseteq \nodes_{\dafsa} \times L \times \nodes_{\dafsa}$ is a
set of arcs, $\initState _{\dafsa}\in \nodes_{\dafsa}$ is the initial state, $\finalStates_{\dafsa} \subseteq \nodes_{\dafsa}$ is a set of final states. 
\end{definition}

Daciuk et al.~\cite{daciukJ00} present an efficient algorithm for constructing a DAFSA from a set of words. In the constructed algorithm every word is a path from the initial to a final state and, vice versa, every path from an initial to a final state is one of the given words.
We reuse this algorithm to construct a DAFSA from an event log, where the words are the set of traces. The complexity of building the DAFSA is $O(\left| L \right| \cdot \log{} n)$, where $L$ is the set of distinct event labels, and $n$ is the number of states in the DAFSA.


A \emph{prefix} of a state $n \in \nodes_{\dafsa}$ is a sequence of labels associated to the arcs on a path from the initial state to $n$ and, analogously, a \emph{suffix} of $n$ is a sequence of labels associated to the arcs on a path from $n$ to a final state. The prefix of the initial state and the suffix of a final state is $\{\langle\rangle\}$. A state $n$ can have several prefixes, which are denoted by $\prefixF{n} = \bigcup_{(n_s, l, n_t) \in \inT{n}} \{x \tieconcat l \mid x \in \prefixF{n_s}\}$,
where $\tieconcat$ denotes the concatenation operator. Similarly, the set of suffixes of $n$ is represented by
$\suffixF{n} = \bigcup_{(n_s, l, n_t) \in \outT{n}} \{l \tieconcat x \mid x \in \suffixF{n_t}\}$.
Prefixes and suffixes are said to be \emph{common} iff they are shared by more than one trace.

\begin{definition}[Common prefixes and suffixes]
Let $\dafsa = (\nodes_{\dafsa}, \arcs_{\dafsa}, \initState_{\dafsa}, \finalStates_{\dafsa})$ be a DAFSA. The set of common prefixes of $\dafsa$ is the set $\comPrefix = \{\prefixF{n} \mid n \in \nodes_{\dafsa} \wedge \left| \outT{n} \right| > 1\}$. The set of common suffixes of $\dafsa$ is the set $\comSuffix = \{\suffixF{n} \mid n \in \nodes_{\dafsa} \wedge \left| \inT{n} \right| > 1\}$.
\end{definition}
\begin{figure}[htbp]
\centering
\includegraphics[width=0.85\textwidth]{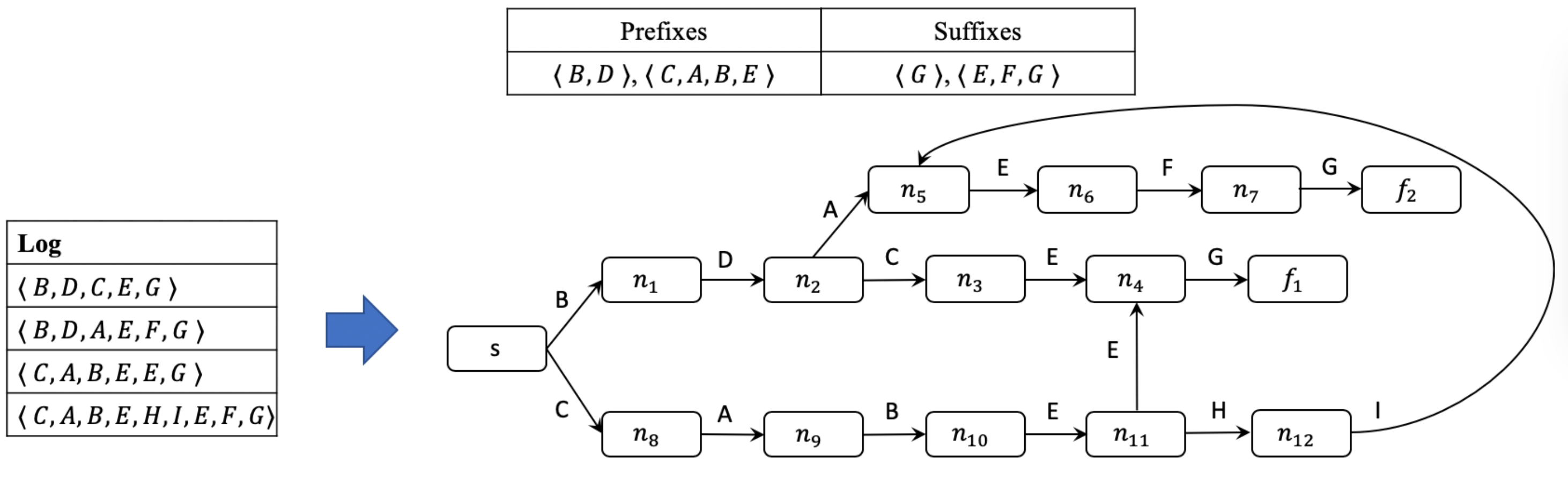}
\vspace{-1em}
\caption{DAFSA representation of the running example log.}\label{fig:runningExampleDAFSA}
\end{figure}


Figure \ref{fig:runningExampleDAFSA} depicts the DAFSA representation and its corresponding common prefixes and suffixes for the example event log in Fig.~\ref{fig:runningExampleLog}. In total, it summarizes 26 events with 16 arcs. All traces in the event log are paths from $s$ to one of the two final nodes $f_1$ or $f_2$. For instance, the trace $\langle B,D,C,E,G \rangle$ is represented by the path $\langle (s,B,n_1),(n_1,D,n_2),(n_2,C,n_3),(n_3,E,n_4),(n_4,G,f_1)\rangle$. In this example, the two common prefixes in nodes $n_2$ and $n_{11}$, as well as the common suffixes from nodes $n_4$ and $n_5$, are shared by two traces in the event log.


\subsection{Comparing a Reachability graph with a DAFSA to derive a PSP (3)}

The computation of similar and deviant behavior between an event log and a process model is based on an error-correcting synchronized product (PSP)~\cite{ArmasBDG2016}. Intuitively, the traces represented in the DAFSA are ``aligned'' with the executions of the model by means of three operations: 
\begin{inparaenum}[(1)]
	\item synchronized move ($\match$), the process model and the event log can execute the same task/event with respect to their label;
	\item log operation ($\lhide$), an event observed in the log cannot occur in the model; and
	\item model operation ($\rhide$), a task in the model can occur, but the corresponding event is missing in the log.
\end{inparaenum}
Both a trace in a log and an execution represented in a reachability graph are totally ordered sets of events (sequences). An alignment aims at \emph{matching} events from both sequences that represent the tasks with the same labels, such that the order between the matched events is preserved. An event that is not matched has to be hidden using the operation \emph{lhide} if it belongs to the log, or \emph{rhide} if it belongs to an execution in the model.
For example, given a trace $\langle B, D, C, E, G\rangle$ and an execution in a model $\langle B, D, C, A, E, G\rangle$, 
the first three activities $B, D$ and $C$ can be matched. Next, activity $A$ in the model cannot be matched, since it is not contained in the trace, and needs to be hidden with an $\rhide$ operation (so that the index in the model execution moves by one position). Last, the remaining activities $E$ and $G$ can be matched again and the matching for this example is complete.

In our context, the alignments are computed over a pair of FSMs, a DAFSA and a reachability graph, therefore the three operations: \emph{match}, \emph{lhide} and \emph{rhide}, are applied over the arcs of both FSMs. A \emph{match} is applied over a pair of arcs (one in the DAFSA and one in the reachability graph) whereas \emph{lhide} and \emph{rhide} are applied only over one arc. We record the type of operation and the involved arcs in a triplet called \emph{synchronization} where $\perp$ denotes the absence of an arc in case of $\lhide$ and $\rhide$.



\begin{definition}[Synchronization]
Let $\dafsa = (\nodes_{\dafsa}, \arcs_{\dafsa}, \initState_{\dafsa}, \finalStates_{\dafsa})$ and $\reachGraph = (\markings, \arcs_{\reachGraph}, \initMarking, \finalMarkings)$ be a DAFSA and a reachability graph, respectively. Then, the set of all possible synchronizations is defined as $\syncs(\dafsa, \reachGraph) = \{(\lhide, a_{\dafsa}, \perp) \mid a_{\dafsa} \in \arcs_{\dafsa}\} \cup \{(\rhide, \perp, a_{\reachGraph}) \mid a_{\reachGraph} \in \arcs_{\reachGraph}\} \cup \{(\match, a_{\dafsa}, a_{\reachGraph}) \mid a_{\dafsa} \in \arcs_{\dafsa} \land a_{\reachGraph} \in \arcs_{\reachGraph} \land \lambda(a_{\dafsa}) = \lambda(a_{\reachGraph})\}$.
\end{definition}

Given a synchronization $\beta = (\op,a_{\dafsa},a_{\reachGraph})$, let $\beta^\ell = \lambda(a_{\dafsa})$ if $a_{\dafsa} \neq \perp$ and $\beta^\ell = \lambda(a_{\reachGraph})$ if $a_{\reachGraph} \neq \perp$; this notation is well-defined as $\lambda(a_{\dafsa}) = \lambda(a_{\reachGraph})$ whenever $a_{\dafsa} \neq \perp \neq a_{\reachGraph}$.
Further, let $\beta^{\op} = \op, \logPr{\beta} = a_{\dafsa},$ and $\modPr{\beta} = a_{\reachGraph}$.
\newpage
All possible alignments between the traces represented in a DAFSA and the executions represented in a reachability graph can be inductively computed as follows. The construction starts by pairing the initial states of both FSMs and then applying the three defined operations over the arcs that can be taken in the DAFSA and in the reachability graph -- each application of the operations (synchronization) yield a new pairing of states. Note that the alignments between (partial) traces and executions are implicitly computed as sequences of synchronizations.

Given a sequence of synchronizations $\epsilon = \langle \beta_1, \dots, \beta_m \rangle$ with $\beta_i = (\mathit{op}_i,a_{i,\dafsa},a_{i,\reachGraph}), 1\leq i \leq m$, we define two projection operations $\epsilon\trPr{\dafsa}$ and $\epsilon\trPr{\reachGraph}$ that retrieve the sequence of arcs for the DAFSA and the reachability graph, respectively. The \emph{projection} onto $\dafsa$ is the sequence $\epsilon\trPr{\dafsa} = \langle a_{1,\dafsa},\ldots,a_{m,\dafsa} \rangle \trPr{A_\dafsa}$ of the $\dafsa$-entries in $\epsilon$ projected onto the arcs in $\dafsa$ (i.e., removing all $\perp$). Correspondingly, $\epsilon\trPr{\reachGraph} = \langle a_{1,\reachGraph},\ldots,a_{m,\reachGraph} \rangle \trPr{A_\reachGraph}$. Thus, $\epsilon\trPr{\dafsa}$ ($\epsilon\trPr{\reachGraph}$) contains the arcs of all $\mathit{match}$ and $\mathit{lhide}$ ($\mathit{rhide}$) triplets.
On top of that notation, we are interested in the sequence of labels represented by a sequence of arcs, shorthanded as $\lambda(\epsilon\trPr{\dafsa}) = \langle \lambda(a_1), \dots, \lambda(a_n) \rangle$.

\begin{definition}[(Proper) Alignment]\label{definition:alignment}
Given a DAFSA $\dafsa$, a reachability graph $\reachGraph$ and a trace $c\in\logL$, an alignment is defined as a sequence of synchronizations $\epsilon_c = \langle\beta_1, \beta_2, \dots \beta_m \rangle$, where $\beta_i \in \syncs(\dafsa, \reachGraph) \land 1 \leq i \leq \left|c\right| \leq m$. A \emph{proper alignment} for the trace $c$ fulfils two properties:
    \begin{compactenum}
        \item The sequence of synchronizations with $\lhide$ or $\match$ operations reflects the trace $c$, i.e. $\lambda(\epsilon_c\trPr{\dafsa}) = c$.
        \item The arcs of the reachability graph in the sequence of synchronizations with $\rhide$ or $\match$ operations forms a path in the reachability graph from the initial to the final marking, i.e. let $n=\left|\epsilon_c\trPr{\reachGraph}\right|$, then $\src{\epsilon_c\trPr{\reachGraph}[1]} = \initMarking \land \tgt{\epsilon_c\trPr{\reachGraph}[n]} = \finalMarkings \land \forall \leq i < n: tgt( \epsilon_c \trPr{\reachGraph} [i]) = src( \epsilon_c \trPr{\reachGraph} [i+1])$.
    \end{compactenum}
The set of all proper alignments for a given trace $c$ is denoted as $\configs(c,\reachGraph)$. We write $\epsilon_{c|op} = \{\beta = (op, a_{\dafsa},a_{\reachGraph}) \in \epsilon_c\}$ for the synchronizations of a particular operation $op$ in a given alignment $\epsilon_c$.
\end{definition}

Figure~\ref{fig:ExampleAlignment} illustrates the alignment for the matching example between the trace $\langle B, D, C, E, G\rangle$ and the model execution $\langle B, D, C, A, E, G\rangle$. Please note that this alignment is proper since all trace labels are present and the arcs of the reachability graph form a path in the automata displayed in Fig.~\ref{fig:modelFSM} to the final marking $[end]$.
\begin{figure}[htbp]
\centering
\includegraphics[width=1\textwidth]{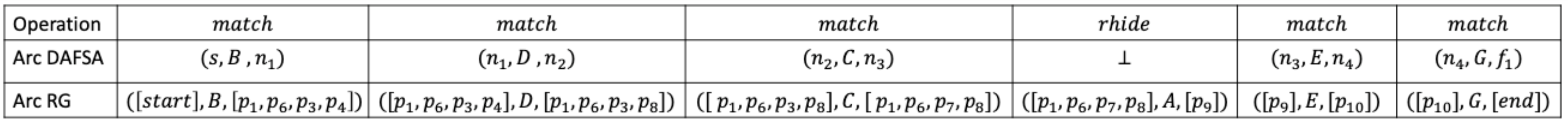}
\vspace{-1.5\baselineskip}
\caption{Proper alignment for the matching example.}\label{fig:ExampleAlignment}
\end{figure}

A cost can be associated to a proper alignment for a given trace. If an asynchronous $\lhide$ or $\rhide$ move is associated to a non-$\tau$ label then the cost increases. Assuming that the cost of hiding a non-$\tau$ transition is 1, the cost function is given as follows:

\begin{definition}[Alignment cost function]\label{definition:cost}
Given an alignment $\epsilon \in \configs(c,\reachGraph)$, the cost function $\cost : \configs(c,\reachGraph) \rightarrow \mathbb{N}$ for $\epsilon$ is defined as $\cost(\epsilon) = \left|\{\beta \in \epsilon \mid \beta^\ell \neq\tau \land \beta^\op \neq\match\}\right|$.
\end{definition}

All alignments can be collected in a finite state machine called PSP~\cite{ArmasBDG2016}. Every state in the PSP is a triplet $( n, m, \epsilon )$, where $n$ is a state in the DAFSA $\dafsa$, $m$ is a state in the reachability graph $\reachGraph$ and $\epsilon$ is the sequences of arcs taken in the $\dafsa$ and in $\reachGraph$ to reach $n$ and $m$; every arc of the PSP is a synchronization of $\dafsa$ and $\reachGraph$; the pairing of the initial states is the initial state of the PSP; and the finial states are those with no outgoing arcs.

\begin{definition}[PSP]\label{def:psp}
Given a DAFSA $\dafsa$ and a reachability graph $\reachGraph$, their PSP $\psp$ is a finite state machine $\psp = (\nodes_{\psp},\arcs_{\psp},\initState_{\psp},\finalStates_{\psp})$, where $\nodes_{\psp} \subseteq \nodes_\dafsa \times \markings \times \configs$ is the set of nodes, $\arcs_{\psp} = \nodes_{\psp}  \times\syncs(\dafsa, \reachGraph) \times \nodes_{\psp}$ is the set of arcs, $\initState_{\psp}= (\initState_\dafsa, \initMarking, \langle \rangle) \in \nodes_{\psp}$ is the initial node, and $\finalStates_{\psp} = \{\f \in \nodes_{\psp} \mid \outT{\f} =\varnothing\}$ is the set of final nodes.
\end{definition}

The PSP contains all possible alignments, however we are interested in the proper alignments with minimum cost. These alignments are called \emph{optimal}. The computation
of all possible alignments can become infeasible when the search space is too large. Thus, we use an $A^*$ algorithm~\cite{hart68} to consider the most promising paths in the PSP first, i.e., those minimizing the number of hides.
\newpage
The goal of our $A^*$-search is the same as other ``traditional'' techniques, such as \cite{AdriansyahDA11, BVD-Alignment}, i.e. to find alignments for each log trace while minimizing the number of mismatches.
However, the difference between the traditional approaches and the technique presented in this paper relies in the data structures used. Specifically, while the traditional approaches resemble the search for the shortest path through a synchronous net, our $A^*$ search resembles an error-correcting bi-simulation over two automata structures.
By applying the bi-simulation idea, we only need to compute the reachability graph of the workflow net once and do not need to compute a synchronous net for every unique trace in the event log.
We define the cost function for the $A^*$ as follows.

\begin{definition}[$A^*$-cost function]\label{def:cost}
Let $\logL$ and $\psp$ be a given event log and PSP, then for every trace $c \in \logL$ and every node $x = (n,m,\epsilon)\in\psp$ we define a cost function $\costF(x,c) = \curCost(x,c) + \futCost(x,c)$ that relies on the current cost function $\curCost$ and a heuristic function $\futCost$ for estimating future hides for a given trace. We define functions $\curCost$ and $\futCost$ as follows:
\begin{equation}
 \begin{split}
 \curCost(x,c) = \begin{cases}\cost(\epsilon(x)), & \textit{if } \epsilon(x)\trPr{\dafsa} = \mathit{for}(c,\left|\epsilon(x)\trPr{\dafsa}\right|) \\ \infty, & \mathit{otherwise} 
 \end{cases}
 \end{split}
 \end{equation}

 \centering$\futCost(x,c) = min \{\left|\futL(x,c) \setminus f_{Model} \right| + \left| f_{Model} \setminus \futL(x,c)\right| \mid f_{Model} \in \futM(x)\}$
\end{definition}


%
%

Function $\curCost$ returns the current cost for a given node $x$ in the PSP and a given trace $c$ to align. If the trace labels of the partial alignment of $x$, i.e. $\epsilon(x)\trPr{\dafsa}$, fully represent a prefix of $c$ then the cost of $\epsilon(x)$ is that of the $\cost$ function defined in Def. \ref{definition:cost}. Otherwise, node $x$ is not relevant to trace $c$ and the cost is set to $\infty$ to avoid considering this node in the search.
Function $\futCost$ relies on two functions $\futL$ and $\futM$. 
$\futL(x,c) = \mSet(c) \setminus \epsilon\trPr{\dafsa}$ denotes the multiset of future trace labels and $\futM$ is the set of multisets of future model labels.
The set of future model labels $\futM(x)$ is computed in a backwards breadth-first traversal over the strongly connected components of the reachability graph from each of its final markings. The multisets of task labels are collected during the traversal and stored in each node of the graph.
All labels from cyclic arcs inside strongly connected components are gathered during the traversal with a special symbol $\omega$ representing that the label can be repeated any number of times.
For the comparison of these labels to achieve an underestimating function, we set these labels to infinity for the term $\left|\futL \setminus \futm\right|$ and to 0 for the term $\left|\futm \setminus \futL \right|$, i.e. we assume that repeated task labels match all corresponding labels in the trace.
Observe that $\futCost$ assumes that all events with the same label in $\futL$ and $f_{Model}$ are matched, this is clearly an optimistic approximation, since some of the those matches might not be possible; then the optimistic approximation computed by $\futCost$ guarantees the optimality of the alignments; $\futCost$ is admissible.

\begin{algorithm}[!h]{
 \SetKwInOut{Input}{input}
 \SetKwProg{Fn}{Function}{}{}
 \Input{Event Log $\logL$, DAFSA $\dafsa$, Reachability Graph $\reachGraph$, }

 \For{$c \in \logL$}{
  $\open \leftarrow \{(\initState_{\psp}, \costF(\initState_{\psp},c))\}$\; \label{line:initAStar}
  $\maxcost \leftarrow \left| c \right| + $ minModelSkips\;
  \While{$\open \neq \varnothing$}{
   choose a tuple $(\actN=(n_{\dafsa}, m, \epsilon), \rho) \in \open$, such that $\nexists (n_{\psp}', \rho') \in \open : \rho > \rho'$\;
   $\open \leftarrow \open \setminus \{(\actN, \rho)\}$\;
   \uIf{$n_{\dafsa} \in \finalStates_{\dafsa} \land m \in M_f \land \epsilon\trPr{\dafsa} = c$}{
    \If{$\costF(\actN,c) < \maxcost$}{
     $\maxcost \leftarrow \costF(\actN,c)$\;
     $\optimal \leftarrow \varnothing$\;
     $\open \leftarrow \{(\node, \costF(\node,c)) \in \open \mid \costF(\node,c) \leq \maxcost\}$
    }
    $\optimal \leftarrow \optimal \cup \{\actN\}$\;
   }
   \Else{\label{line:begin rep2}
    $\potN \leftarrow \varnothing$\; \label{line:end rep2}\label{ln:beginSkip}
    \For{$\out_{\dafsa} = (n_{\dafsa},l_{\dafsa},n_t) \in \outT{n_{\dafsa}} \mid l_{\dafsa} = c(\left| \epsilon\trPr{\dafsa}\right| + 1)$}
    {
     $\potN \leftarrow \potN \cup \{(n_t,m,\epsilon \tieconcat(\lhide,\out_{\dafsa},\perp))\}$\;
     \For{$\out_{\reachGraph} = (m, l_{\reachGraph},m_t) \in \outT{m} \mid l_{\reachGraph} = l_{\dafsa}$}{
      $\potN \leftarrow \potN \cup \{(n_t, m_t,\epsilon \tieconcat (\match, \out_{\dafsa}, \out_{\reachGraph}))\}$
     }
    }
    \lFor{$\out_{\reachGraph} = (m,l_{\reachGraph},m_t) \in \outT{m}$}{
     $\potN \leftarrow \potN \cup \{n_{\dafsa},m_t,\epsilon \tieconcat (\rhide, \perp, \out_{\reachGraph}))\}$
    }
    $\open \leftarrow \open \cup \{(\nextN,\costF(\nextN,c)) \mid \nextN \in \potN \land \costF(\nextN,c) \leq \maxcost \}$\;\label{ln:endSkip}
   }
  }
  \lFor{$f \in \optimal$}{
   \textit{InsertIntoPSP}$(f,c,\psp)$
  }
 }
 \Return{$\psp$}\;
 \caption{Construct the PSP}\label{alg:constPSP}
}
\end{algorithm}

Algorithm \ref{alg:constPSP} shows the procedure to build the PSP, where an $A^*$ search is applied to find all optimal alignments for each trace in a log. The algorithm chooses a node with minimal cost $\costF$, such that if it pairs two final states (one in the DAFSA and one in the reachability graph) -- representing the alignment of a complete trace -- then it is marked as an optimal alignment. Otherwise, the search continues by applying $\lhide$, $\rhide$ and $\match$. 
As shown in~\cite{GarciaL17}, the complexity for constructing the PSP is in the order of $O(3^{|N_{\dafsa}|\cdot|M|})$ where $N_{\dafsa}$ is the set of states in the DAFSA and $M$ is the set of reachable markings of the Petri net.

\begin{algorithm}[!h]{
  \SetKwInOut{Input}{input}
  \SetKwProg{Fn}{Function}{}{}
  \vspace{0.5em}
  \begin{compactitem}
   \item[$\triangleright$] replace line \ref{line:initAStar} in Alg.~\ref{alg:constPSP} with the following block:
  \end{compactitem}
  \vspace{-0.8em}
  \begin{mdframed}[userdefinedwidth=0.9\textwidth,backgroundcolor=lightgray!20]
   \begin{algorithmic}[0]
    \begin{compactitem}
     \Indentp{-2em}
     \item[$\triangleright$] Reuse common prefix alignments
    \end{compactitem}
    \Indentp{-1em}
    \lFor{$i = 1 \rightarrow \left| c \right|$}{
     $\open \leftarrow \open \cup \{(\nextN, \costF(\nextN,c)) \mid \nextN \in \prefMem(c \textit{ for } i)\} $
    }
    \lIf{$\open =\varnothing$}{
     $\open \leftarrow \open \cup \{(\initState_{\psp}, \costF(\initState_{\psp},c)\}$
    }	
   \end{algorithmic}
  \end{mdframed}
  \vspace{-0.8em}
  \begin{compactitem}
   \item[$\triangleright$] replace line \ref{line:end rep2} in Alg.~\ref{alg:constPSP} with the following block:
  \end{compactitem}
  \vspace{-0.8em}
  \begin{mdframed}[userdefinedwidth=0.9\textwidth,backgroundcolor=lightgray!20]
   \begin{algorithmic}[0]
    \begin{compactitem}
     \Indentp{-2em}
     \item[$\triangleright$] Reuse common suffix alignments
    \end{compactitem}
    \Indentp{-1em}
      $\suffix_{act} \leftarrow c \textit{ after } \left| \{ \sync = (\op, a_{\dafsa}, a_{\reachGraph}) \in \epsilon \mid \op \neq \rhide \} \right|$\\;
     $\potN \leftarrow \{(f_{\dafsa},f_{\reachGraph}, \epsilon \tieconcat \conf_{\suffix}) \mid (f_{\dafsa},f_{\reachGraph},\conf_{\suffix}) \in \suffMem(n_{\dafsa},m,\suffix_{act})\}$\;
     $\open \leftarrow \open \cup \{(\nextN, \costF(\nextN,c)) \mid \nextN \in \potN\}$\;
     \lIf{$\potN \neq \varnothing$}{
      \bf continue\label{ln:continue}
     }
   \end{algorithmic}
  \end{mdframed}
 \caption{Construct the PSP with Prefix- and Suffix Memoization}\label{alg:constPSPwithPrefAndSuff}
 }
\end{algorithm}
In order to optimize the computation of the PSP, two memoization tables are used: prefix and suffix.
Both tables store partial trace alignments for common prefixes and suffixes that have been aligned previously.
The integration of these tables requires the modification of Alg.~\ref{alg:constPSP}, as shown in Alg.~\ref{alg:constPSPwithPrefAndSuff}. For each trace $c$, the algorithm starts by checking if there is a common prefix for $c$ in the prefix memoization table. If this is the case, the $A^*$ starts from the nodes stored in the memoization table for the partial trace alignments that have been previously observed.
In the case of common suffix memoization, the algorithm checks at each iteration whether the current pair of nodes and the current suffix is stored in the suffix memoization table. If this is the case, the algorithm appends nodes to the $A^*$ search for each pair of memoized final nodes and appends all partial suffix alignments to the current alignment
 \emph{instead} of continuing the regular search procedure.
Please note that the command {\bf continue} in line~\ref{ln:continue} of Alg.~\ref{alg:constPSPwithPrefAndSuff} refers to the while loop of lines~\ref{line:beginWhile}-\ref{line:endWhile} in Alg.~\ref{alg:constPSP}. This command skips lines~\ref{ln:beginSkip}-\ref{ln:endSkip} where we would usually offer new nodes to the open queue, and it is not necessary anymore because we had already inserted a node for the optimal suffix after the current node.
By reusing the information stored in these tables, the search space for the $A^*$ is reduced.

The approach illustrated so far produces a PSP containing all optimal alignments. Nevertheless, if only one optimal alignment is required, then the algorithm can be easily modified to stop as soon as the first alignment is found.
Overall, the complexity of the proposed approach is exponential in the worst case, i.e. $O(\left| \labels \right| \cdot \log{} n + 2^{\left| \places \cup \netTransitions \right|} + 3^{|N_{\dafsa}|\cdot|M|})$.

\begin{figure}[hbtp]
\centering
\includegraphics[width=\textwidth]{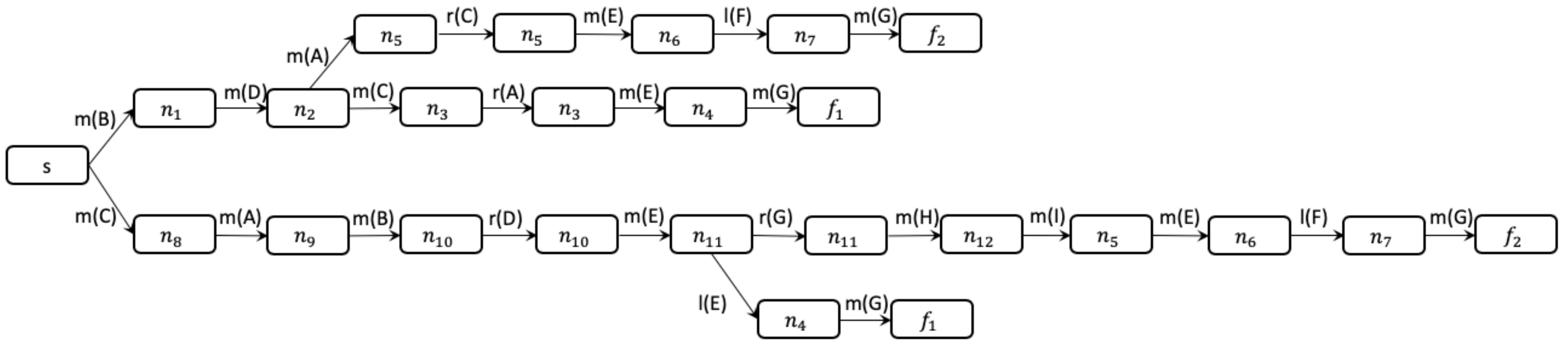}
\vspace{-1.5\baselineskip}
 \caption{\label{fig:PSP}The PSP for our loan application process example.}
\end{figure}

Figure~\ref{fig:PSP} shows an abbreviated PSP obtained by synchronizing the DAFSA of the loan application process in Fig.~\ref{fig:runningExampleDAFSA} and the $\tau$-less reachability graph of Fig.~\ref{fig:modelFSM}. The PSP shows the one-optimal alignments and abbreviates states in the PSP with only states in the DAFSA for readability purposes, and the name of the operations are shorthanded with the initial letter and the label of the activity, i.e., $(match, X) = m(X)$.
To understand its construction let us consider the sample trace $\langle B,D,C,E,G \rangle$. Starting from the source node of the PSP $\initState_{\psp}$,
the $A^*$ search explores the outgoing arc with label $B$ of the initial state of the DAFSA and all outgoing arcs of the initial marking of the reachability graph, i.e. arcs with the activity labels $A,B,C$ and $D$.
Thus, it will compute the cost of performing the following possible synchronizations: $(\lhide, B)$,  $(\rhide, A)$, $(\rhide, B)$, $(\rhide, C)$, $(\rhide, D)$ and also $(\match,B)$, because both, the DAFSA and the reachability graph, can execute $B$ at their current state.
Out of these six possibilities it will only explore $(\match, B)$\footnote{In case of $(\match, B)$ we have a current cost of zero since it is a match (i.e.\ $\curCost(\node) = 0$), and a future cost of one (i.e.\ $\futCost(\node, c) = \left|\{C,D,E,G\} \setminus \{A,C,D,E,G\}\right| + \left|\{A,C,D,E,G\} \setminus \{C,D,E,G\}\right| = 1$).} and $(\rhide, A)$ which have a cost of one. Other synchronizations like $(\rhide, B)$\footnote{In case of $(\rhide, B)$ we have a current cost of one since it is a hide (i.e.\ $\curCost(\node) = 1$), and a future cost of three (i.e.\ $\futCost(\node, c) = \left|\{B,C,D,E,G\} \setminus \{A,C,D,E,G\}\right| + \left|\{A,C,D,E,G\} \setminus \{B,C,D,E,G\}\right| = 2$).} will never be explored since they have a cost of three and there exist nodes with a lower cost. The $A^*$ search will continue exploring the possible synchronizations until all optimal alignments are discovered.

\subsection{Alignments are minimal and deterministic (4)}\label{sub:deterministicAlignments}

This section shows that the computed alignments are deterministic and minimal. 
In particular, we introduce rules to break ties between alignments with the same cost. While determinism is not a necessary property of alignments, it is a desired property for end-users that rely on the output of conformance checking techniques to, for instance, improve their process models. Newer studies, such as the \cite{21Conf}, deem deterministic results of conformance measures as desirable in process mining.

\textbf{Alignments are deterministic.}
A trace can have several optimal alignments, however, in order to have a deterministic computation of a single optimal alignment, we define an order on the construction of the PSP. This order is imposed on the operations, with the following precedence order: $\match > \rhide > \lhide$, and on the lexicographic order of the activity labels. We apply this precedence order at each iteration of the $A^*$-search on the set of candidate nodes of the queue that all have the lowest cost values w.r.t. $\rho$. In that way the $A^*$ search will still always explore the cheapest nodes first and guarantees to find an alignment with optimal cost. The precedence order merely provides a tool to deterministically select an optimal alignment from the set of optimal alignments with a specific order of operations and activity labels already during the exploration of the search space.

\begin{figure}[!h]
\centering
\includegraphics[scale=0.3]{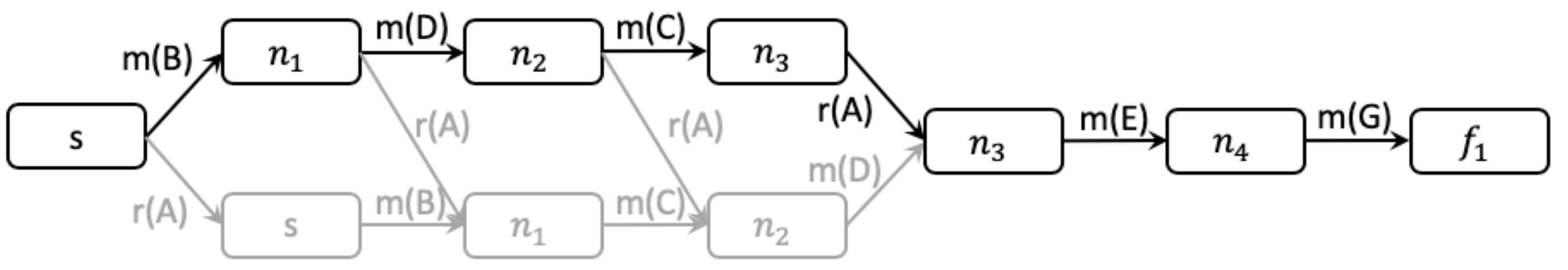}
 \caption{\label{fig:exampleDeterministic}Deterministic one-optimal alignment for trace $\langle B,D,C,E,G \rangle$.}
\end{figure}

We choose to prioritize $\rhide$ over $\lhide$ synchronizations in the preference order to increase the number of $\match$ synchronizations in the returned optimal alignment. We would like to remind the reader that an increase in $\match$ synchronizations does not change the cost function for an alignment as per Def.~\ref{def:cost}. An alignment with more $\match$ synchronizations, however, can link the observed trace more closely to the process model.
The following lemma shows that for optimal alignments, more $\rhide$ synchronizations lead to more $\match$ synchronizations.
Fig. \ref{fig:exampleDeterministic} demonstrates all 4 possible optimal alignments with the same cost for trace $\langle B,D,C,E,G\rangle$ of the loan application example with one mismatch (missing activity A in the parallel block). Out of these four, we select alignment $\langle m(B),m(D),m(C),r(A),m(E),m(G)\rangle$ according to the precedence order.

\begin{lemma}\label{theorem:rhideMatch}
Let $\epsilon_c$ be an optimal alignment for a trace $c$. For any other optimal alignment $\epsilon'_c$ for $c$, such that $\epsilon'_{c{|rhide}} < \epsilon_{c{|rhide}}$, then $\epsilon'_{c{|match}} < \epsilon_{c{|match}}$.
\end{lemma}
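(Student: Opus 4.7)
The plan is to reduce the lemma to a simple counting argument using two invariants: (i) every proper alignment of the trace $c$ consumes exactly $|c|$ DAFSA arcs via $\match$ or $\lhide$ synchronizations, and (ii) optimality forces the total count of $\lhide$ plus $\rhide$ synchronizations to be constant across optimal alignments, because the reachability graph has been made $\tau$-less by Alg.~\ref{alg:remTau}. First I would restate the inequality $\epsilon'_{c|\rhide} < \epsilon_{c|\rhide}$ as a strict inequality of cardinalities $|\epsilon'_{c|\rhide}| < |\epsilon_{c|\rhide}|$, since that is the natural reading under Def.~\ref{def:cost}.

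Next I would exploit property (1) of Def.~\ref{definition:alignment}: the DAFSA projection of any proper alignment of $c$ must spell out exactly $c$, so the number of synchronizations contributing to the DAFSA projection is $|c|$. Since exactly the $\match$ and $\lhide$ synchronizations contribute there, we obtain
\begin{equation*}
|\epsilon_{c|\match}| + |\epsilon_{c|\lhide}| \;=\; |c| \;=\; |\epsilon'_{c|\match}| + |\epsilon'_{c|\lhide}|.
\end{equation*}
Then I would use the $\tau$-less property of $\reachGraph$ (ensured in step~(1) of the approach) together with the assumption that log labels are non-$\tau$. This guarantees that every $\lhide$ and every $\rhide$ synchronization involves a non-$\tau$ label, so Def.~\ref{definition:cost} collapses to $\cost(\epsilon)=|\epsilon_{c|\lhide}| + |\epsilon_{c|\rhide}|$. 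Since $\epsilon_c$ and $\epsilon'_c$ are both optimal, their costs are equal, giving
\begin{equation*}
|\epsilon_{c|\lhide}| + |\epsilon_{c|\rhide}| \;=\; |\epsilon'_{c|\lhide}| + |\epsilon'_{c|\rhide}|.
\end{equation*}

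Combining the two displayed equations, the assumption $|\epsilon'_{c|\rhide}|<|\epsilon_{c|\rhide}|$ forces $|\epsilon'_{c|\lhide}|>|\epsilon_{c|\lhide}|$, whence $|\epsilon'_{c|\match}| = |c| - |\epsilon'_{c|\lhide}| < |c| - |\epsilon_{c|\lhide}| = |\epsilon_{c|\match}|$, which is the desired conclusion. The only subtle point I foresee is the treatment of $\tau$: the cost function discounts $\tau$-labelled hides, so without the $\tau$-less preprocessing the two equalities would not align and the claim could fail. I would therefore make the appeal to Alg.~\ref{alg:remTau} explicit and note that, since the DAFSA arcs are built from observed log events, there are no $\tau$-labelled $\lhide$ synchronizations either; this justifies replacing $\cost$ by the plain cardinality sum used above.
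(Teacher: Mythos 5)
Your proof is correct and follows essentially the same route as the paper's: both rest on the two invariants that the $\tau$-less preprocessing makes the cost equal to the count of $\lhide$ plus $\rhide$ synchronizations, and that properness forces the count of $\lhide$ plus $\match$ synchronizations to equal $\left|c\right|$. The only difference is stylistic: you derive the general case directly by arithmetic on these two equalities, whereas the paper first treats the case of a single extra $\rhide$ and then appeals to induction, so your version is in fact slightly more streamlined.
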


\begin{proof}
Given two optimal alignments $\epsilon_c,\epsilon'_c$, it holds that these two alignments have the same cost according to Def.~\ref{definition:cost}, i.e. $\cost(\epsilon_c)=\cost(\epsilon'_c)$, and these two alignments are proper according to Def.~\ref{definition:alignment}. Further, we assume that $\epsilon_c$ has more $\rhide$ synchronizations than $\epsilon'_c$, i.e. $\epsilon'_{c{|rhide}} < \epsilon_{c{|rhide}}$.
As a first step, we assume that $\epsilon_c$ has exactly one more $\rhide$ synchronization than $\epsilon'_c$, i.e. $\epsilon_{c{|rhide}} = \epsilon'_{c{|rhide}} + 1$.
The cost of an alignment is the number of $\rhide$ and $\lhide$ synchronizations disregarding all synchronizations involving $\tau$. Since we remove all $\tau$-labelled transitions in Alg.~\ref{alg:remTau}, the cost of an alignment equals exactly to the number of $\rhide$ and $\lhide$ synchronizations. By the assumptions, $\epsilon_c$ has one more $\rhide$ synchronization than, and the same cost as, $\epsilon'_c$ and so it follows that $\epsilon'_c$ has exactly one more $\lhide$ synchronization for a trace label $\ell$ than $\epsilon_c$, i.e. $(\lhide, \ell) \in \epsilon'_c \land (\lhide, \ell) \notin \epsilon_c \land \ell \in c$. 
Since both alignments properly represent the trace, the sum of their $\lhide$ and $\match$ synchronizations is equal to the size of the trace $\left|c\right|$. Therefore, $\epsilon_c$ needs to have one more $\match$ synchronizations than $\epsilon'_c$, in particular $(\match, \ell) \in \epsilon_c \land (\match, \ell) \notin \epsilon'_c \land \ell \in c$. The general case of multiple $\rhide$ synchronizations follows from inductive reasoning. If an optimal alignment $\epsilon_c$ has $x$ more $\rhide$ than another optimal alignment $\epsilon'_c$, then $\epsilon'_c$ must have $x$ more $\lhide$ than $\epsilon_c$ because they have the same cost. Similarly, $\epsilon_c$ must have $x$ more $\match$ synchronizations than $\epsilon'_c$ since the number of $\lhide$ and $\match$ synchronizations needs to equal to the size of the trace $\left|c\right|$. Hence, it holds for two optimal alignments $\epsilon_c, \epsilon'_c$ with $\epsilon'_{c{|rhide}} < \epsilon_{c{|rhide}}$ that $\epsilon'_{c{|match}} < \epsilon_{c{|match}}$ and thus the proof is complete.
\end{proof}

\textbf{Revising the construction of the PSP for one optimal alignments.}
Algorithm~\ref{alg:RevConstPSP} shows the modified procedure to construct a PSP containing one deterministic optimal alignment for a given trace $c$ which differs from Alg.~\ref{alg:constPSP} by using the deterministic selection criteria explained above (line 10), and terminating when the entire trace has been read and the final state in $\reachGraph$ has been reached (line 13).

\begin{algorithm}[!h]{
 \SetKwInOut{Input}{input}
 \SetKwProg{Fn}{Function}{}{}
 \Input{Event Log $\logL$, DAFSA $\dafsa$, Reachability Graph $\reachGraph$}

 \For{$c \in \logL$}{
  $f \leftarrow $ align$(c, \dafsa, \reachGraph)$\;
   \textit{InsertIntoPSP}$(f,c,\psp)$\;
 }
 \Return{$\psp$}\;
 \Fn{align$(c, \dafsa, \reachGraph$)}{\label{line:begin align}
  $\open \leftarrow \{(\initState_{\psp}, \costF(\initState_{\psp},c))\}$\;
  $\maxcost \leftarrow \mid c \mid + $ minModelSkips\;
  \While{$\open \neq \varnothing$}{
   $\optimal \leftarrow \{((n_{\dafsa}, m, \epsilon), \rho) \in \open$, such that $\nexists (n_{\psp}', \rho') \in \open : \rho > \rho'\}$\;\label{line:beginWhile}
   choose a tuple $\actN = ((n_{\dafsa}, m, \epsilon), \rho)  \in \optimal$ with the following priorities : $\op(\epsilon(\left|\epsilon\right|)) : \match > \rhide > \lhide$ and choosing $\lbl{\epsilon(\left|\epsilon\right|)}$ in lexicographical order\;\label{line:ChooseNode}
   $\open \leftarrow \open \setminus \{(\actN, \rho)\}$\;
   \uIf{$n_{\dafsa} \in \finalStates_{\dafsa} \land m \in \finalStates_{\reachGraph} \land \epsilon\trPr{\dafsa} = c$}{ \label{line:condOptimal}
    \Return{$\actN = (r_\dafsa,r_\reachGraph,\epsilon_c)$}\;
   }
   \Else{
    $\potN \leftarrow \varnothing$\;
    \For{$\out_{\dafsa} = (n_{\dafsa},l_{\dafsa},n_t) \in \outT{n_{\dafsa}} \mid l_{\dafsa} = c(\left| \epsilon\trPr{\dafsa}\right| + 1)$}
    {
     $\potN \leftarrow \potN \cup \{(n_t,m,\epsilon \tieconcat(\lhide,\out_{\dafsa},\perp))\}$\;\label{line:addLhide}
     \For{$\out_{\reachGraph} = (m, l_{\reachGraph},m_t) \in \outT{m} \mid l_{\reachGraph} = l_{\dafsa}$}{
      $\potN \leftarrow \potN \cup \{(n_t, m_t,\epsilon \tieconcat (\match, \out_{\dafsa}, \out_{\reachGraph}))\}$\label{line:addMatch}
     }
    }
    \lFor{$\out_{\reachGraph} = (m,l_{\reachGraph},m_t) \in \outT{m}$}{
     $\potN \leftarrow \potN \cup \{n_{\dafsa},m_t,\epsilon \tieconcat (\rhide, \perp, \out_{\reachGraph}))\}$\label{line:addRhide}
    }
    $\open \leftarrow \open \cup \{(\nextN,\costF(\nextN,c)) \mid \nextN \in \potN \land \costF(\nextN,c) \leq \maxcost \}$\;
   }
  }\label{line:endWhile}
  }\label{line:end align}

 \caption{Revised for one-optimal: Construct the PSP}\label{alg:RevConstPSP}
}
\end{algorithm}

\textbf{Alignments are proper and minimal.}
Note that the ``final'' node $(r_\dafsa, r_\reachGraph, \epsilon_c)$ returned in line 13 defines a sequence $\epsilon_c$ of synchronizations. Next, we show that $\epsilon_c$ is indeed a proper and optimal alignment of $c$ to $\reachGraph$. Let $\alignFunc(c,\psp) = \epsilon_c$ be a function that ``extracts'' $\epsilon_c$ out of the constructed PSP $\psp$ returned by Alg.~\ref{alg:RevConstPSP}.

\begin{lemma}\label{lem:global_alignment_is_alignment}
Let $\logL$, $\dafsa$ and $\reachGraph$ be an event log, a DAFSA and a reachability graph, respectively. For each trace $c \in \logL$ and $\psp = Alg4(\logL,\dafsa,\reachGraph)$, it holds that $\epsilon_c = \alignFunc(c,\psp)$ is a proper alignment of $c$ to $\reachGraph$, i.e. $\epsilon_c \in \configs(c,\reachGraph)$.
\end{lemma}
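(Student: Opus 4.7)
The goal is to verify that the sequence $\epsilon_c = \alignFunc(c,\psp)$ read off the ``final'' node returned in line~13 of Alg.~\ref{alg:RevConstPSP} fulfils both clauses of Def.~\ref{definition:alignment}. The plan is to treat each clause separately and to reduce both to a single loop invariant maintained by Alg.~\ref{alg:RevConstPSP}.

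The first clause asks for $\lambda(\epsilon_c\trPr{\dafsa}) = c$. This is essentially read off the termination guard: the only way a node $(n_{\dafsa},m,\epsilon)$ can be returned is through the branch at line~13, whose condition explicitly contains $\epsilon\trPr{\dafsa} = c$. To close the gap between $\epsilon\trPr{\dafsa} = c$ (equality of arc sequences) and $\lambda(\epsilon_c\trPr{\dafsa}) = c$ (equality of label sequences to the trace), I would invoke the construction of $\potN$ in lines~\ref{line:addLhide} and \ref{line:addMatch}: in both places the $\dafsa$-arc $\out_{\dafsa}=(n_{\dafsa},l_{\dafsa},n_t)$ appended to $\epsilon$ is selected precisely to satisfy $l_{\dafsa}=c(|\epsilon\trPr{\dafsa}|+1)$. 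A short induction over the length of $\epsilon\trPr{\dafsa}$ then shows that at every step the labels of the $\dafsa$-projection coincide, position by position, with a prefix of $c$, and the guard turns this into equality with $c$.

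The second clause asks for $\epsilon_c\trPr{\reachGraph}$ to be a contiguous path from $\initMarking$ to some $m_f \in \finalMarkings$. I would prove the following loop invariant: \emph{for every tuple $((n_{\dafsa},m,\epsilon),\rho)$ ever placed in $\open$, the sequence $\epsilon\trPr{\reachGraph}$ is a contiguous path in $\reachGraph$ with $\src{\epsilon\trPr{\reachGraph}[1]}=\initMarking$ and $\tgt{\epsilon\trPr{\reachGraph}[|\epsilon\trPr{\reachGraph}|]}=m$} (with the convention that an empty projection is a trivial path ending at $m=\initMarking$). The base case is the initial insertion $\initState_{\psp}=(\initState_\dafsa,\initMarking,\langle\rangle)$, for which the invariant holds vacuously. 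For the inductive step I would consider each of the three ways a successor is generated in lines \ref{line:addLhide}--\ref{line:addRhide}: an $\lhide$ step leaves $\epsilon\trPr{\reachGraph}$ unchanged and does not touch $m$, so the invariant is preserved; a $\match$ step appends $\out_{\reachGraph}=(m,l_{\reachGraph},m_t)\in\outT{m}$ and moves the second component to $m_t$, so the new last arc is outgoing from the old target and incoming to the new second component; a $\rhide$ step behaves identically on the reachability side. Once this invariant is established, the node returned at line~13 satisfies $m \in \finalStates_{\reachGraph}$, giving $\tgt{\epsilon_c\trPr{\reachGraph}[n]}\in\finalMarkings$ as required.

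The main obstacle I anticipate is the bookkeeping around the two degenerate configurations: (i) when $\epsilon\trPr{\reachGraph}$ is still empty (the invariant needs a clean ``empty path ends at $\initMarking$'' convention so that the very first $\match$ or $\rhide$ extension glues correctly), and (ii) handling the case in which $\reachGraph$ itself has $\initMarking \in \finalMarkings$, which is excluded by soundness but should be addressed explicitly. A minor secondary subtlety is that the deterministic tie-breaking in line~\ref{line:ChooseNode} and the pruning against $\maxcost$ do not discard the returned node; this is immediate since the guard in line~\ref{line:condOptimal} is checked \emph{after} extraction from $\open$ and before any such pruning could apply, so no valid final tuple is lost. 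Combining the two clauses then yields $\epsilon_c \in \configs(c,\reachGraph)$.
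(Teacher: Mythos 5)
Your proposal is correct and follows essentially the same route as the paper's own proof sketch: clause (1) is obtained from the termination guard $\epsilon\trPr{\dafsa}=c$ together with the label-selection condition in the $\lhide$/$\match$ construction, and clause (2) from the fact that reachability-graph arcs are only ever appended as outgoing arcs of the current marking $m$, so the projection is a path from $\initMarking$ to a final marking by the guard in line~\ref{line:condOptimal}. Your explicit loop invariant is just a more formal phrasing of the paper's argument.
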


\begin{proof}[Proof (Sketch)]
In order to prove that $\epsilon_c$ is a proper alignment, we proceed to show that it fulfils the two properties in Def.~\ref{definition:alignment}.

(1) The projection on the DAFSA reflects the trace $\lambda(\epsilon_c\trPr{\dafsa})=c$. Recall that the projection of any proper alignment onto $\dafsa$ contains only $\mathit{match}$ or $\mathit{lhide}$ operations. Alg.\ref{alg:RevConstPSP} starts at the initial state of the DAFSA for every given trace, iterates over the trace (\ref{line:beginWhile}-\ref{line:endWhile}) and adds $\lhide$-operations (line \ref{line:addLhide}) and $\match$-operations (line \ref{line:addMatch}) for outgoing arcs with the next label of the trace. Every alignment $\epsilon_c$ returned by Alg.~\ref{alg:RevConstPSP} then fulfils this property by construction as it needs to fulfil the condition $\epsilon\trPr{\dafsa} = c$ 
in line \ref{line:condOptimal} 
for determining if a given alignment is final.
\newpage
(2) $\epsilon_c\trPr{\reachGraph}$ is a path form $m_0$ to a final marking $m_f \in M_f$. Recall that the projection of any proper alignment onto $\reachGraph$ contains only $\mathit{match}$ or $\mathit{rhide}$ operations.
The algorithm always starts to add arcs from the initial marking of the reachability graph. At every iteration of the main loop (\ref{line:beginWhile}-\ref{line:endWhile}) it either adds arcs with $\match$ operations in line \ref{line:addMatch} or with $\rhide$ operations in line \ref{line:addRhide} from the set of outgoing arcs of the current marking in the reachability graph. The algorithm then adds a new node to the queue that contains the target of the added arc. By lines 18 and 20, subsequent arcs are only added if they are outgoing arcs of the node $m$ reached in $\reachGraph$, and thus will always form a path in $\reachGraph$. This path will always start from the initial marking and end in a final marking as per the condition in line \ref{line:condOptimal} and thus it is a path through the reachability graph.
\end{proof}

\begin{lemma}\label{lem:global_alignment_is_optimal}
Let $\logL$, $\dafsa$ and $\reachGraph$ be an Event log, a DAFSA and a Reachability Graph, respectively. Then it holds for each trace $c \in \logL$ and $\psp = Alg4(\logL,\dafsa,\reachGraph)$ that the alignment $\epsilon_c = \alignFunc(c,\psp)$ is minimal w.r.t the cost function $\curCost(\epsilon_c)$, i.e. $\nexists\epsilon' \in \configs(c,\reachGraph) : \curCost(\epsilon') < \curCost(\epsilon_c)$.
\end{lemma}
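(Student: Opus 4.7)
The plan is to prove this as a standard A*-optimality argument, leveraging the admissibility of the heuristic $\futCost$ that is hinted at informally just above Def.~\ref{def:cost}. Concretely, I would (i) establish admissibility of $\futCost$ as a separate claim, (ii) invoke the classical result that best-first search with an admissible heuristic returns an optimal goal node, and (iii) check that the revised Alg.~\ref{alg:RevConstPSP} faithfully implements that search on the PSP's state space as defined in Def.~\ref{def:psp}, so that ``goal'' coincides with the predicate tested in line~\ref{line:condOptimal}.

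First, I would formalise admissibility. For any node $x=(n_\dafsa,m,\epsilon)$ reachable in the PSP and any proper completion $\epsilon_c=\epsilon\tieconcat\epsilon_{\text{rest}}\in\configs(c,\reachGraph)$, I would show $\futCost(x,c)\le \cost(\epsilon_{\text{rest}})$. The argument is that every label $\ell\in\futL(x,c)$ that is not also produced by the remainder of some model path in $\futM(x)$ must be consumed by an $\lhide$ in $\epsilon_{\text{rest}}$, and symmetrically every label in the chosen $f_{\mathit{Model}}\in\futM(x)$ not matched in $\futL(x,c)$ must be consumed by an $\rhide$; taking the minimum over $\futM(x)$ and treating $\omega$-labels as 0/$\infty$ exactly as described after Def.~\ref{def:cost} yields a quantity that is $\le$ the number of non-$\tau$ $\lhide/\rhide$ synchronisations in $\epsilon_{\text{rest}}$, which is $\cost(\epsilon_{\text{rest}})$ by Def.~\ref{definition:cost} (recall $\reachGraph$ is $\tau$-less after Alg.~\ref{alg:remTau}, so no $\tau$-labels are hidden ``for free''). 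Together with $\curCost(x,c)=\cost(\epsilon)$ along any prefix that matches $c$, this gives the monotone lower bound $\costF(x,c)\le \cost(\epsilon_c)$ for every completion.

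Next, I would argue optimality by contradiction. Suppose $\alignFunc(c,\psp)=\epsilon_c$ is returned at line~13 but there exists $\epsilon'\in\configs(c,\reachGraph)$ with $\cost(\epsilon')<\cost(\epsilon_c)$. Because Alg.~\ref{alg:RevConstPSP} only terminates when the selected node satisfies the goal predicate and has minimal $\rho$ among $\open$, and because the expansion rules at lines \ref{line:addLhide}--\ref{line:addRhide} enumerate all three synchronisation kinds, there is a path in the PSP realising $\epsilon'$; let $x'$ be its last node in $\open$ at the iteration of termination (such $x'$ exists: either the goal of $\epsilon'$ itself, or some ancestor that has not yet been expanded). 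By admissibility, $\costF(x',c)\le \cost(\epsilon')<\cost(\epsilon_c)=\costF(\text{returned node},c)$, contradicting the selection rule ``$\nexists(n',\rho')\in\open:\rho>\rho'$'' at the termination step. Hence no strictly cheaper proper alignment exists, proving minimality.

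The main obstacle I anticipate is the admissibility step, and in particular making it rigorous in the presence of the $\omega$-symbols coming from cycles in the reachability graph and the $\min$ over $\futM(x)$. I would handle $\omega$ by the convention already fixed in the text (contributing $0$ to $|f_{\mathit{Model}}\setminus\futL|$ and $\infty$ to $|\futL\setminus f_{\mathit{Model}}|$), which precisely models the assumption that a cyclic activity can be matched arbitrarily often; this keeps the heuristic a lower bound on required hides regardless of how often the cycle is actually taken in $\epsilon'$. For $\min$, I would pick the specific $f_{\mathit{Model}}$ induced by the model path appearing in $\epsilon'\trPr{\reachGraph}$ after the prefix corresponding to $x$, so that the bound used in the contradiction is attained by a concrete witness. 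Secondary bookkeeping items (that $\maxcost$ pruning never discards an optimal completion, and that the deterministic tie-breaking of line~\ref{line:ChooseNode} only selects among nodes with identical minimal $\rho$ and therefore does not harm optimality) follow directly once admissibility is in place.
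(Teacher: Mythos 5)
Your proposal follows essentially the same route as the paper's own (sketched) argument: both rest on the standard $A^*$ optimality guarantee — nodes are expanded in increasing order of $\costF = \curCost + \futCost$, so the first final node returned is optimal provided $\futCost$ is admissible — and both justify admissibility by observing that the multiset comparison in $\futCost$ only discards information (ordering/context, repetitions via $\omega$) and hence lower-bounds the number of unavoidable $\lhide$/$\rhide$ synchronizations on the $\tau$-less reachability graph. Your version is in fact more complete than the paper's proof sketch, since you additionally supply the explicit contradiction with an open-list ancestor, a concrete witness $f_{\mathit{Model}}$ for the minimum, and checks that the $\maxcost$ pruning and tie-breaking do not discard optimal completions.
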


\begin{proof}[Proof (Sketch)]
Algorithm \ref{alg:RevConstPSP} finds alignment $\epsilon_c$ inside the while loop in function align (\ref{line:begin align}-\ref{line:end align}). Potential alignments are inserted into a queue in lines \ref{line:addMatch}, \ref{line:addLhide} and \ref{line:addRhide}. In line \ref{line:ChooseNode}, a candidate alignment is chosen from the queue with a minimal cost function value with respect to $\rho$. In each iteration of the while loop, the active candidate alignment is checked for being final in line \ref{line:condOptimal}. Once a candidate alignment $\epsilon_c$ is found final, it is returned by the function. Since all candidate alignments $\epsilon$ in the queue are selected and then removed according to their cost function value $\rho(\epsilon)$ in increasing order, the first alignment that is a proper alignment for trace $c$ will have a minimal value for $\rho(\epsilon_c)$. If $\futCost(x)=0$ would hold, then the candidate alignment would always be picked according to the cost function $\curCost$ and trivially the first final alignment would also be optimal, since all alignments with smaller costs had been investigated.
\end{proof}

Observe that for all final states $f \in R_{\psp}$, $\futCost(f)=0$, since every final state in the PSP represents a proper alignment and a proper alignment fully represents the trace, i.e. $\futL = \emptyset$, and its projection on the reachability graph represents a path, i.e. $\futM = \emptyset$. It follows that $\epsilon_c$ is optimal w.r.t. $\rho$, when function $\futCost$ underestimates the cost to the optimal cost for any investigated node, which is in line with the optimality criterion of the $A^*$-search algorithm \cite{hart68}.

We show that our definition of function $\futCost$ fulfils this criterion by analyzing how it estimates future hides for any given node. Let node $x$ be a candidate node, function $\futCost$ compares the multiset of future log labels, determined by trace $c$ set minus the already aligned trace labels $\epsilon(x)\trPr{\dafsa}$, with every possible multiset of future model labels to all possible final markings.
The multisets of future task labels represent possible paths in the reachability graph to a final marking and a path to a final node in the DAFSA representing the suffix of trace $c$. By comparing multisets to find deviations, the context of task labels is dropped and $\futCost$ allows for a lower cost than $\curCost$. Repeated task labels are also assumed to be matched in these multisets and thus are not taken into account in the comparison. Finally, function $\futCost$ minimizes the difference of all multiset comparisons such that it always finds the closest final marking in terms of distance. Givent that the multisets represent possible paths, the value of $\futCost$ can only be as high as the true cost of a path and will underestimate the cost in case the abstractions obscure differences due to context or cyclic structures.
Thus, $\futCost$  underestimates the true cost to the closest final marking and thus the alignment $\epsilon_c$ is minimal with respect to $\rho$.

\vspace{-\baselineskip}
%
%
\section{Taming concurrency with S-Components}\label{sec:s-components}
%
%
Process models with significant amounts of parallelism can lead to exponentially large reachability graphs,  given that they need to represent all different interleavings. Large reachability graphs can negatively affect the performance of the automata-based technique presented in the previous section. More generally, the combinatorial state explosion inherent to models with paralellism has a direct impact on the size of the space of possible trace alignments.
To prevent this combinatorial explosion, this section presents a novel (approximate) divide-and-conquer approach based on the decomposition of the model into concurrency-free sub-models, known as S-Components. This approach improves the execution time for models with concurrency, at the expense of allowing for over-approximations compared to an optimal alignment, which as we will show later, are infrequent and minimal in practical scenarios.
The divide-and-conquer approach of this section is an alternative to the exact PSP-based approach presented in Sect.~\ref{sec:approach}.
Figure~\ref{fig:SCompApproach} outlines the proposed divide-and-conquer approach consisting of the following steps:
\begin{inparaenum}[(1)]
    \item divide the process model into S-Components,
    \item\label{step-2} expand each S-Component to its reachability graph,
    \item project the alphabet of each S-Component on the event log to derive sub logs,
    \item compress each sub-log into a DAFSA,
    \item compare the reachability graphs (see Step \ref{step-2}) and the corresponding DAFSAs to derive sub-PSPs, and
    \item recompose the related results into alignments that are
    \item proper and
    \item approximate, but empirically in most cases optimal.
 \end{inparaenum}

\begin{figure}[hbtp]
\centering
\includegraphics[width=\textwidth]{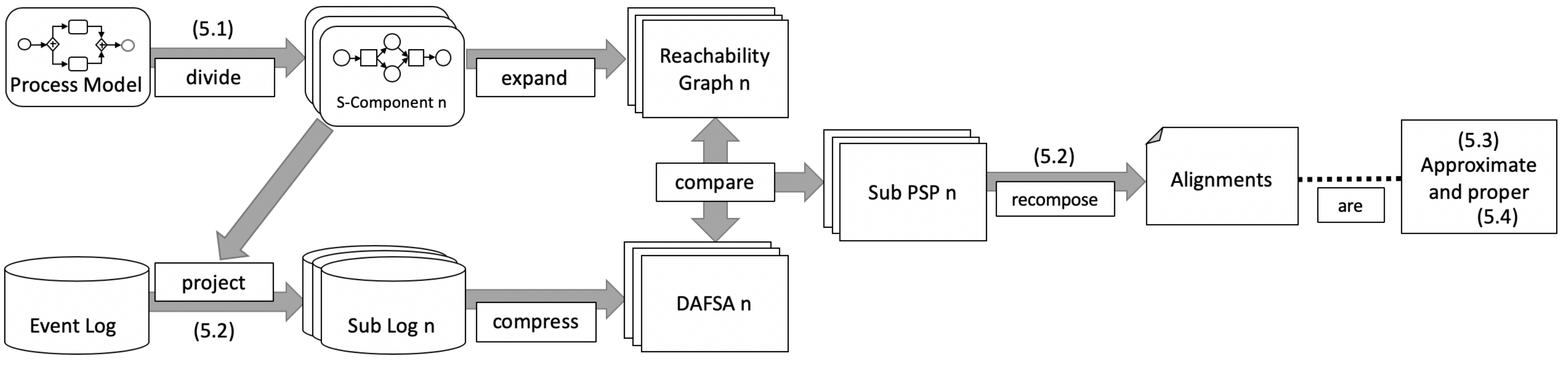}
\vspace{-1.5\baselineskip}
 \caption{\label{fig:SCompApproach}Overview of the S-Component approach.}
\end{figure}

Several steps of this approach have been already introduced in the previous section, such as the expansion of a process model to its reachability graph, the compression of a log into its DAFSA and the comparison of a reachability graph to a DAFSA to derive a PSP. In the following, we will introduce the remaining steps: namely, decomposing a process model into S-Components (Sect.~\ref{sub:scomps}), projection of logs onto S-Components and recomposing the results of conformance checking for each S-component into an alignment (Sect.~\ref{sec:s-component:recompose}). In Sect.~\ref{sec:s-components:properties}, we discuss when the alignments are approximate, and prove in Sect.~\ref{sec:s-components_invisible_label_conflicts} that they are proper.

%
%
\subsection{Dividing a process model into S-Components}\label{sub:scomps}
%
%

The decomposition approach considers uniquely-labelled sound free-choice workflow nets, a subclass of workflow nets \cite{WFNets,FreeChoice}. A workflow net  is uniquely labelled if every non-silent label is assigned to at most one transition. Soundness was defined in Sect.~\ref{sub:reachGraph}. A net is \emph{free-choice} iff whenever two transitions $t_1$ and $t_2$ share a common pre-place $s$, then $s$ is their only pre-place; in a free-choice net concurrency and choices are clearly separated. The formal definitions are given below.

\begin{definition}[Uniquely-labelled sound free-choice workflow net]
A labelled workflow net $\wNet=((P,T,F,\lambda),i,o)$ is \emph{free-choice} iff for any two transitions $t_1,t_2 \in T$: $s \in \inTr{t_1} \cap \inTr{t_2}$ implies $\inTr{t_1} = \inTr{t_2} = \{s\}$.  A workflow net is \emph{uniquely-labelled}, iff for any $t_1,t_2\in T_{\lnet},\lambda(t_1)=\lambda(t_2)\neq\tau\implies t_1=t_2$. A system net is uniquely-labelled, sound, and free-choice if the underlying workflow net is.
\end{definition}

An S-Component \cite{WFNets,FreeChoice} of a workflow net is a substructure, where every transition has one incoming and one outgoing arc (it does not contain parallelism). A sound free-choice workflow net is covered by S-Components and every place, arc and transition of the workflow net is contained in at least one S-Component, which is also a workflow net. Figure~\ref{fig:SCompExample} shows 4 different S-components of the running example of Fig.~\ref{fig:runningExamplePetriNet}. Each S-Component contains one of the four tasks $A$, $B$, $C$ or $D$ that can be executed in parallel. Note that S-components can overlap on non-concurrent parts of the workflow net as indicated by nodes with solid borders.

\begin{figure}[hbtp]
\centering
\includegraphics[width=\textwidth]{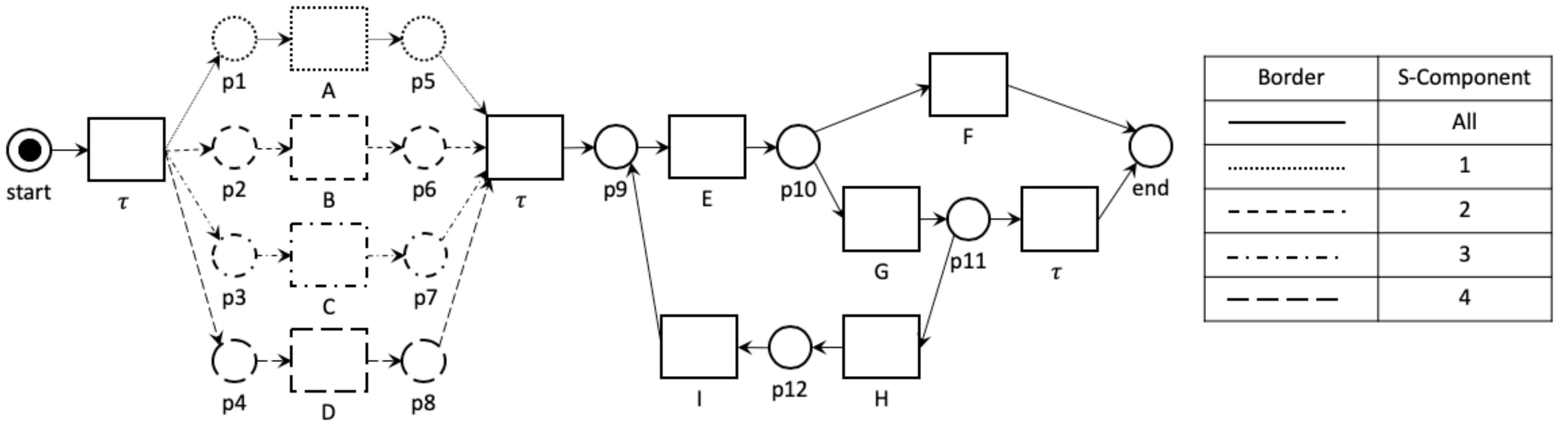}
 \caption{\label{fig:SCompExample}S-Component decomposition of the example loan application process model.}
\end{figure}
\newpage
Before we explain the decomposition of a workflow net into S-Components, we need to introduce the concept of the incidence matrix of a Petri-net. 
Recall from Sect.~\ref{sub:reachGraph} that a marking $m = \langle m(p_1),\ldots,m(p_k)\rangle^\intercal$ is a column vector over the places $P = \{p_1,\ldots,p_k\}$; and vectors $N^-(t)$ and $N^+(t)$ describe the tokens consumed and produced by $t$ on each $p \in P$. The resulting \emph{effect} of $t$ on $P$ is $N(t) = N^-(t) + N^+(t)$.
The \emph{incidence matrix} of a Petri net $N$ is the matrix $N = \langle N(t_1) \ldots N(t_r) \rangle $ of the effects of all transitions $T = \{t_1,\ldots,t_r\}$. Given a firing sequence $\sigma$ in $\wNet$ starting in $m_0$, let the row vector $y = \langle y_1,\ldots,y_r \rangle$ specify how often each $t_i, i=1,\ldots,r$ occurred in $\sigma$. For any such row vector, the \emph{marking equation} $m =  m_0 + N \cdot y$ yields the marking reached by firing $\sigma$. Figure \ref{fig:MEquation} shows how the marking equation of the Petri net of our sample loan application process in Fig.~\ref{fig:runningExamplePetriNet} gives a new marking from the initial marking.

\begin{figure}[hbtp]
\centering
\includegraphics[width=\textwidth]{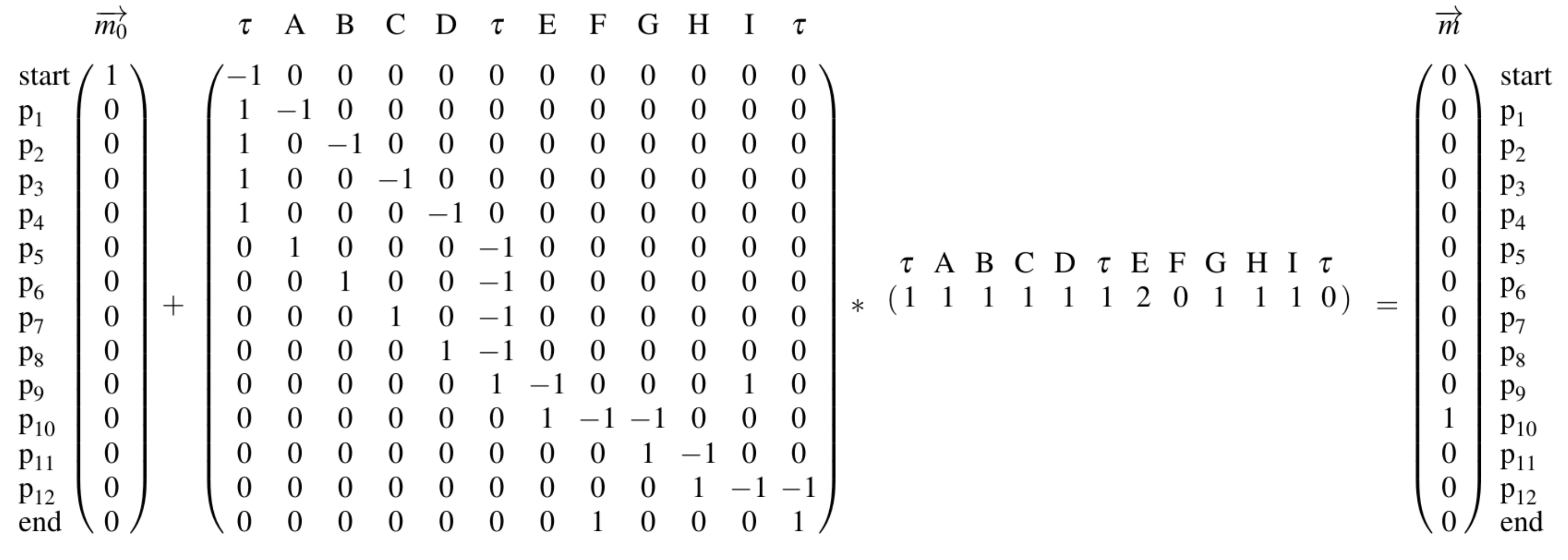}
 \vspace{-1.5\baselineskip}
 \caption{\label{fig:MEquation}Marking equation to reach marking $(p10)$ for our loan application example.}
\end{figure}

The decomposition of a sound free-choice Petri net into S-Components is based on its place invariants. A \emph{place invariant} is an integer vector indicating the number of tokens that are constant over all reachable markings. It can be determined as an integer solution $J$ to the marking equation $J \cdot N = 0$
, i.e., $J \cdot m_0 = J \cdot m$ for all reachable markings $m$ of $N$, because $J \cdot N \cdot y = 0$~\cite{FreeChoice}. The equation $J \cdot N = 0$ has an infinite number of solutions.
Place invariants can be \emph{non-trivial}, in the following denoted as $\minPI$, they are different from 0 and are \emph{minimal} (not linear combinations of other place invariants of $N$).
We are only interested in the unique set of \emph{non-trivial} place invariants $\minPI$, which can be obtained through standard linear-algebra techniques. Each minimal place invariant $J$ possibly defines an S-Component as a subnet of the workflow net consisting of the \emph{support} $\langle J \rangle$ of $J$~\cite{FreeChoice}. The workflow net can be decomposed into $n$ S-Component subnets, where $n$ is the number of minimal place invariants of the workflow net, i.e. $\left|\minPI\right|$. We next define a S-Component net and the decomposition of a workflow net.

\begin{definition}[S-Component, S-Component decomposition]
Let $\wNet=((\places_N, \netTransitions_N, \netArcs_N, \netLabel_N),i,o)$ be a sound, free-choice workflow net. Let $J$ be a minimal place invariant of $\wNet$. An \emph{S-Component} $\wNet_J$ is a non-empty, strongly connected labelled workflow net $\wNet_J=((\places_J, \netTransitions_J, \netArcs_J, \netLabel_J),i,o)$ with the following properties:
    \begin{itemize}
    \item $\places_J = \{p \in \places_N \mid p \in \left<J\right> \land \inTr{p} \subseteq \netTransitions_J \land \outTr{p} \subseteq \netTransitions_J \}$
    \item $\netTransitions_J = \{t \in \netTransitions_N \mid \left|\inTr{t} \cup \places_J\right| = 1 = \left|\outTr{t} \cup \ \places_J\right|\}$
    \item $\netArcs_J = \{(p,t) \in \netArcs_N \mid p \in \places_J \land t \in \netTransitions_J\} \cup \{(t,p) \in \netArcs_N \mid t \in \netTransitions_J \land p \in \places_J\}$
    \item $\netLabel_J = \{(t,l) \in \netLabel_N \mid t \in \netTransitions_J\}$
    \end{itemize}
For the set of all minimal place invariants $\minPI$ of $\wNet$, the \emph{S-Component decomposition} $\scomp$ is a non-empty set of S-Component workflow nets that cover $\wNet$, i.e. $\scomp=\{\wNet_J \mid J \in \minPI\}$.
\end{definition}

S-Components are concurrency-free, as the requirement $\left|\inTr{t} \cup \places_J\right| = 1 = \left|\outTr{t} \cup \ \places_J\right|$ allows only one input / output place per transition.
Applying the decomposition to our running example, four minimal place invariants are computed: (1 1 0 0 0 1 0 0 0 1 1 1 1 1), (1 0 1 0 0 0 1 0 0 1 1 1 1 1), (1 0 0 1 0 0 0 1 0 1 1 1 1 1) and (1 0 0 0 1 0 0 0 1 1 1 1 1 1). Figure \ref{fig:SCompExample} shows the derived four S-Component workflow nets; each S-Component contains one of the four tasks $A$, $B$, $C$ or $D$ that can be executed in parallel.

%
%
\subsection{Conformance checking with S-Component decomposition}\label{sec:s-component:recompose}

This section introduces a novel divide-and-conquer approach to speed up the conformance checking between a system net and an event log. The division of the problem relies on the decomposition of the workflow net into S-Component workflow nets as introduced in Section~\ref{sub:scomps}.

The following definition introduces \emph{trace projection}, an operation that filters out the events with labels not contained in the alphabet of a particular S-Component.

\begin{definition}[Trace Projection]
A trace projection, denoted as $\trPr{\netLabel}$, is an operation over a trace $c = \langle l_1, l_2, ..., l_n \rangle$ that filters out all the labels not contained in $\netLabel$, i.e. $c \trPr{\{l_{i}, l_{j}, \dots, l_{k}\}} = \langle l_{i}, l_{j}, \dots, l_{k} \rangle$ such that $0 \leq i \leq j \leq, \dots, \leq k \leq n$.
\end{definition}

The novel divide-and-conquer approach decomposes the workflow nets into concurrency-free sub-workflow nets -- S-Components --, computing partial alignments between projected traces and S-components, and recomposing the partial alignments to create alignments for each trace in the log. Note that the alignments are partial because the projected traces are only parts of a complete trace. In the following, we explain the full procedure, illustrated in Fig.~\ref{fig:ExampleRecomposition} and defined in Alg.~\ref{alg:constPSPwithSComponents}, as we obtain and re-compose partial alignments for the trace $\langle B,D,A,E,F,G\rangle$ in our running example and the S-Component workflow nets (Fig.~\ref{fig:SCompExample}). Observe that in our running example there are four S-Component workflow nets, each representing the execution of one of the parallel activities $A, B, C$ and $D$.

\begin{algorithm}[!h]
{
    \SetKwInOut{Input}{input}
    \Input{Event Log $\logL$, DAFSA $\dafsa$, System net $\sysNet=(\wNet,m_0,M_f)$ with its S-Component decomposition $\scomp=\{\wNet_1,\ldots,\wNet_k\}$; }

    \tcp{compute decomposed reachability graph and DAFSA for each S-component}
    $\Sigma_L \leftarrow \textit{labels occuring in}\ \logL$\;
    $\Sigma_i \leftarrow \textit{labels occurring in } \wNet_i \textit{ for each } i=1,\ldots,k$\;
    $\reachGraph_i \leftarrow \textit{reachability graph of } \sysNet_i=(\wNet_i,m_0,M_f) \textit{ for each } i=1,\ldots,k$\;
    $\dafsa_i \leftarrow \textit{construct DAFSA}(\{ c\trPr{\Sigma_i} \mid c \in \logL\}) \textit{ for each } i=1,\ldots,k$\;

    $\underline{\Sigma} \leftarrow (\Sigma_1,\ldots,\Sigma_k)$\;

    \For{Trace $c \in \logL$}
    {
        \tcp{compute alignments locally for each S-component}
        $\underline{\epsilon} \leftarrow (\epsilon_1,\ldots,\epsilon_k)$ with $\epsilon_i \leftarrow align( c\trPr{\Sigma_i}, \dafsa_i, \reachGraph_i)$ with Alg.~4 for each $i=1,\ldots,k$\;
        
        \tcp{recompose local alignments $\underline{\epsilon}$ into global alignment}
        $\epsilon_c \leftarrow \langle\rangle$\;\label{line:beginRecompose}
        $\underline{n} \leftarrow (\nodeLog(\epsilon_1(0)),\ldots,\nodeLog(\epsilon_k(0)))$\;\label{line:initialDAFSA}
        $\underline{m} \leftarrow (\nodeMod(\epsilon_1(0)),\ldots,\nodeMod(\epsilon_k(0)))$\;\label{line:initialMarking}
        $\underline{pos} \leftarrow (pos_1,\ldots,pos_k)$ with $pos_i \leftarrow 0$ for each $i=1,\ldots,k$\;
        \For{$pos_c \leftarrow 1 : |c| + 1$}
        {
            \tcp{next log event, try to recompose one synchronization with $\ell$}
            $\ell \leftarrow c[pos_c]$\;\label{line:traceLabel}

            \tcp{but first local components may need to do model steps to reach $\ell$, try to recompose model steps}
            $(\mathit{couldRecompose}, \underline{pos}, \underline{m}, \epsilon_c) \leftarrow \mathit{recomposeModelStepsUntil}(\ell, \epsilon_c, \underline{\epsilon}, \underline{pos}, \underline{m}, \underline{\Sigma}, \Sigma)$ with Alg.~\ref{alg:constPSPwithSComponents:recompose_modelMoves}\;\label{line:recomposeModelMoves}

            \If{$\mathit{couldRecompose} = \mathit{false}$}
            {
                \tcp{Recomposition conflict: S-components locally aligned models steps in a different order, cannot recompose}
                jump to line \ref{line:conflict}\;\label{line:endSyncRhide}
            }

            \If{$(\forall i=1\ldots k \mid \ell \in \Sigma_i \implies \op(\epsilon_i(pos_i)) = \lhide) \land \ell \neq \perp$\label{line:beginTraceRelatedMove}\label{line:recLhide}}
            {
                \tcp{S-components having $\ell$ agree on log step, recompose their synchronizations}
                $(\epsilon_c, \underline{pos}, \underline{n}) \leftarrow \mathit{recomposeLogStep}(\ell, \epsilon_c, \underline{\epsilon}, \underline{pos}, \underline{n}, \underline{\Sigma})$ with Alg.~\ref{alg:constPSPwithSComponents:recompose_logMoves}\;\label{line:recomposeLogMoves}
            }
            \ElseIf{$(\forall i=1\ldots k \mid \ell \in \Sigma_i \implies \op(\epsilon_i(pos_i)) = \match) \land \ell \neq \perp$\label{line:recMatch}}
            {
                \tcp{S-components having $\ell$ agree on log and model step, recompose their synchronizations}
                $(\epsilon_c, \underline{pos}, \underline{n}, \underline{m}) \leftarrow \mathit{recomposeMatchStep}(\ell, \epsilon_c, \underline{\epsilon}, \underline{pos}, \underline{n}, \underline{m}, \underline{\Sigma})$ with Alg.~\ref{alg:constPSPwithSComponents:recompose_syncMoves}\;\label{line:recomposeSyncMoves}
            }
            \ElseIf{$\ell \neq \perp$\label{line:condLhideConflict}}
            {
                \tcp{Operation conflict: S-components locally disagree on operation, cannot recompose}
                jump to line \ref{line:conflict}\;
            }\label{line:endTraceRelatedMove}
        }
        \lIf{Any conflict occurred\label{line:beginFallback}}{$\epsilon_c \leftarrow align(c, \dafsa, \reachGraph)$ with Algorithm \ref{alg:RevConstPSP}}\label{line:conflict}
        \textit{InsertIntoPSP}$(\epsilon_c,c,\psp)$\;\label{line:endRecompose}\label{line:endFallback}
    }
    \Return{$\psp$}\;
     \caption{Construct PSP by Recomposing S-Component Alignments}\label{alg:constPSPwithSComponents}
}
\end{algorithm}
\vspace{-1.5\baselineskip}

\subsubsection*{Algorithmic idea.} Algorithm~\ref{alg:constPSPwithSComponents} is given the S-Components $\sysNet_1,\ldots,\sysNet_k$ of $\sysNet$ and the DAFSA $\dafsa$ of the complete log $\logL$ which has alphabet $\Sigma_L$. Each S-component $\sysNet_i$ has alphabet $\Sigma_i$. We first compute for each S-component $\sysNet_i$ its reachability graph and the DAFSAs $\dafsa_i$ of the projected log with alphabet $\Sigma_i$ (see Lines 2-5). From that point on, Alg.~\ref{alg:constPSPwithSComponents} computes and recomposes along the $k$ S-components and stores all information in \emph{vectors} of length $k$, i.e.,  $\underline{\Sigma} = (\Sigma_1,\ldots,\Sigma_k)$ (see Line 6).

We continue by taking each trace $c$ in the log projecting it onto the alphabet of each S-component $c \trPr{\Sigma_i}$. For our example trace $\langle B,D,A,E,F,G\rangle$ four partial traces are created: $\langle B,E,F,G\rangle$, $\langle D,E,F,G\rangle$, $\langle E,F,G\rangle$ and $\langle A,E,F,G\rangle$. The traces share the subsequence $\langle E,F,G \rangle$ as the corresponding transitions are in the sequential part of the workflow net and hence in all S-components in Fig.~\ref{fig:SCompExample}.

Then, we compute the deterministic optimal alignment $\epsilon_i$ of each projected trace $c \trPr{\Sigma_i}$ to its S-component $\sysNet_i$ (by calling Alg.~\ref{alg:RevConstPSP} in line 7 of Alg.~\ref{alg:constPSPwithSComponents}); we call each $\epsilon_i$ a \emph{projected} alignment. Figure~\ref{fig:ExampleRecomposition} shows the four optimal projected alignments $\epsilon_1$-$\epsilon_4$ retrieved by Alg.~\ref{alg:RevConstPSP} for our running example. Note that because each $\sysNet_i$ is sequential, the reachability graph of each $\sysNet_i$ has the same size as $\sysNet_i$ itself. Thus the $k$ projected alignment problems are exponentially smaller than the alignment problem on the reachability graph of the original $\sysNet$.

Once the projected alignments have been computed, we iterate over the original trace $c$ and \emph{compose} the projected alignments $\underline{\epsilon} = (\epsilon_1,\ldots,\epsilon_k)$ (between the $\dafsa_i$ and $\reachGraph_i$ of $\sysNet_i$) into a global alignment $\epsilon_c$ between $\dafsa$ and $\reachGraph$ of $\sysNet$. 
We explain the idea of this recomposition by recomposing the projected alignments $\epsilon_1$-$\epsilon_4$ of our example of Fig.~\ref{fig:ExampleRecomposition}.
The recomposition technically ``replays'' all alignments ($\epsilon_1,\epsilon_2,\epsilon_3,\epsilon_4)$ along trace $c = \langle B,D,A,E,F,G\rangle$ in parallel. For each next event $\ell$ of the log, Alg.~\ref{alg:constPSPwithSComponents} determines which alignments together can replay $\ell$ (and make a joint step in their DAFSAs and in their reachability graphs). Initially, the 4 S-components of Fig.~\ref{fig:SCompExample} are locally in their initial markings $\underline{m} = ([\mathit{start}],[\mathit{start}],[\mathit{start}],[\mathit{start}])$ (see Alg.~\ref{alg:constPSPwithSComponents}, line~\ref{line:initialMarking}). We iterate over trace $c = \langle B,D,A,E,F,G\rangle$ as shown in Fig.~\ref{fig:ExampleRecomposition}.
\begin{figure}[hbtp]
\centering
\vspace{-.5\baselineskip}
\includegraphics[width=0.5\textwidth]{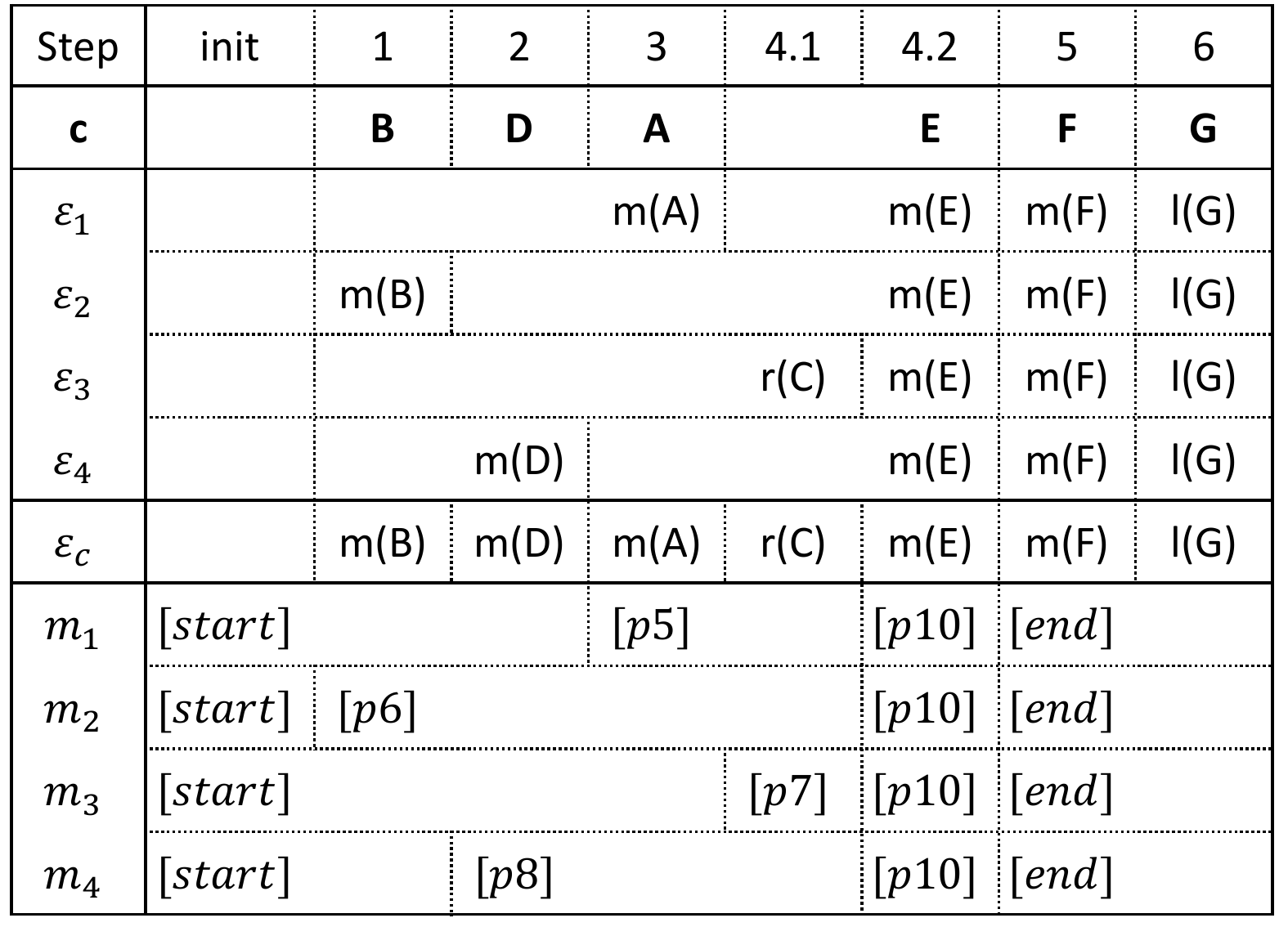}
 \caption{\label{fig:ExampleRecomposition}Example for applying Alg.~\ref{alg:constPSPwithSComponents} to trace $\langle B,D,A,E,F,G\rangle$ to our running example.}
 \vspace{-.5\baselineskip}
\end{figure}
\begin{itemize}
\item[(1)] For the first event $B$, all S-components involving $B$ (which is the single S-component 2 in this case) have as their next synchronization $\epsilon_i(pos_i)$ a match synchronization $m(B)$ for $B$. We add a match step $m(B)$ for the composed alignment $\epsilon_c$ and S-component 2 reaches $m_2 = [p6]$, while S-components 1,3, and 4 remain in $[\mathit{start}]$, respectively. Recall that $\tau$-transitions were removed from the reachability graph, allowing this single step. In Alg.~\ref{alg:constPSPwithSComponents}, the condition in line~\ref{line:recMatch} and $\mathit{recomposeMatchStep}$ line~\ref{line:recomposeSyncMoves} define this composition step explained below.
\item[(2-3)] Similarly, for the next events $D$ and $A$ of $c$, $m(D)$ and $m(A)$ synchronizations from $\epsilon_1$ and $\epsilon_2$ are added to $\epsilon_c$ by $\mathit{recomposeMatchStep}$, which corresponds to reaching $[p8]$ and $[p5]$ in S-components 1 and 4.
\item[(4.1)] The fourth event $E$ occurs in all S-components, but $\epsilon_3$ has at its current position $pos_3 = 0$ an $\rhide$ synchronization for $C$ which is only later followed by an $E$-synchronization. In Fig.~\ref{fig:SCompExample}, this corresponds to S-component 3 still being in marking $[start]$ and transition $E$ not being enabled yet. In order to reach $E$ and to ``catch up'' with all other S-components, S-component 3 can locally replay $\rhide$ synchronizations of any label that is not the current label $E$. In Alg.~\ref{alg:constPSPwithSComponents}, $\mathit{recomposeModelStepsUntil}$ line~\ref{line:recomposeModelMoves} define this composition step explained below. In our example, $r(C)$ is added to $\epsilon_c$ and S-component 3 reaches marking $[p7]$.
\item[(4.2)] Now all S-components agree on $m(E)$ synchronizations, thus $m(E)$ is added to $\epsilon_c$ and all S-components reach marking $[p10]$. Note that \emph{all} S-components having the $m(E)$ make a joint step, expressed in lines~\ref{line:recMatch}-\ref{line:recomposeSyncMoves} of Alg.~\ref{alg:constPSPwithSComponents} by an update on several components of the vector $\underline{m}$.
\item[(5)] The algorithm proceeds similarly in step 5 by adding a $\match$ synchronization for label $F$. 
\item[(6)] For label $G$, all S-components agree on an $\lhide$ synchronization which is added to $\epsilon_c$ as $l(G)$. This is the third case in the loop of Alg.~\ref{alg:constPSPwithSComponents} by the condition in line~\label{line:recLhide} and the composition due to $\mathit{recomposeLogSteps}$ in line~\ref{line:recomposeLogMoves} explained below.
\end{itemize}
The resulting sequence $\epsilon_c$ of synchronizations is a proper alignment according to Def.~\ref{definition:alignment} and is added to the PSP (line~\ref{line:endRecompose}). Alg.~\ref{alg:constPSPwithSComponents} may fail recomposing the S-components in case the projected alignments locally disagree on the next synchronization to compose, i.e., lines ~\ref{line:endSyncRhide} and \ref{line:endTraceRelatedMove}. In this case, we revert back to computing a global alignment without decomposition (line \ref{line:conflict}). Next explain the technical details of this recomposition and formalize the three recomposition cases $\mathit{recomposeModelStepsUntil}$, $\mathit{recomposeLogStep}$, and $\mathit{recomposeMatchStep}$.


\subsubsection*{Composing partial alignments by synchronizing $k$ FSMs.}

Recall that an alignment is a sequence of synchronizations. Each synchronization $\beta_i = (\mathit{op}_i,b_i,a_i)$ of a projected alignment $\epsilon_i$ refers to an arc $b_i = (n_i,\ell,n_i')$ of DAFSA $\dafsa_i$ and/or an arc $a_i = (m_i,\ell,m_i')$ of the reachability graph $\reachGraph_i$ of $\sysNet_i$; $\dafsa_i$ and $\reachGraph_i$ are FSMs. Each arc has a source and target state of the projected DAFSA/reachability graph. The task of Alg.~\ref{alg:constPSPwithSComponents} is to compose from the nodes and arcs of the projected $\dafsa_i$ and $\reachGraph_i$ along trace $c$ a \emph{sequence $\epsilon_c$ of synchronizations} of nodes and arcs of the global DAFSA $\dafsa$ and $\reachGraph$. This sequence $\epsilon_c$ has to form a path through $\dafsa$ (i.e., describe the trace $c$) and a path through $\reachGraph$ (i.e., describe a run). Then $\epsilon_c$ is an alignment of $c$ to $\sysNet$ by Def.~\ref{definition:alignment}.

Technically, we construct the nodes and arcs of $\dafsa$ and $\reachGraph$ as \emph{vectors} of the nodes and arcs of the projected $\dafsa_i$ and $\reachGraph_i$. We represent the \emph{composed DAFSA node} in $\epsilon_c$ as a vector $\underline{n} = (n_1,\ldots,n_k),n_i \in N_{\dafsa_i}$ and the composed state of the reachability graph as a vector $\underline{m} = (m_1,\ldots,m_k),m_i \in N_{\reachGraph_i}$. Both vectors are initialized in lines \ref{line:initialDAFSA} and \ref{line:initialMarking} from the projected initial states, respectively. We represent the arcs of the composed DAFSA as a \emph{partial} vector $\underline{b} = (b_1,\ldots,b_k)$, where a component $b_i = \perp$ may be undefined; a partial vector $\underline{a} = (a_1,\ldots,a_k)$ denotes an arc of a the recomposed reachability graph. For instance, let $a_2 = (m_2,\ell_2,m_2'), a_4 = (m_4,\ell_4,m_4')$ be arcs of the reachability graph of S-components 2 and 4. The vector $\underline{b} = (\perp,b_2,\perp,b_4)$ describes that S-components 2 and 4 synchronize in $\underline{b}$ while S-components 1 and 3 do not participate. We may only synchronize arcs of different S-components if they agree on the label, e.g., if $\ell(b_2) = \ell(b_4)$. The synchronized arc $\underline{b} = (\perp,b_2,\perp,b_4)$ then has the label $\ell(\underline{b}) = \ell(b_2) = \ell(b_4)$ and takes S-component 2 from $m_2$ to $m_2'$ and S-component 4 from $m_4$ to $m_4'$.

As we iterate over the trace $c$ in Alg.~\ref{alg:constPSPwithSComponents}, our composition has to include in the partial vector $\underline{b}$ \emph{all} arcs that agree on the current label $\ell$. First, we give some technical notation for constructing the partial vectors from the available $\dafsa_i$ and $\reachGraph_i$ arcs, and then we explain the loop for the composition.

Suppose we are at the composed marking $\underline{m} = \langle m_1,\ldots,m_k \rangle$ of all the $\reachGraph_i$; the next arcs we can follow in the $\reachGraph_i$ are $a_1,\ldots,a_k, a_i = (m_i,\ell_i,m_i')$. We may follow only those arcs \emph{together} that share the same label. The \emph{partial composition} of these arcs for some label $\ell$ is the vector $\underline{a} = \langle \hat{a}_1,\ldots,\hat{a}_k\rangle$ with $\hat{a}_i = a_i$ if $\ell(a_i) = \ell$ and $\hat{a}_i = \perp$ otherwise. For any component $i$ where $\hat{a}_i \neq \perp$, the state changes from $m_i$ to $m_i'$ and all other components remain in their state. Technically, we write $m_i \blacktriangleright (m_i,\ell,m_i') = m_i'$ and $m_i \blacktriangleright \perp = m_i$ which we lift to $\underline{m} \blacktriangleright \underline{a} = (m_1 \blacktriangleright \hat{a}_1,\ldots,m_k \blacktriangleright \hat{a}_k)$. Thus, $\underline{a}$ traverses all those arcs with label $\ell$ and $\underline{m} \blacktriangleright \underline{a}$ is the composed successor marking reached by this partial synchronization. These definitions equally apply for composing arcs of the $\dafsa_i$.


We now can explain how we compose the projected alignments $\epsilon_i$ into $\epsilon_c$ by composing the arcs of the $\dafsa_i$ and $\reachGraph_i$ in the order in which they occur in the $\epsilon_i,i=1,\ldots,k$. We ``replay'' trace $c$ starting from an empty composed alignment, all projected alignments are at $\mathit{pos}_i = 0$, and at the initial composed nodes $\underline{n}$ and $\underline{m}$ for the $\dafsa_i$ and $\reachGraph_i$ (lines 8-11 in Alg.~\ref{alg:constPSPwithSComponents}).

The next event to replay is $\ell = c(\mathit{pos}_c)$ (line 13 of Alg.~\ref{alg:constPSPwithSComponents}). The next projected synchronizations are $\beta_i = \epsilon_i(\mathit{pos}_i),i=1,\ldots,k$ with $\beta_i = (\mathit{op}_i,a_i,b_i)$. Two cases may arise.
\newpage
Case 1: For all S-components $i$ that have $\ell \in \Sigma_i$ in their alphabet, their next synchronization $\beta_i$ involves arcs labeled with $\ell = \ell(\beta_i)$; lines (17-20 in Alg.~\ref{alg:constPSPwithSComponents}). In this case, all S-components ``agree'' and we can synchronize the $\dafsa_i$ arcs and the $\reachGraph_i$ arcs in the $\beta_i$ of those S-components into a synchronization for $\ell$ in $\epsilon_c$. Again, three cases may arise.
    \begin{enumerate}
    \item All synchronizations $\beta_i$ labeled with $\ell$ agree on the operation $\lhide$ (line 17 in Alg.~\ref{alg:constPSPwithSComponents}). We obtain the next synchronization for $\epsilon_c$ by composing the $\dafsa_i$ arcs with $\ell$ and updated the node $\underline{n}$ for the composed DAFSA as defined by $\mathit{recomposeLogStep}$ in Alg.~\ref{alg:constPSPwithSComponents:recompose_logMoves}.
        The partially composed arc $\underline{a}' = (\underline{n},\ell,\underline{n} \blacktriangleright \underline{a})$ of the $\dafsa_i$ in the new synchronization $(\lhide,\underline{a}',\perp)$ describes that all S-components make a $\lhide$ step together (i.e., no S-component fires a transition for event $\ell$). The new synchronization is appended to $\epsilon_c$ and we advance the position $\mathit{pos}_i$ for all S-components involved in this composition.
\vspace{-.5\baselineskip}        
        \begin{algorithm}[!h]
    \SetKwInOut{Input}{input}
    \Input{ Next log label $\ell$, current partial recomposed alignment $\epsilon_c$, local alignments $\underline{\epsilon}$, positions in local alignments $\underline{pos}$, local DAFSA states $\underline{n}$, labels in each component $\underline{\Sigma}$ }

    $b_i \leftarrow \left\{ \begin{array}{ll}
        \logPr{\epsilon_i(pos_i)} & \mathit{ if } \ell \in \Sigma_i\\
        \perp & \mathit{ otherwise} \end{array}\right.$, for each $i=1,\ldots,k$\;
    $\epsilon_c \leftarrow \epsilon_c \oplus (\lhide, (\underline{n},\ell,\underline{n} \blacktriangleright (b_1,\ldots,b_k)),\perp)$\;
    $\underline{n} \leftarrow \underline{n} \blacktriangleright (b_1,\ldots,b_k)$\;
    $pos_i \leftarrow pos_i+1$ for each $i=1,\ldots,k$ where $\ell \in \Sigma_i$\;

    \Return{$(\epsilon_c, \underline{pos}, \underline{n})$}\;
    \caption{$\mathit{recomposeLogStep}$, recompose log steps for current event $\ell$}\label{alg:constPSPwithSComponents:recompose_logMoves}
\end{algorithm}
\vspace{-1\baselineskip}
    \item  All synchronizations $\beta_i$ labeled with $\ell$ agree on the operation $\match$ (line 19 in Alg.~\ref{alg:constPSPwithSComponents}). We append to $\epsilon_c$ a new $\match$ synchronization with partially composed $\dafsa_i$ arcs and $\reachGraph_i$ arcs, describing that all involved S-components make a $\match$ step together, and update nodes $\underline{n}$ and $\underline{m}$ of the DAFSA and the reachability graph; see function $\mathit{recomposeMatchStep}$ in in Alg.~\ref{alg:constPSPwithSComponents:recompose_syncMoves}.
\vspace{-.5\baselineskip}    
    \begin{algorithm}[!h]
    \SetKwInOut{Input}{input}
    \Input{ Next log label $\ell$, current partial recomposed alignment $\epsilon_c$, local alignments $\underline{\epsilon}$, positions in local alignments $\underline{pos}$, local DAFSA states $\underline{n}$, local model states $\underline{m}$, labels in each component $\underline{\Sigma}$ }

    $b_i \leftarrow \left\{ \begin{array}{ll}
        \logPr{\epsilon_i(pos_i)} & \mathit{ if } \ell \in \Sigma_i\\
        \perp & \mathit{ otherwise} \end{array}\right.$, for each $i=1,\ldots,k$\;
    $a_i \leftarrow \left\{ \begin{array}{ll}
        \modPr{\epsilon_i(pos_i)} & \mathit{ if } \ell \in \Sigma_i\\
        \perp & \mathit{ otherwise} \end{array}\right.$, for each $i=1,\ldots,k$\;
    $\epsilon_c \leftarrow \epsilon_c \oplus (\match, (\underline{n},\ell,\underline{n} \blacktriangleright (b_1,\ldots,b_k)),(\underline{m},\ell,\underline{m} \blacktriangleright (a_1,\ldots,a_k)))$\;
    $\underline{n} \leftarrow \underline{n} \blacktriangleright (b_1,\ldots,b_k)$\;
    $\underline{m} \leftarrow \underline{m} \blacktriangleright (a_1,\ldots,a_k)$\;
    $pos_i \leftarrow pos_i+1$ for each $i=1,\ldots,k$ where $\ell \in \Sigma_i$\;

    \Return{$(\epsilon_c, \underline{pos}, \underline{n}, \underline{m})$}\;
    \caption{$\mathit{recomposeMatchStep}$, recompose matching steps for current event $\ell$}\label{alg:constPSPwithSComponents:recompose_syncMoves}
\end{algorithm}
\vspace{-1\baselineskip}
    \item The partial alignments of some S-components disagree on the operation, i.e., we have conflicting partial solutions (lines 21-22). In this case we fall back to computing a global alignment without decomposition (line 23).
    \end{enumerate}

Case 2: There are S-components $i$ that have $\ell \in \Sigma_i$ in their alphabet, but the next synchronization $\beta_i$ is not labeled with $\ell \neq \ell(\beta_i)$. The set $\configs_\ell$ defined in line 1 of function $\mathit{recomposeModelStepsUntil}$ in Alg.~\ref{alg:constPSPwithSComponents:recompose_modelMoves} contains all these S-components. These S-components have to ``catch up'' with $\rhide$ synchronizations to reach a state where they can participate in a $\lhide$ or $\match$ synchronization over $\ell$ (lines 2-13 of Alg.~\ref{alg:constPSPwithSComponents:recompose_modelMoves}). However, such S-components may only catch up together: Suppose there is an S-component $i$ having as next synchronization an $\rhide$ over $\ell(\beta_i) = x \neq \ell$, then \emph{all} S-components with $x$ in their alphabet (set $\mathit{lab}_x$ in line 4) must \emph{also have} an $\rhide$ synchronization on $x$ as their next synchronization (set $\mathit{sync}_x$ in line 3). If we find such a set $\mathit{sync}_x$ (line 5), then we can compose a $\rhide$ synchronization from the $\reachGraph_i$ arcs in $\mathit{sync}_x$ and append it to $\epsilon_c$ (lines 6-9). This step may have to be repeated if there is another S-component that still has to catch up. If the projected alignments disagree on the next $\rhide$, we have conflicting partial solutions and fall back to computing a global alignment without decomposition (lines 11-12). Note that $\mathit{recomposeModelStepsUntil}$ is called in Alg.~\ref{alg:constPSPwithSComponents} (line 14) for each new trace label $\ell$ and lets all S-components catch up before attempting to synchronize on $\ell$.

\begin{algorithm}
    \SetKwInOut{Input}{input}
    \Input{ Next log label $\ell$, current partial recomposed alignment $\epsilon_c$, local alignments $\underline{\epsilon}$, positions in local alignments $\underline{pos}$, local model states $\underline{m}$, labels in each component $\underline{\Sigma}$, labels in the full log $\Sigma_L$ }
    $\configs_{\ell} \leftarrow \{ (i,\epsilon_i) \mid i=1\ldots k \land pos_i \leq |\epsilon_i| \land ((\ell\in \Sigma_i \land \ell(\epsilon_i(pos_i)) \neq \ell) \lor (\ell = \perp))\}$\;\label{line:configsEll} 
    \While{$\configs_{\ell} \neq \varnothing$\label{line:beginSyncRhide}}
    {
        $\mathit{sync}_x \leftarrow \{ (i,\epsilon_i) \mid \ell(\epsilon_i(pos_i)) = x \land \op(\epsilon_i(pos_i)) = \rhide\}$ for each label $x \in \Sigma_L$\;
        $\mathit{lab}_x \leftarrow \{ (i,\epsilon_i) \mid x \in \Sigma_i \}$ for each label $x \in \Sigma_L$\;
        \If{$\exists x \in \Sigma: \mathit{sync}_x = \mathit{lab}_x$\label{line:condRhideSync}}
            {
                $a_i \leftarrow \left\{ \begin{array}{ll}
                    \modPr{\epsilon_i(pos_i)} & \mathit{ if } (i,\epsilon_i) \in \mathit{sync}_x\\
                    \perp & \mathit{ otherwise} \end{array}\right.$, for each $i=1,\ldots,k$\;
                $\epsilon_c \leftarrow \epsilon_c \oplus (\rhide, \perp, (\underline{m},x,\underline{m} \blacktriangleright (a_1,\dots,a_k)))$\;
                $\underline{m} \leftarrow \underline{m} \blacktriangleright (a_1,\dots,a_k)$\;
                $pos_i \leftarrow pos_i + 1$ for each $(i,\epsilon_i) \in \mathit{sync}_x$\;
                $\configs_{\ell} \leftarrow \{ (i,\epsilon_i) \mid i=1\ldots k \land pos_i \leq |\epsilon_i| \land ((\ell\in \Sigma_i \land \ell(\epsilon_i(pos_i)) \neq \ell) \lor (\ell = \perp))\}$\; 
            }

        \Else
        {
            \tcp{S-components locally aligned models steps in a different order, cannot recompose}
            \Return{$(\mathit{false}, \underline{pos}, \underline{m}, \epsilon_c)$}\label{line:condRhideConflict}
        }
    }
    \Return{$(\mathit{true}, \underline{pos}, \underline{m}, \epsilon_c)$}\;
    \caption{$\mathit{recomposeModelStepsUntil}$, recompose model steps until enabling next log event $\ell$}\label{alg:constPSPwithSComponents:recompose_modelMoves}
\end{algorithm}

%
In this way, we consecutively construct two paths: one through the composition of the $\dafsa_i$ (by the $\underline{a}' = (\underline{n},\ell,\underline{n} \blacktriangleright \underline{a})$ arcs) and one through the composition of the $\reachGraph_i$. In Sect.~\ref{sec:s-components_invisible_label_conflicts} we formally state that these paths correspond to paths through $\dafsa$ and $\reachGraph$ and thus $\epsilon_c$ is an alignment; the proof is given in~\ref{app:alg5_correct}.

\subsection{Optimality is not guaranteed under recomposition}\label{sec:s-components:properties}
%
%

The recomposition of partial alignments in Alg.~\ref{alg:constPSPwithSComponents} is not necessarily optimal. Figure \ref{fig:CounterExample} shows a pair of S-Components, each representing a parallel activity $A$ or $B$ followed by a merging activity $C$ and a trace $\langle C,A,B\rangle$, where the merging activity is miss-allocated before the parallel activities. The two optimal projected alignments according to the sorting from subsection \ref{sub:deterministicAlignments} then each include a $\rhide$ synchronization for the parallel activity, a $\match$ synchronization for the merging activity $C$ and a $\lhide$ synchronization for the parallel activity. Note that both projected alignments are optimal in cost. Once the projected alignments are recomposed, the cost of the recomposed alignment is 4: $\langle r(A),r(B),m(C),l(A),l(B)\rangle$. However, there exists another proper alignment with a lower cost of 2: $\langle l(C), m(A),m(B),r(C)\rangle$.
The reason why the recomposed alignment is not optimal, while the projected alignments are optimal, is that the projected alignments choose one optimal alignments out of multiple possible optimal alignments with the same cost without considering which choices would globally minimize the cost when recomposing the projected alignments. In this example, the projected alignments with another kind of sorting could also be $\langle l(C),m(A),r(C)\rangle$ and $\langle l(C),m(B),r(C)\rangle$, which would recompose to the optimal alignment. 

\begin{figure}[hbtp]
\centering
\includegraphics[width=\textwidth]{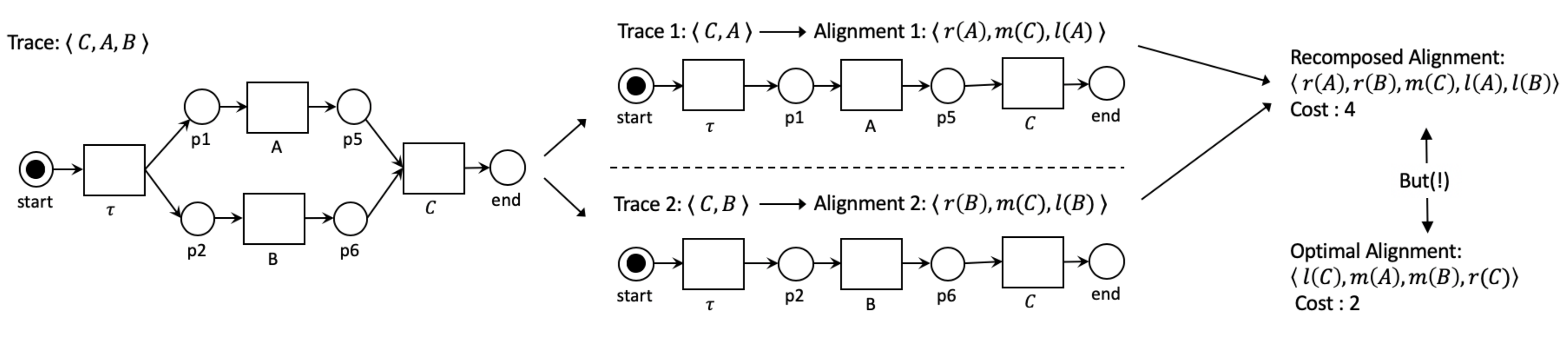}
\vspace{-2\baselineskip}
 \caption{\label{fig:CounterExample}Counter-example to optimality of a recomposed alignment.}
\end{figure}

With the current sorting introduced in subsection \ref{sub:deterministicAlignments}, we introduce an additional cost of one over the optimal cost per S-Component workflow net  for a task miss-allocation of a merging activity possibly multiple times, when the parallel block is enclosed in a cyclic structure.
Hence, the worst-case cost over-approximation of the proposed recomposition algorithm for a given trace $c$ is $k*\rep$, where $k$ is the size of the S-Component decomposition and $\rep$ is the number of maximal repetitions of a label in $c$ that is also contained in a parallel block in the process model.
Transforming the recomposition procedure into a minimization problem of selecting the best projected alignments for recomposition would however increase the calculation overhead exponentially since every trace can have exponentially many optimal alignments for each S-Component workflow net. Thus, selecting the best optimal projected alignments can be computationally more expensive than calculating only one-optimal alignments for the initial workflow net and event log. However, calculating the reachability graphs of workflow nets without parallel constructs is polynomial in size, speeding up the calculation of one optimal-projected alignments, and thus the proposed technique can provide significant speed-ups over the original technique on process models with parallelism.

Even though the presented approach computes non-optimal results, the evaluation shows that both the fraction of affected traces as well as the degree of over-approximation is rather low. The results obtained for the evaluation of this novel approach is oftentimes close to optimal.

%
%
\subsection{Proper alignments by addressing invisible label conflicts}\label{sec:s-components_invisible_label_conflicts}
%
%

The recomposition of synchronizations from the partial alignments of the S-components in Alg.~\ref{alg:constPSPwithSComponents} relies on the unique labeling. In this way, arcs in the reachability graphs of different S-components can safely be related to each other. However, if a uniquely labeled process model contains a $\tau$-labeled transition, Alg.~\ref{alg:remTau} reduces these $\tau$-labeled transitions by contraction with subsequent visible edges. This may lead to two arcs in the reachability graph carrying the same label $D$ but describing different effects, a hidden form of label duplication. Applying Alg.~\ref{alg:constPSPwithSComponents} on such a model may lead to two partial alignments where the composed synchronization agree on label $D$, but the underlying arcs in the reachability graphs disagree, leading to a ``hidden'' recomposition conflict not detected by Alg.~\ref{alg:constPSPwithSComponents}. The resulting $\epsilon_c$ would no longer form a path through the process model.

In the following, we illustrate the problem by an example and discuss a simple change to Alg.~\ref{alg:remTau} that ensures a unique labeling over all reachability graphs (global and projected). For such reachability graphs, Alg.~\ref{alg:constPSPwithSComponents} always returns an alignment, which we prove formally.
Figure \ref{fig:ConflictRecomposition} shows an example with trace $\langle A,B,D \rangle$ and a process model, where the parallel tasks $B$ and $C$ can be skipped. The process model is decomposed into two S-Component nets, one for each of the two parallel activities. When the trace is projected onto the S-Component with activity $C$, the obtained alignment matches both trace activities and skips activity $C$ with the $\tau$ transition. The sub-trace $\langle A, B, D \rangle$ can be fully matched on the other S-Component. The recomposed alignment is $\langle m(A),m(B),m(D) \rangle$.
However, $A,B,D$ is not a path through the reachability graph of this process model.

\begin{figure}[hbtp]
\centering
\vspace{-.5\baselineskip}
\includegraphics[width=\textwidth]{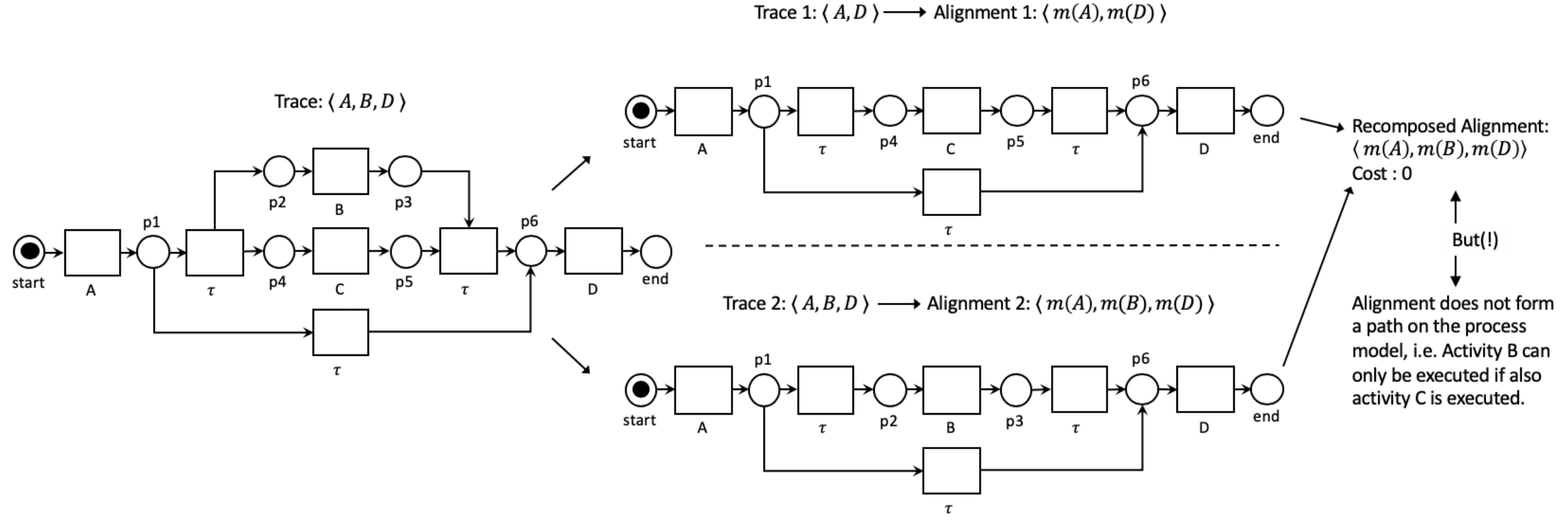}
\vspace{-1.5\baselineskip}
 \caption{\label{fig:ConflictRecomposition}Recomposed alignment that can not be replayed on the process model.}
\end{figure}

Note that reducing the reachability graph of the model in Fig.~\ref{fig:ConflictRecomposition} by Alg.~\ref{alg:remTau} leads to two $D$-labeled arcs: $([p1],D,[end])$ (by the skipping $\tau$-transition) and $([p3,p5],D,[end])$ (by the joining $\tau$-transition). The alignment for the first S-component uses the former whereas the alignment for the second S-component uses the latter, leading to the conflict described above.

\begin{figure}[hbtp]
\centering
\includegraphics[width=\textwidth]{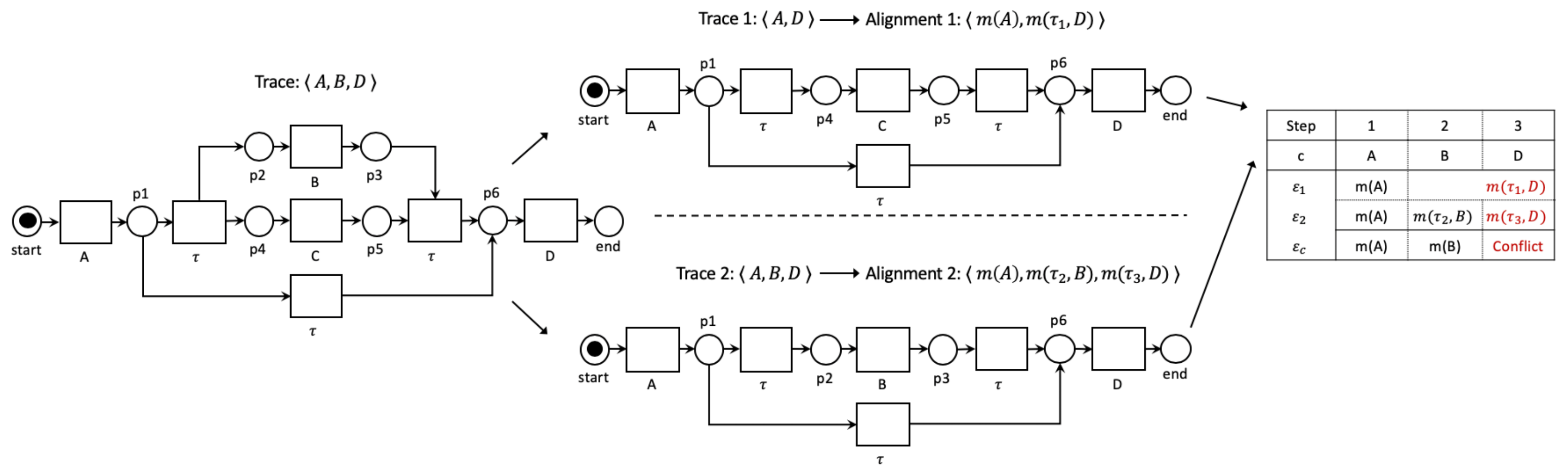}
\vspace{-1.5\baselineskip}
 \caption{\label{fig:ResolvingConflictRecomposition}Detecting the invisible label conflict.}
\end{figure}

Figure \ref{fig:ResolvingConflictRecomposition} illustrates shows how to relabel arcs in the reachability graph to avoid ``hidden'' label duplication. First, add to each $\tau$-transition a unique index at the start of Alg.~\ref{alg:remTau} so that all $\tau$-transitions are uniquely labeled. Second, we alter Alg.~\ref{alg:remTau} to maintain the identity of the removed $\tau$ transitions in the next visible transition . In particular, when replacing an arc $(n_1,\ell,n_2)$ for an arc with label $\tau_i$, we create an extended label $(\tau_i,\ell)$ for the replacement arc. Let Alg.~\ref{alg:remTau}$^*$ be this modification of Alg.~\ref{alg:remTau} and let Alg.~\ref{alg:constPSPwithSComponents}$^*$ which invokes Alg.~\ref{alg:remTau}$^*$ instead of Alg.~\ref{alg:remTau}. Alg.~\ref{alg:remTau}$^*$ and the extended labels are not used for the PSP construction Algs.~\ref{alg:constPSP} and \ref{alg:RevConstPSP}. We omit the technical details. The changes in Fig. \ref{fig:ResolvingConflictRecomposition} lead to the following differences: The transition $\tau_1$ can now be distinguished from transition $\tau_3$. During the recomposition, there will be a label conflict between the extended labels $(\tau_1,D)$ and $(\tau_3,D)$.
Since the recomposition can no longer find a viable path through the process model for these conflicting cases, these traces need to be aligned on the reachability graph of the input process model with Alg.~\ref{alg:RevConstPSP} in Section~\ref{sec:approach}. This ensures that alignments of traces with conflictsfound  during the recomposition also form a path through the process model according to Lemma~\ref{lem:global_alignment_is_alignment}.
While the quality of the results is maintained for these conflicting cases, this implies that calculation speed is lost since the sub-alignments are calculated for each S-Component and these results are invalid. Thus, this divide-and-conquer approach is specially faster when the number of conflicting cases during the recomposition is low.
Please note, however, that only cases with conflicts need to go though this procedure. Recomposed alignments without conflict do not need to be re-computed with Alg.~\ref{alg:RevConstPSP}.

\vspace{.5\baselineskip}
Next, we will investigate whether recomposed alignments without conflicts are proper, i.e. all trace labels are contained and forms a path through the reachability graph of the input process model.
The following theorem states that the recomposed alignment is a proper alignment.
\begin{theorem}\label{thm:recompose_alignment_optimal}
Let $\logL$ be an event log and $\sysNet = (\wNet, \initMarking, \finalMarkings)$ be a system net, where $\wNet$ is a uniquely labeled, sound, free-choice workflow net. Let $\psp' = Alg5^*(\logL,\dafsa,\sysNet)$. For every trace $c \in \logL$, $\epsilon_c = \epsilon(\psp',c)$ is a proper alignment. Specifically, the following two properties hold:
    \begin{compactenum}
        \item The sequence of synchronizations with $\lhide$ or $\match$ operations reflects the trace $c$, i.e. $\lambda(\epsilon_c \trPr{\dafsa}) = c$.
        \item The arcs of the reachability graph in the sequence of synchronizations with $\rhide$ or $\match$ operations forms a path in the reachability graph from the initial to the final marking, i.e. let $n=\left|\epsilon_c\trPr{\reachGraph}\right|$, then $\src{\epsilon_c \trPr{\reachGraph}[1]} = \initMarking \land \tgt{\epsilon_c \trPr{\reachGraph}(n)} = \finalMarkings \land \forall a_i, a_{i+1} \in \epsilon\trPr{\reachGraph} \mid 1\leq i < n : \tgt{a_i} = \src{a_{i+1}}$.
    \end{compactenum}

\end{theorem}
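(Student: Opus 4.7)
The plan is to split the proof into the fallback case and the recomposition case, and establish the two properties separately for the latter. For the fallback case, if any conflict was raised (either in $\mathit{recomposeModelStepsUntil}$ or in the outer loop of Alg.~\ref{alg:constPSPwithSComponents}$^*$), the algorithm reverts to Alg.~\ref{alg:RevConstPSP} on the full DAFSA and reachability graph (line~\ref{line:conflict}), and Lemma~\ref{lem:global_alignment_is_alignment} immediately yields both properties. I may therefore assume that no conflict arose, so every synchronization in $\epsilon_c$ was produced by one of the three recomposition subroutines.

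For Property 1, I would argue by induction on the outer-loop counter $\mathit{pos}_c$, with invariant $\lambda(\epsilon_c\trPr{\dafsa}) = \langle c[1],\ldots,c[\mathit{pos}_c-1]\rangle$. Calls to $\mathit{recomposeModelStepsUntil}$ only append $\rhide$ synchronizations, which are invisible to the DAFSA projection. Each successful call to $\mathit{recomposeLogStep}$ or $\mathit{recomposeMatchStep}$ appends exactly one synchronization whose composed DAFSA arc is labeled with $\ell=c[\mathit{pos}_c]$, extending the projection by that single label. Once the outer loop exits at $\mathit{pos}_c=|c|+1$, the invariant yields $\lambda(\epsilon_c\trPr{\dafsa})=c$, which gives Property 1.

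For Property 2 I would prove, by induction on the number of synchronizations appended to $\epsilon_c$, the stronger invariant: the current local state vector $\underline{m}=(m_1,\ldots,m_k)$ coincides with the projection, onto the places of each S-component, of some globally reachable marking $m$ of $\sysNet$, and $\epsilon_c\trPr{\reachGraph}$ is a path in $\reachGraph$ from $\initMarking$ to $m$. The base case follows because $\underline{m}$ is initialized from the projected initial markings and the S-components cover $\wNet$. For the inductive step, whenever a synchronization on a label $x$ is composed without conflict, every S-component whose alphabet contains $x$ must agree on its next arc carrying $x$. Since $\wNet$ is uniquely labeled and Alg.~\ref{alg:remTau}$^*$ decorates each $\tau$-contracted arc with a unique identifier, label agreement pins down exactly one (possibly $\tau$-reduced) firing sequence $t$ of $\wNet$. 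By the S-component cover property of sound free-choice workflow nets~\cite{FreeChoice}, $t$ is globally enabled at $m$ iff it is locally enabled at every $m_i$ with $t \in T_i$; hence the composed arc exists in $\reachGraph$ and firing it yields the new global marking $m'$ whose local projections are the updated $\underline{m}$. When the outer loop terminates, $\underline{n}$ consists of local DAFSA final states and $\underline{m}$ of local final markings, which by the cover property project from $\finalMarkings$, closing the argument.

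The main obstacle is the step case of Property 2: showing that local label agreement as detected by Alg.~\ref{alg:constPSPwithSComponents}$^*$ genuinely corresponds to firing a single (possibly $\tau$-contracted) transition in $\sysNet$, rather than a spurious join of two ``look-alike'' arcs that merely share a visible label. This is exactly the pitfall illustrated in Fig.~\ref{fig:ConflictRecomposition}, and it is neutralized by the extended labels introduced by Alg.~\ref{alg:remTau}$^*$. The proof must therefore explicitly invoke the extended-label disambiguation to lift agreement on an extended label in the projected reachability graphs to agreement on the underlying firing sequence in $\wNet$; only then does the S-component cover property of free-choice workflow nets close the induction and guarantee that the recomposed reachability-graph projection is a genuine path ending in $\finalMarkings$.
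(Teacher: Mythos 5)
Your proposal follows essentially the same route as the paper's proof: Property~1 by induction over the trace positions (with $\rhide$ steps invisible to the DAFSA projection), and Property~2 by induction over the composed synchronizations, using unique labelling together with the extended labels introduced by the modified $\tau$-removal (Alg.~\ref{alg:remTau}$^*$) and the S-component cover property of sound free-choice workflow nets to lift locally agreed arcs to a single globally enabled transition firing, including the fallback to Lemma~\ref{lem:global_alignment_is_alignment} on conflict. The only minor divergence is the final-marking step: the paper argues by contradiction from soundness (no deadlock and no reachable marking strictly covering $\finalMarkings$), whereas you argue directly that in the conflict-free case the trailing $\ell=\perp$ iteration drains every local alignment, each of which, being proper, ends in $[p_f]$; both arguments are sound.
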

The formal proof by induction on the length of $c$ is given in~\ref{app:alg5_correct}. The core argument is to show that the markings and the transition firings of $\sysNet$ can be reconstructed from the vector $\underline{m}$ of markings of each S-component nets. Further, the arcs in the reachability graphs of the S-components nets are isomorphic to the transitions. As a result, the transition effect of the original transition in $\sysNet$ can be recomposed from the effects in the S-component nets. The latter argument requires the uniqueness of arcs in the reachability graphs provided by Alg.~\ref{alg:remTau}$^*$.
\vspace{.5\baselineskip}
\section{Evaluation}\label{sec:evaluation}
We implemented our approach in a standalone open-source tool\footnote{Tool available at \url{https://apromore.org/platform/tools}. Source code available at \url{https://github.com/apromore/DAFSABasedConformance}} as part of the Apromore software environment. Given an event log in XES format and a process model in BPMN or PNML (the latter is the serialization format of Petri nets), the tool returns several conformance statistics such as fitness and raw fitness cost. Optionally, the tool can also return a list of one-optimal alignments for each unique trace as well as their individual alignment statistics. The tool implements both the Automata-based approach described in Section \ref{sec:approach} as well as the extended approach with the S-Components improvement described in Section \ref{sec:s-components}.
\newpage
Using this tool, we conducted a series of experiments to measure the time performance and the quality of the results obtained by both our approaches against two exact techniques for computing alignments, which are state-of-the-art, and one approximate technique: 1) the newest version of the one-optimal alignment with the ILP marking equation, first presented in \cite{ILP-Alignment} and
implemented in ProM in the PNetReplayer package (ILP Alignments); 2) the one-optimal alignment approach using the extended marking equation presented in \cite{BVD-Alignment} and implemented in ProM in the Alignment package (MEQ Alignments); and 3) the approximate alignment approach for large instances (ALI)~\cite{ALI} using local search.\footnote{Tool available at \url{https://www.cs.upc.edu/~taymouri/tool.html}} ALI is implemented in Python and we used it in conjunction with the commercial LP solver Gurobi to conduct the experiments.
We implemented multi-threading for each unique trace, and in the S-Components variant, also for each S-Component.

The two baseline implementations for optimal alignments computation use optimized data structures and efficient hashcodes \cite{AlignmentTechnicalImprovements}. Accordingly, we optimized our software implementation using similar techniques, so as to achieve results that are as comparable as possible. Specifically, we optimized the queueing mechanism by improving the selection of suitable solutions, merging overlapping solutions and prioritizing longer solutions with the same cost to find an optimal solution more efficiently. 
While the approximate approach ALI is only implemented as a Python prototype, the authors  previously compared it with the two baselines for optimal alignments computation, and showed to outperform these on a synthetic dataset \cite{ALI}.

\vspace{-.5\baselineskip}
\subsection{Setup}
Using our tool and the reference tools for the three baselines, we conducted a series of experiments to measure the execution time of all four approaches, in milliseconds. Each experiment was run five times and we report the average results of runs \#2 to \#4 to avoid influence of the Java class loader and reduce variance. Given that the complexity of the alignment problem is worst-time exponential, we decided to apply a reasonable time bound of ten minutes to each experiment. We note that previous experiments reported that in certain cases the computation of an alignment may take over a dozen hours \cite{Munoz-GamaCA14}.

Measuring time performance was the primary focus of our experiments. Besides time performance, we measured the quality of alignment. We did so in terms of alignment cost (Def.~\ref{def:cost}) per trace. The alignment cost is a secondary factor of comparison to measure the degree of optimality of the results. We chose to report the alignment cost over other conformance measures such as fitness as it allows one to more precisely pinpoint the over-approximation of the results, if any. The experiments were run multi-threaded with a fixed amount of 16 threads for each approach to achieve a comparable computation setup.
The experiments were conducted on a 22-core Intel Xeon CPU E5-2699 v4 with 2.30GHz, with 128GB of RAM running JVM 8. This machine can execute up to 44 threads per socket. 

Given that a different number of threads can lead to different time performance, we repeated our tests in a single-thread setting. The results of this latter experiment are reported in \ref{app:time_performance_st} and are consistent with those obtained in the multi-thread setting reported in this section.

\vspace{-.5\baselineskip}
\subsection{Datasets}
We used two datasets of log-model pairs from a recent benchmark on automated process discovery \cite{PD-Discovery-BM} 
The first dataset consists of twelve public event logs. These logs in turn originate from the 4TU Centre for Research Data\footnote{\url{https://data.4tu.nl/repository/collection:event_logs_real}}. They include the logs of the Business Process Intelligence Challenge (BPIC) series, BPIC12 \cite{BPIC12}, BPIC13\textsubscript{cp} \cite{BPIC13cp}, BPIC13\textsubscript{inc} \cite{BPIC13inc}, BPIC14 \cite{BPIC14}, BPIC15 \cite{BPIC15}, BPIC17 \cite{BPIC17}, the Road Traffic Fines Management process log (RTFMP) \cite{RTFMP} and the SEPSIS Cases log (SEPSIS) \cite{SEPSIS}. These logs record process executions from different domains such as finance, healthcare, government and IT service management. The BPIC logs from  years 2011 and 2016 (BPIC11 and BPIC16) were excluded since they do not represent real business processes.
The second dataset is composed of eight proprietary logs sourced from several organizations around the world, including healthcare, banking, insurance and software vendors.
In the benchmark, some of the public logs (marked with ``$_f$'') were filtered in order to remove infrequent behavior, using the technique in \cite{Noise-filtering}. The reason for this filtering was that the majority of automated discovery techniques used in the benchmark could not discover a model from the unfiltered log with the allotted memory (i.e.\ they ran into a state-space explosion). In order to guarantee compatibility with the benchmark, we decided to keep these logs as is, i.e.\ we did not remove the filter. 

Each of the two datasets (public and private) comes with four process models per log, that have been discovered using four state-of-the-art automated discovery methods in the benchmark in \cite{PD-Discovery-BM}, namely: Inductive Miner \cite{InductiveMiner}, Split Miner \cite{SplitMiner}, Structured Heuristics Miner \cite{StructuredHeuristicsMiner} and Fodina \cite{Fodina}. We discarded the process models discovered by the latter two methods for our experiments since they may lead to process models with transitions with duplicate labels (and in some cases also to unsound models), which our S-Components extension does not handle. This resulted in a total of 40 log-model pairs for our evaluation.
Table \ref{tb:log_statistics} reports the log characteristics. There are logs of different sizes in terms of total traces (681--787,667) or total number of events (6,660--1,808,706). The difficulty of the conformance checking problem, however, is more related to the percentage of distinct traces (0.01\%--97.5\%), the number of distinct events (7--82) and the trace length (avg. 1--32). These logs thus feature a wide range of characteristics, and include both simple and complex logs. For reference, we made the public logs and the corresponding models, together with all the results of our experiments, available online \cite{reissner_2019}.

\begin{table}[ht]
{\footnotesize{
\setlength{\tabcolsep}{6pt}
\centering{
\begin{tabular}{|c|c c c c|c c c|}
\hline\textbf{Log} & \textbf{Total} & \textbf{Distinct} & \textbf{Total} & \textbf{Distinct}  & \multicolumn{3}{ c |}{\textbf{Trace Length}} \\ \cline{6-8}
\textbf{Name} & \textbf{Traces} & \textbf{Traces} (\%) & \textbf{Events} & \textbf{Events} & \textbf{min} & \textbf{avg} & \textbf{max}\\ \hline
BPIC12 & 13,087 & 33.4 & 262,200 & 36 & 3 & 20 & 175\\ \hline
BPIC13\textsubscript{cp} & 1,487 & 12.3 & 6,660 & 7 & 1 & 4 & 35\\ \hline
BPIC13\textsubscript{inc} & 7,554 & 20.0 & 65,533 & 13 & 1 & 9 & 123\\ \hline
BPIC14\textsubscript{f} & 41,353 & 36.1 & 369,485 & 9 & 3 & 9 & 167\\ \hline
BPIC15\textsubscript{1f} & 902 & 32.7 & 21,656 & 70 & 5 & 24 & 50\\ \hline
BPIC15\textsubscript{2f} & 681 & 61.7 & 24,678 & 82 & 4 & 36 & 63\\ \hline
BPIC15\textsubscript{3f} & 1,369 & 60.3 & 43,786 & 62 & 4 & 32 & 54\\ \hline
BPIC15\textsubscript{4f} & 860 & 52.4 & 29,403 & 65 & 5 & 34 & 54\\ \hline
BPIC15\textsubscript{5f} & 975 & 45.7 & 30,030 & 74 & 4 & 31 & 61\\ \hline
BPIC17\textsubscript{f} & 21,861 & 40.1 & 714,198 & 41 & 11 & 33 & 113\\ \hline
RTFMP & 150,370 & 0.2 & 561,470 & 11 & 2 & 4 & 20\\ \hline
SEPSIS & 1,050 & 80.6 & 15,214 & 16 & 3 & 14 & 185\\ \hline\hline
PRT1 & 12,720 & 8.1 & 75,353 & 9 & 2 & 5 & 64\\ \hline
PRT2 & 1,182 & 97.5 & 46,282 & 9 & 12 & 39 & 276\\ \hline
PRT3 & 1,600 & 19.9 & 13,720 & 15 & 6 & 8 & 9\\ \hline
PRT4 & 20,000 & 29.7 & 166,282 & 11 & 6 & 8 & 36 \\ \hline
PRT6 & 744 & 22.4 & 6,011 & 9 & 7 & 8 & 21\\ \hline
PRT7 & 2,000 & 6.4 & 16,353 & 13 & 8 & 8 & 11\\ \hline
PRT9 & 787,657 & 0.01 & 1,808,706 & 8 & 1 & 2 & 58\\ \hline
PRT10 & 43,514 & 0.01 & 78,864 & 19 & 1 & 1 & 15\\ \hline
\end{tabular}
\vspace{-.5\baselineskip}
}\caption{Descriptive statistics of the event logs in the public and private datasets}\label{tb:log_statistics}
\vspace{-.5\baselineskip}
}}
\end{table}

Table \ref{tb:model_statistics} reports the statistics of the process models obtained with Inductive (IM) and Split Miner (SM), for each log in our evaluation. Specifically, this table reports size (number of places, transitions and arcs), number of transitions, number gateways (XOR-splits, AND-splits) and size of the resulting reachability graph from the Petri net (in case of a BPMN model, it is the Petri net obtained from this model). In addition, if a Petri net has at least one AND-split, we also report on the number of S-Components and for each of them the following statistics: their average Petri net size, average number of transitions, average number of XOR-splits and average size of the resulting reachability graph.

Inductive Miner is designed to discover highly-fitting models. As a result, the models often exhibit a large reachability graph as the models need to cater for a large variety of executions present in the logs. Split Miner strikes a trade-off between fitness and precision by filtering the directly-follows graph of the log before discovering the model. That leads to process models with a smaller state space, but with a possibly higher number of fitness mismatches. Altogether, these models present two different scenarios for conformance checking: the models discovered by Inductive Miner require a large state space to be traversed with a low to medium number of mismatches per trace, while the models of Split Miner have a smaller state space with a medium to high number of mismatches per trace.

The S-Component decomposition can drastically reduce the size of the state space of the model. This becomes apparent when comparing the size of the reachability graph with that of the S-Component reachability graphs, e.g.\ BPIC12 (IM) reduces from 1,997 nodes and arcs to a total of 583 nodes and arcs and BPIC14\textsubscript{f} from 4,383 to a total of 261 nodes and arcs. This reduction depends on the internal structure of the model, i.e.\ the number of S-components and the nesting of XOR-splits and AND-splits. Sometimes, this reduction will not lead to a smaller state space, e.g.\ for BPIC15\textsubscript{3f} (IM) the size reduces from 875 to 191.5 per S-Component, which leads to a total state space of 1,532 nodes and arcs for all S-Components, which is larger than the size of the original model.

\begin{table}[!h]
{\footnotesize{
\setlength{\tabcolsep}{3pt}
\centering{
\begin{tabular}{|c|c|c|c c c c c|c c c c c|}
\hline
\textbf{Miner} & \textbf{Domain} & \textbf{Dataset} & \textbf{Size} & \textbf{Trns}  & \textbf{XOR} & \textbf{AND} & \textbf{RG Size} & \textbf{\#SComp} & \textbf{$\varnothing$ Size} & \textbf{$\varnothing$ Trns} & \textbf{$\varnothing$ XOR} & \textbf{$\varnothing$ RG Size} \\ \hline
\multirow{20}{*}{IM} & \multirow{12}{*}{public} & BPIC12 & 177 & 45 & 16 & 2 & 1,997 & 10 & 130.9 & 36.3 & 12.3 & 58.3\\ \cline{3-13}
 & & BPIC13\textsubscript{cp} & 31 & 8 & 2 & 0 & 9 & 1 & - & - & - & -\\ \cline{3-13}		
 & & BPIC13\textsubscript{inc} & 56 & 13 & 3 & 1 & 121 & 3 & 28 & 7 & 1 & 14\\ \cline{3-13}
 & & BPIC14\textsubscript{f} & 124 & 29 & 8 & 2 & 4,383 & 10 & 53.9 & 13.9 & 2.7 & 26.1\\ \cline{3-13}
 & & BPIC15\textsubscript{1f} & 449 & 127 & 48 & 0 & 719 & 1 & - & - & - & - \\ \cline{3-13}	
 & & BPIC15\textsubscript{2f} & 537 & 150 & 55 & 1 & 1,019 & 2 & 530 & 149 & 55 & 232 \\ \cline{3-13}
 & & BPIC15\textsubscript{3f} & 464 & 128 & 47 & 3 & 875 & 8 & 438.5 & 123.5 & 45.5 & 191.5\\ \cline{3-13}
 & & BPIC15\textsubscript{4f} & 469 & 131 & 51 & 1 & 1,019 & 2 & 462 & 130 & 51 & 202\\ \cline{3-13}
 & & BPIC15\textsubscript{5f} & 381 & 111 & 31 & 0 & 429 & 1 & - & - & - & -\\ \cline{3-13}		
 & & BPIC17\textsubscript{f} & 121 & 33 & 8 & 0 & 59 & 1 & - & - & - & - \\ \cline{3-13}
 & & RTFMP & 111 & 26 & 9 & 2 & 2,394 & 6 & 52.7 & 13.8 & 3.7 & 25 \\ \cline{3-13}
 & & SEPSIS & 145 & 37 & 13 & 3 & 2,274 & 8 & 99 & 27.5 & 9 & 44 \\ \cline{2-13}
 & \multirow{8}{*}{private} & PRT1 & 70 & 16 & 4 & 1 & 195 & 4 & 39.3 & 10 & 1.8 & 19.3\\ \cline{3-13}
 & & PRT2 & 175 & 43 & 16 & 1 & 5,515,357 & 7 & 49 & 13 & 4 & 23\\ \cline{3-13}
 & & PRT3 & 111 & 27 & 8 & 2 & 167 & 8 & 71 & 19 & 4 & 33\\ \cline{3-13}
 & & PRT4 & 91 & 21 & 5 & 2 & 154 & 8 & 54 & 14 & 2 & 26\\ \cline{3-13}
 & & PRT6 & 86 & 20 & 4 & 2 & 65 & 6 & 59 & 15 & 2 & 29\\ \cline{3-13}
 & & PRT7 & 99 & 23 & 5 & 2 & 158 & 8 & 62 & 16 & 2 & 30\\ \cline{3-13}
 & & PRT9 & 96 & 21 & 7 & 2 & 9,121 & 7 & 27.9 & 7 & 1.1 & 13.9\\ \cline{3-13}
 & & PRT10 & 124 & 35 & 8 & 1 & 184 & 2 & 115.5 & 33.5 & 7.5 & 48.5\\ \hline\hline
\multirow{20}{*}{SM} & \multirow{12}{*}{public} & BPIC12 & 315 & 85 & 29 & 1 & 95 & 2 & 308 & 84 & 29 & 140 \\ \cline{3-13}
 & & BPIC13\textsubscript{cp} & 49 & 13 & 4 & 0 & 13 & 1 & - & - & - & - \\ \cline{3-13}		
 & & BPIC13\textsubscript{inc} &  56 & 15 & 5 & 0 & 17 & 1 & - & - & - & - \\ \cline{3-13}		
 & & BPIC14\textsubscript{f} &  88 & 24 & 9 & 0 & 24 & 1 & - & - & - & - \\ \cline{3-13}
 & & BPIC15\textsubscript{1f} & 368 & 98 & 25 & 0 & 156 & 1 & - & - & - & - \\ \cline{3-13}		
 & & BPIC15\textsubscript{2f} & 444 & 117 & 25 & 0 & 186 & 1 & - & - & - & - \\ \cline{3-13}		
 & & BPIC15\textsubscript{3f} & 296 & 78 & 17 & 0 & 136 & 1 & - & - & - & - \\ \cline{3-13}
 & & BPIC15\textsubscript{4f} & 323 & 85 & 18 & 0 & 141 & 1 & - & - & - & - \\ \cline{3-13}
 & & BPIC15\textsubscript{5f} & 359 & 94 & 18 & 0 & 159 & 1 & - & - & - & - \\ \cline{3-13}
 & & BPIC17\textsubscript{f} & 149 & 40 & 12 & 0 & 54 & 1 & - & - & - & - \\ \cline{3-13}
 & & RTFMP & 102 & 28 & 11 & 0 & 37 & 1 & - & - & - & - \\ \cline{3-13}				
 & & SEPSIS & 162 & 44 & 15 & 0 & 41 & 1 & - & - & - & - \\ \cline{2-13}				
 & \multirow{8}{*}{private} & PRT1 & 104 & 28 & 9 & 0 & 28 & 1 & - & - & - & - \\ \cline{3-13}
 & & PRT2 & 166 & 45 & 15 & 0 & 37 & 1 & - & - & - & - \\ \cline{3-13}
 & & PRT3 & 96 & 25 & 8 & 1 & 34 & 2 & 89 & 24 & 8 & 41 \\ \cline{3-13}
 & & PRT4 & 126 & 33 & 10 & 1 & 34 & 2 & 119 & 32 & 10 & 55 \\ \cline{3-13}
 & & PRT6 & 46 & 11 & 2 & 1 & 20 & 2 & 39 & 10 & 2 & 19 \\ \cline{3-13}
 & & PRT7 & 86 & 19 & 3 & 5 & 39 & 6 & 57 & 14.7 & 2.7 & 27.7 \\ \cline{3-13}
 & & PRT9 & 107 & 29 & 10 & 0 & 32 & 1 & - & - & - & - \\ \cline{3-13}
 & & PRT10 & 327 & 90 & 34 & 0 & 92 & 1 & - & - & - & - \\ \hline
 \end{tabular}
 }\vspace{.5\baselineskip}
\caption{Descriptive statistics of the process models obtained by IM and SM from the datasets in Table \ref{tb:log_statistics}}\label{tb:model_statistics}
\vspace{1.5\baselineskip}
}}
\end{table}

\subsection{Results}\label{sec:evaluation:results}
Table \ref{tb:time_performance} reports the running times in milliseconds for each approach against each of the 40 log-model pairs, using a fixed number of 16 threads per approach. The best execution time for each experiment is highlighted in bold.

\vspace{.5\baselineskip}
\textbf{Analyzing the overall performance.}
The S-Components approach outperforms the other approaches in eight out of 40 log-model pairs; the Automata-based technique performs best in 28 out of 40 cases, and the extended MEQ Alignments approach outperforms in three out of 40 cases. 
The total time spent for all 40 datasets, excluding any timeouts, is comparable between the Automata-based approach and the S-Component approach (circa 230 seconds with the S-Components approach gaining 17 seconds). Both our approaches improve on ILP Alignments by 500 seconds and on the extended MEQ Alignments and ALI by over 1,000 seconds.

\vspace{.5\baselineskip}
\textbf{Investigating timeout cases.}
In total, the S-Components approach times out (``t/out'' in the table) in two cases, the Automata-based approach in three, ILP Alignments in one case and the extended MEQ alignments approach times out in six cases. ALI is the only approach that never times out in the datasets used in our experiments, though, as discussed later in the section, this approach always over-approximates. While all other approaches timed out on PRT2 (IM), which has a huge state space of 5,515,357 nodes and arcs in the reachability graph, ALI managed to compute approximate alignments for this dataset. The S-Components approach actually manages to compute alignments quickly for this log-model pair since the S-Component reachability graphs are very small, but times out when some traces conflict with each other in the recomposition algorithm and need to be aligned on the original reachability graph, which is much larger. The S-Components approach manages to compute alignments for the BPIC14\textsubscript{f} (IM), which was not possible for the Automata-based approach within the 10-minute bound of the timeout.

\vspace{.5\baselineskip}
\textbf{Improvements of our approaches.}
The Automata-based approach performs better than both state-of-the-art approaches ILP and MEQ Alignments by one-two orders of magnitude. For example, for the BPIC17\textsubscript{f} (IM) it takes 680 ms against 20.7 sec of ILP Alignments or for BPIC12 (SM) it takes 4,578 ms vs 188,489 ms of ILP Alignments. When the state space reduction of the S-Components is effective, it shows the potential to improve over other approaches by at least one order of magnitude, e.g.\ for BPIC12 (IM) it improves from 121,845 ms (Automata-based) to 44,301 ms and for BPIC14\textsubscript{f} (IM) from 84,102 ms (ILP) to 7,789 ms.
In total the Automata-based approach improves over all baseline approaches by one order of magnitude in ten datasets and the S-Component approach in three datasets. The S-Component extension improves over the automata-based approach by one order of magnitude in five cases.

\vspace{.5\baselineskip}
\textbf{Problematic cases of the S-Component approach.}
The process models discovered by Split Miner do not feature parallel constructs except the model discovered from the BPIC12 log. Thus, in these logs, the performance of the S-Component extension is the same as that of the automata-based approach. In the BPIC12 (SM) case, the Automata-based technique outperforms the S-Component approach because it exploits the parallel constructs in the model. This is due to the combined state-space of the S-Component reachability graphs being larger than the original size of the reachability graph of the process model. This can already happen for process models with a small amount of parallel behavior in comparison to models with a large amount of other behavior, i.e.\ the model from the BPIC12 log has one parallel block with two parallel transitions against a total of 85 transitions. The log-model pair  BPIC15\textsubscript{3f} of IM exhibits similar problems.

\vspace{.5\baselineskip}
\textbf{Hybrid approach: definition and performance.}
Since the advantages of the S-Components decomposition are limited to a specific type of process models (those with large state spaces due to a high degree of parallelism), we derived an empirical rule to decide when to use the S-Components improvement on top of our Automata-based approach. Accordingly, we apply this improvement if the sum of the reachability graph sizes of all S-Components is smaller than that of the original reachability graph of the process model. We added the execution times for this hybrid approach to Table \ref{tb:time_performance}. 

In total, the hybrid approach gains 100 seconds over the Automata-based approach and  improves the results by ALI and the extended MEQ Alignments by one order of magnitude.
In detail, the hybrid approach manages to outperform all other approaches in 30 out of 40 cases and performs second best in five more cases. 
We note that the reported execution time of the hybrid approach does not include the time required to decide whether or not to apply the S-Components improvement. If we end up selecting the S-Components, we do not actually need additional time, since the reachability graphs for the S-Component nets are computed anyways as part of the decomposition approach. If we select the base approach, this leads to two cases: the model does not have parallelism or it does. If it does not, we detect this case by checking all transitions of the Petri net, which is a linear operation, so the time is negligible. If the model has parallelism, we need to calculate the reachability graphs for every S-Component net. In practice, this time was always negligible in our experiments, but there can be very large process models for which this operation may be expensive. However, in these cases, it is likely that we would select the S-Component approach anyway.

 \begin{table}[htbp!]                 
 {\footnotesize{                 
 \setlength{\tabcolsep}{3pt}                 
 \centering{                 
 \begin{tabular}{|c|c|c|c c c|c c|c|}                 
 \hline                 
  &  &  & \multicolumn{3}{c|}{\textbf{Baselines}} & \multicolumn{3}{c|}{\textbf{Our Approaches}}   \\ \cline{4-9}
 \textbf{Miner}&  \textbf{Domain}&  \textbf{Dataset}&  \textbf{ILP}&  \textbf{extended MEQ}&  \textbf{ALI}&  \textbf{Automata-based}&  \textbf{S-Component}&  \textbf{Hybrid}\\
  &  & &  \textbf{Alignments}&  \textbf{Alignments} &  &  \textbf{Approach}&  \textbf{Approach}&  \textbf{Approach}\\ \hline
\multirow{20}{*}{ IM} & \multirow{12}{*}{ public}& BPIC12 & 129,268 & 464,849 &  49,018  & 121,845 & \textbf{44,301}& \textbf{44,301}\\ \cline{3-9}
  &  & BPIC13\textsubscript{cp} & 97 & 1,362 &  3,836  &  \textbf{45}& 52 &  \textbf{45}\\ \cline{3-9}
  &  & BPIC13\textsubscript{inc} & 1,514 & 17,857 &  27,850  & 4,733 & \textbf{155}& \textbf{155}\\ \cline{3-9}
  &  & BPIC14\textsubscript{f} & 84,102 & 143,006 &  100,351  & t/out & \textbf{7,789}& \textbf{7,789}\\ \cline{3-9}
  &  & BPIC15\textsubscript{1f} & 3,323 & \textbf{1,312}&  19,651  & t/out & t/out & t/out \\ \cline{3-9}
  &  & BPIC15\textsubscript{2f} & 19,663 & 4,300 &  43,863  & \textbf{2,423}& 9,541 & 9,541 \\ \cline{3-9}
  &  & BPIC15\textsubscript{3f} & 30,690 & 15,644 &  47,525  & \textbf{2,042}& 22,191 & \textbf{2,042}\\ \cline{3-9}
  &  & BPIC15\textsubscript{4f} & 11,696 & 6,574 &  33,958  & \textbf{1,023}& 4,915 & 4,915 \\ \cline{3-9}
  &  & BPIC15\textsubscript{5f} & 6,830 & \textbf{2,859}&  23,239  & 30,648 & 17,577 & 30,648 \\ \cline{3-9}
  &  & BPIC17\textsubscript{f} & 20,745 & 63,335 &  60,621  & \textbf{680}& 3,530 & \textbf{680}\\ \cline{3-9}
  &  & RTFMP & 1,846 & 108,818 &  5,884  & \textbf{334}& 507 & 507 \\ \cline{3-9}
  &  & SEPSIS & 6,592 & 4,347 &  23,842  & 31,113 & \textbf{1,575}& \textbf{1,575}\\ \cline{2-9}
  &\multirow{8}{*}{ private}& PRT1 & 868 & 8,860 &  10,812  & 774 & \textbf{170}& \textbf{170}\\ \cline{3-9}
  &  & PRT2 & t/out & t/out & \textbf{ 17,296 }& t/out & t/out & t/out \\ \cline{3-9}
  &  & PRT3 & 203 & 1,176 &  6,500  & \textbf{60}& 106 & \textbf{60}\\ \cline{3-9}
  &  & PRT4 & 4,041 & 14,396 &  32,811  & \textbf{1,021}& 1,329 & \textbf{1,021}\\ \cline{3-9}
  &  & PRT6 & 131 & 582 &  6,223  & \textbf{32}& 72 & \textbf{32}\\ \cline{3-9}
  &  & PRT7 & 106 & 1,456 &  5,898  & \textbf{34}& 70 & \textbf{34}\\ \cline{3-9}
  &  & PRT9 & 30,772 & t/out &  10,738  & 12,686 & \textbf{4,505}& \textbf{4,505}\\ \cline{3-9}
  &  & PRT10 & 562 & 33,813 &  8,677  & \textbf{63}& 166 & 166 \\ \hline\hline
\multirow{20}{*}{ SM} & \multirow{12}{*}{ public}& BPIC12 & 188,489 & 487,878 &  118,053  & \textbf{4,578}& 68,246 & \textbf{4,578}\\ \cline{3-9}
  &  & BPIC13\textsubscript{cp} & 128 & 1,934 &  4,351  & \textbf{28}& 42 & \textbf{28}\\ \cline{3-9}
  &  & BPIC13\textsubscript{inc} & 1,120 & 23,497 &  18,906  & \textbf{140}& 290 & \textbf{140}\\ \cline{3-9}
  &  & BPIC14\textsubscript{f} & 37,987 & t/out &  157,733  & \textbf{3,185}& 7,792 & \textbf{3,185}\\ \cline{3-9}
  &  & BPIC15\textsubscript{1f} & 2,972 & \textbf{847}&  12,986  & 1,528 & 1,397 & 1,528 \\ \cline{3-9}
  &  & BPIC15\textsubscript{2f} & 9,128 & 1,352 &  26,114  & 1,036 & \textbf{1,026}& 1,036 \\ \cline{3-9}
  &  & BPIC15\textsubscript{3f} & 7,282 & 2,911 &  25,067  & \textbf{609}& 934 & \textbf{609}\\ \cline{3-9}
  &  & BPIC15\textsubscript{4f} & 7,258 & 1,804 &  19,893  & \textbf{585}& 655 & \textbf{585}\\ \cline{3-9}
  &  & BPIC15\textsubscript{5f} & 8,805 & 1,423 &  24,608  & 765 & \textbf{660}& 765 \\ \cline{3-9}
  &  & BPIC17\textsubscript{f} & 27,011 & 22,572 &  265,169  & \textbf{727}& 3,669 & \textbf{727}\\ \cline{3-9}
  &  & RTFMP & 1,836 & 112,328 &  7,020  & \textbf{110}& 345 & \textbf{110}\\ \cline{3-9}
  &  & SEPSIS & 4,912 & t/out &  18,821  & \textbf{276}& 343 & \textbf{276}\\ \cline{2-9}
  &\multirow{8}{*}{ private}& PRT1 & 1,954 & 21,642 &  10,788  & \textbf{135}& 379 & \textbf{135}\\ \cline{3-9}
  &  & PRT2 & 37,556 & t/out &  91,186  & \textbf{3,830}& 3,836 & \textbf{3,830}\\ \cline{3-9}
  &  & PRT3 & 181 & 1,524 &  6,439  & \textbf{55}& 89 & \textbf{55}\\ \cline{3-9}
  &  & PRT4 & 7,433 & 49,980 &  47,776  & \textbf{475}& 1,794 & \textbf{475}\\ \cline{3-9}
  &  & PRT6 & 75 & 603 &  5,997  & \textbf{33}& 52 & \textbf{33}\\ \cline{3-9}
  &  & PRT7 & 96 & 1,439 &  5,887  & \textbf{26}& 103 & \textbf{26}\\ \cline{3-9}
  &  & PRT9 & 30,835 & t/out &  18,165  & \textbf{1,336}& 1,776 & \textbf{1,336}\\ \cline{3-9}
  &  & PRT10 & 828 & 36,458 &  9,374  & \textbf{78}& 92 & \textbf{78}\\ \hline \hline
\multicolumn{3}{|c|}{ Total time spent (ms): }   & 728,934 & 1,662,737 & 1,432,927 & 229,090 & 212,069 & \textbf{127,721} \\ \hline
\multicolumn{3}{|c|}{ Total Outperforming: }   & 0 & 3 & 1 & 28 & 8 & \textbf{30} \\ \hline
\multicolumn{3}{|c|}{ Total second: }   & \textbf{5} & 3 & 2 & 3 & 26 & \textbf{5} \\ \hline
\multicolumn{3}{|c|}{ \#Timeouts: }   & 1 & 6 & \textbf{0} & 3 & 2 & 2 \\ \hline
 \end{tabular}                 
}                 
 \vspace{.5\baselineskip}                 
 \caption{Time performance in milliseconds}\label{tb:time_performance}                 
 \vspace{.5\baselineskip}                 
}}                 
\end{table}                 

  \begin{table}[htbp!]                           
 {\footnotesize{                           
 \setlength{\tabcolsep}{3pt}                           
 \centering{                           
 \begin{tabular}{|c|c|c|c c c c c|c c|c c|c c|}                           
 \hline                           
  &  &  &\multicolumn{5}{|c|}{\textbf{ Cost }} &\multicolumn{2}{|c|}{\textbf{ Over-Apprx. }} & \multicolumn{2}{|c|}{\textbf{ \%Apprx traces }} &  \multicolumn{2}{|c|}{\textbf{ Avg Over Apprx. }}  \\ \cline{4-14}
\textbf{ Miner }&\textbf{ Domain }&\textbf{ Dataset }& \textbf{ ILP }& \textbf{ eMEQ }& \multirow{2}{*}{\textbf{ ALI }}& \textbf{ Aut. }& \textbf{ S-Comp}& \multirow{2}{*}{\textbf{ ALI }}& \textbf{ S-Comp}& \multirow{2}{*}{\textbf{ ALI }}& \textbf{ S-Comp}& \multirow{2}{*}{\textbf{ ALI }}& \textbf{ S-Comp}\\
  &  &  & \textbf{ Align.}& \textbf{ Align. } &  & \textbf{ Appr. }& \textbf{ Appr. } &  & \textbf{ Appr. } &   & \textbf{ Appr. } &  & \textbf{ Appr. }\\ \hline
\multirow{4}{*}{ IM } & \multirow{2}{*}{ public }& BPIC15\textsubscript{2f} & 2.02 & 2.02 & 26.73 & 2.02 & 2.07 & 24.71 & 0.05 & 100.0 \% & 5.2 \% & 24.71 & 1 \\ \cline{3-14}
  &  & SEPSIS & 0.12 & 0.12 & 12.83 & 0.12 & 0.12 & 12.69 & 0.00 & 100.0 \% & 0.2 \% & 12.69 & 1 \\ \cline{2-14}
  & \multirow{2}{*}{private} & PRT6 & 0.09 & 0.09 & 2.31 & 0.09 & 0.12 & 2.22 & 0.03 & 69.9 \% & 2.7 \% & 3.17 & 1.05 \\ \cline{3-14}
  &  & PRT9 & 0.38 & t/out & 2.14 & 0.38 & 0.39 & 1.79 & 0.01 & 98.9 \% & 0.9 \% & 1.80 & 1.50 \\ \hline\hline
\multirow{2}{*}{ SM }& public & BPIC12 & 1.29 & 1.29 & 17.87 & 1.29 & 1.30 & 16.58 & 0.02 & 73.6 \% & 0.8 \% & 22.55 & 2 \\ \cline{2-14}
  & private & PRT7 & 1.40 & 1.40 & 2.22 & 1.40 & 1.41 & 0.83 & 0.01 & 37.0 \% & 1.2 \% & 2.23 & 1 \\ \hline
 \end{tabular}                           
 }                           
 \vspace{.5\baselineskip}                           
 \caption{Cost comparison and order of approximation for those log-model pairs where S-Components over-approximate}\label{tb:cost_approximation}                           
 \vspace{.5\baselineskip}                           
 }}                           
 \end{table}                           

\vspace{.5\baselineskip}
\textbf{Analyzing cost over-approximations.}
Table \ref{tb:cost_approximation} shows the optimal costs for a subset of datasets. In these log-model pairs, the S-Components approach over-approximates the optimal cost of the alignments, i.e.\ in 6 out of 40 cases. For completeness the full table with optimal costs for all datasets can be found in \ref{app:cost_comparison}. The difference between the S-Component approach and all other approaches with optimal costs ranges from 0.002 to 0.052 per trace. We further broke down the over-approximation into two columns: the fraction of traces in the log that were affected by an over-approximation, which ranges from 0.2 to 5.2\%, and the average fitness-cost that was over-approximated in the affected traces, which ranges from 1 to 2 mismatches more than the optimal number. We observe that the approach never under-approximates and always returns proper alignments.
In comparison, ALI over-approximates in all datasets and the degree of over-approximation is much larger. It ranges from 0.83 up to 26.22. 
The fraction of approximated traces is also much larger, ranging from 9.3 to 100\% of aligned traces.
By design, the Automata-based approach always has the same cost as the ILP or the MEQ Alignments and thus is always optimal.

\textbf{Investigating causes of over-approximations.}
One example of over-approximation can be observed in the SEPSIS dataset (IM) for the trace $\langle$CRP, Leucocytes, LacticAcid, ER Registration, ER Triage, ER Sepsis Triage, IV Antibiotics, IV Liquid$\rangle$. The optimal alignment for this trace, retrieved with ILP-Alignments, is $\langle(\rhide, $ER Registration$), (\match, $CRP$), (\match, $Leucocytes$), (\match, $Lactic -Acid$), (\lhide, $ER Registration$), (\match, $ER Triage$), (\match, $ER Sepsis Triage$), (\match, $IV Antibiotics$), (\match, $IV Liquid$)\rangle$ with a cost of 2, because task ER Registration is misplaced after the parallel block. The S-Components approach finds instead the following alignment: $\langle(\lhide, $CRP$), (\lhide, $Leucocytes$), (\lhide, $LacticAcid$), (\match, $ER Registration$), (\match, $ER Triage$), (\match, $ER Sepsis Triage$), (\match, $IV Antibiotics$), (\match, $IV Liquid$)\rangle$ with a cost of 3. As shown in Figure~\ref{fig:SepsisIM}, in the process model, task ER Registration appears before the parallel block, while in the trace this occurs after the activities in a parallel block. As a result, the S-Component approach will hide all the activities in the parallel block, i.e.\ CRP, Leucocytes and LacticAcid, and then match the activity ER Registration. When recomposing the projected alignments, however, the added alignment cost will be 3 instead of 2. Note that the alignment of the S-Components approach is still a proper alignment, i.e.\ it represents the trace and forms a path through the process model.

\begin{figure}[h!]
\centering
\includegraphics[width=1\textwidth]{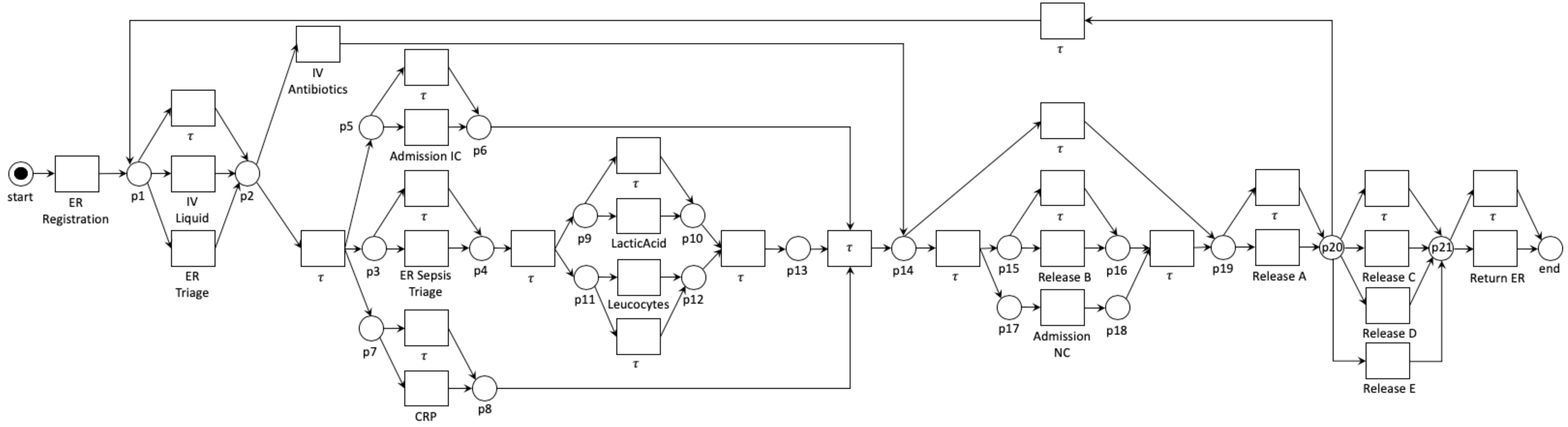}
\vspace{-1em}
\caption{Sepsis Inductive Miner process model.}\label{fig:SepsisIM}
\vspace{-1em}
\end{figure}

\subsection{Threats to validity}

A potential threat to validity is the selection of datasets. We decided to use two datasets of real-life log-model pairs from a recent discovery benchmark \cite{PD-Discovery-BM}. These datasets exhibit a wide range of structural characteristics and originate from different industry domains, so they provide a good representation of reality. However, the models discovered by Split Miner did not contain a lot of parallel structures and were thus not highlighting the strengths of the S-Components decomposition. This calls for further experiments with models with a higher degree of parallelism, and more in general, with very large real-life log-model pairs. Such datasets are not publicly available at the time of writing. An alternative, is to use artificial datasets as in \cite{SyntheticDataset}.

The selection of benchmark approaches is another threat to validity. The approaches presented in this article are only applicable to a subclass of Petri nets, namely 1-safe sound workflow nets, while the two exact approaches are applicable to a wider class of Petri nets, namely easy sound Petri nets, and the
approximate technique is applicable to sound Petri nets.
To the best of our knowledge, however, there are no conformance checking techniques available that target this specific subclass of Petri nets for a better comparison.
In addition, this specific class of Petri nets has relevance to the the field of Process Mining since BPMN models can be translated to this class and several mining algorithms such as Split Miner \cite{SplitMiner}, Inductive Miner \cite{InductiveMiner} or Fodina \cite{Fodina} produce Petri nets of this class.

A final threat to validity is posed by the number of methods used for automated process discovery (two). Potentially we could have chosen a larger number of methods. The choice of Split Miner and Inductive Miner was determined by both pragmatic reasons (other methods such as Structured Heuristics Miner return models with duplicate labels which we cannot handle, or led to models for which fitness could not be computed) as well as by the need to test two extreme cases: models with large state spaces versus models with large degrees of parallelism. Moreover, they are the best performing automated discovery methods according to the benchmark in \cite{PD-Discovery-BM}. So, all considered, they constitute a sufficiently representative set of discovery methods. 
\section{Conclusion}\label{sec:conclusion}

This article presented two contributions to the field of conformance checking of event logs against process models.
First, the article showed that the problem of conformance checking can be mapped to that of computing a minimal partially synchronised product between an automaton representing the event log (its minimal DAFSA) and an automaton representing the process model (its reachability graph). The resulting product automaton can be used to extract optimal alignments between each trace in the log and the closest corresponding trace of the model.

The use of a DAFSA to represent the event log allows us to reuse the optimal alignment computations for prefixes or suffixes that are shared by multiple traces. This is a distinctive feature of the proposal with respect to existing trace alignment techniques, which compute an optimal alignment for each trace in the log separately, without any reuse across traces. 
The empirical evaluation shows that this approach outperforms state-of-the-art trace alignment techniques in a clear majority of cases. 

However, like other techniques for computing optimal trace alignments, the proposed automata-based technique suffers from degraded performance when the process model has a high degree of concurrency. This issue stems from the combinatorial state explosion inherent to such process models. This state explosion, in turn, leads to larger search spaces of possible alignments.
To address this shortcoming, the article presented a divide-and-conquer approach, wherein the process model is decomposed into a collection of concurrency-free components, namely S-components. Each of these S-components (which corresponds to an automaton) is then aligned separately against a projected version of the log, leading to one product automaton per S-component. The article spells out criteria under which the resulting product automata can be recomposed into a correct (although not necessarily optimal) product automata of the original event log and process model. When two or more S-components cannot be re-composed due to a conflict, these S-components are merged and a product automaton is computed for the merged S-component. 
The evaluation showed that this decomposition-based approach achieves lower execution times than the monolithic automata-based approach when the number of S-components is high, in part thanks to the fact that the decomposition-based approach lends itself to parallel computation.
The evaluation also showed that the decomposition-based approach computes optimal alignments in the majority of cases. In those model-log pairs where it does not find the optimal (minimal) alignments, the over-approximation is small (one or a handful of moves) and it only occurs for a small percentage of traces (5\% or less).

To benefit from the advantages of both approaches (basic automata-based and decomposition-based approach), the article also presented a hybrid approach where either the automata-based approach (without decomposition) or the S-components-decomposition approach is used depending on the size of the reachabililty graph versus the total size of the S-components. The empirical evauluation showed that this hybrid approach outperforms all other techniques in a clear majority of cases (first in 30 out of 40 logs and second in 5 of the remaining 10 cases).




The proposed technique still fails to perform satisfactorily on a handful of the event logs used in the evaluation. Further improvements may be achieved by designing tighter heuristic functions to guide the A* algorithm, particularly to handle process models with nested loops.


In this article, we combined the S-components decomposition with an automata-based conformance checking approach. This combination is natural since each S-component corresponds to a concurrency-free slice of the process model, which can be seen as an automaton. However, the idea of using an S-component decomposition for conformance checking has broader applicability. We foresee that the S-components decomposition approach could also be used in conjunction with other trace alignment approaches. In particular, adapting the S-components approach to work with approaches that align one trace at a time, such as those of Adriansyah et al.~\cite{AdriansyahDA11} or Van Dongen~\cite{BVD-Alignment}, is a possible avenue for future work.


This article addressed the problem of identifying unfitting log behavior. However, the ideas investigated in this paper could also be applied to the related problem of identifying additional model behavior, that is, behavior captured in the process model but not observed in the event log~\cite{GarciaL17}. This latter problem is related to that of measuring the precision of a process model relative to an event log, which is an open problem in the field of process mining~\cite{PrecisionMeasures}. Another direction for future work is to investigate the application of the S-components decomposition approach to the problem of identifying and measuring additional model behavior.




\medskip
\noindent \emph{Acknowledgments.} This research is partly funded by the Australian Research Council (grant DP180102839), the Estonian Research Council (grant IUT20-55), and the European Research Council (PIX project).





\bibliographystyle{elsarticle-num}
\bibliography{lit}
\appendix
\newpage
\section{Complete cost comparison and order of approximation}\label{app:cost_comparison}

  \begin{table}[htbp!]                            
 {\footnotesize{                            
 \setlength{\tabcolsep}{3pt}                            
 \centering{                            
 \begin{tabular}{|c|c|c|c c c c c|c c|c c|c c|}                            
 \hline                            
 & & &\multicolumn{5}{|c|}{\textbf{ Cost }} &\multicolumn{2}{|c|}{\textbf{ Over-Apprx. }} & \multicolumn{2}{|c|}{\textbf{ \%Apprx traces }} & \multicolumn{2}{|c|}{\textbf{ Avg Over Apprx. }} \\ \cline{4-14}                           
 \textbf{ Miner }&\textbf{ Domain }&\textbf{ Dataset }& \textbf{ ILP }& \textbf{ eMEQ }& \multirow{2}{*}{\textbf{ ALI }}& \textbf{ Aut. }& \textbf{ S-Comp}& \multirow{2}{*}{\textbf{ ALI }}& \textbf{ S-Comp}& \multirow{2}{*}{\textbf{ ALI }}& \textbf{ S-Comp}& \multirow{2}{*}{\textbf{ ALI }}& \textbf{ S-Comp}\\                           
 & & & \textbf{ Align.}& \textbf{ Align. } & & \textbf{ Appr. }& \textbf{ Appr. } & & \textbf{ Appr. } & & \textbf{ Appr. } & & \textbf{ Appr. }\\ \hline                           
\multirow{20}{*}{ IM }&\multirow{12}{*}{ public }& BPIC12 & 0.87 & 0.87 & 3.38 & 0.87 & 0.87 & 3.05 & 0.00 & 100.0\% &  & 3.05 &  \\ \cline{3-14}
  &  & BPIC13\textsubscript{cp} & 1.46 & 1.46 & 2.36 & 1.46 & 1.46 & 0.89 & 0.00 & 81.0\% &  & 1.10 &  \\ \cline{3-14}
  &  & BPIC13\textsubscript{inc} & 0.84 & 0.84 & 4.14 & 0.84 & 0.84 & 3.31 & 0.00 & 98.6\% &  & 3.35 &  \\ \cline{3-14}
  &  & BPIC14\textsubscript{f} & 1.94 & 1.94 & 3.58 & t/out & 1.94 & 3.24 & 0.00 & 100.0\% &  & 3.24 &  \\ \cline{3-14}
  &  & BPIC15\textsubscript{1f} & 0.50 & 0.50 & 15.01 & t/out & t/out & 14.51 & t/out & 100.0\% &  & 14.51 &  \\ \cline{3-14}
  &  & BPIC15\textsubscript{2f} & 2.02 & 2.02 & 26.73 & 2.02 & 2.07 & 24.71 & 0.05 & 100.0\% & 5.2\% & 24.71 & 1 \\ \cline{3-14}
  &  & BPIC15\textsubscript{3f} & 1.70 & 1.70 & 25.67 & 1.70 & 1.70 & 23.97 & 0.00 & 100.0\% &  & 23.97 &  \\ \cline{3-14}
  &  & BPIC15\textsubscript{4f} & 1.14 & 1.14 & 22.84 & 1.14 & 1.14 & 21.69 & 0.00 & 100.0\% &  & 21.69 &  \\ \cline{3-14}
  &  & BPIC15\textsubscript{5f} & 1.20 & 1.20 & 24.90 & 1.20 & 1.20 & 23.70 & 0.00 & 100.0\% &  & 23.70 &  \\ \cline{3-14}
  &  & BPIC17\textsubscript{f} & 0.83 & 0.83 & 11.53 & 0.83 & 0.83 & 10.93 & 0.00 & 100.0\% &  & 10.93 &  \\ \cline{3-14}
  &  & RTFMP & 0.06 & 0.06 & 1.77 & 0.06 & 0.06 & 1.71 & 0.00 & 100.0\% &  & 1.71 &  \\ \cline{3-14}
  &  & SEPSIS & 0.12 & 0.12 & 12.83 & 0.12 & 0.12 & 12.69 & 0.00 & 100.0\% & 0.2\% & 12.69 & 1 \\ \cline{2-14}
  &\multirow{8}{*}{ private }& PRT1 & 1.43 & 1.43 & 4.57 & 1.43 & 1.43 & 3.15 & 0.00 & 98.7\% &  & 3.19 &  \\ \cline{3-14}
  &  & PRT2 & t/out & t/out & 38.32 & t/out & t/out & 0.00 & 0.00 & 0.0\% &  & 0.00 &  \\ \cline{3-14}
  &  & PRT3 & 0.23 & 0.23 & 3.84 & 0.23 & 0.23 & 3.61 & 0.00 & 100.0\% &  & 3.61 &  \\ \cline{3-14}
  &  & PRT4 & 1.22 & 1.22 & 4.79 & 1.22 & 1.22 & 3.56 & 0.00 & 98.5\% &  & 3.61 &  \\ \cline{3-14}
  &  & PRT6 & 0.09 & 0.09 & 2.31 & 0.09 & 0.12 & 2.22 & 0.03 & 69.9\% & 2.7\% & 3.17 & 1.05 \\ \cline{3-14}
  &  & PRT7 & 0.00 & 0.00 & 2.10 & 0.00 & 0.00 & 2.10 & 0.00 & 99.5\% &  & 2.11 &  \\ \cline{3-14}
  &  & PRT9 & 0.38 & t/out & 2.14 & 0.38 & 0.39 & 1.79 & 0.01 & 98.9\% & 0.9\% & 1.80 & 1.50 \\ \cline{3-14}
  &  & PRT10 & 0.06 & 0.06 & 2.81 & 0.06 & 0.06 & 2.75 & 0.00 & 99.9\% &  & 2.76 &  \\ \hline\hline
\multirow{20}{*}{ SM }&\multirow{12}{*}{ public }& BPIC12 & 1.29 & 1.29 & 17.87 & 1.29 & 1.30 & 16.58 & 0.02 & 73.6\% & 0.8\% & 22.55 & 2 \\ \cline{3-14}
  &  & BPIC13\textsubscript{cp} & 0.09 & 0.09 & 2.19 & 0.09 & 0.09 & 2.10 & 0.00 & 97.8\% &  & 2.15 &  \\ \cline{3-14}
  &  & BPIC13\textsubscript{inc} & 0.24 & 0.24 & 3.77 & 0.24 & 0.24 & 3.54 & 0.00 & 99.7\% &  & 3.55 &  \\ \cline{3-14}
  &  & BPIC14\textsubscript{f} & 2.91 & t/out & 5.83 & 2.91 & 2.91 & 2.92 & 0.00 & 67.3\% &  & 4.34 &  \\ \cline{2-14}
  &  & BPIC15\textsubscript{1f} & 3.20 & 3.20 & 19.38 & 3.20 & 3.20 & 16.18 & 0.00 & 93.2\% &  & 17.36 &  \\ \cline{3-14}
  &  & BPIC15\textsubscript{2f} & 10.22 & 10.22 & 27.44 & 10.22 & 10.22 & 17.21 & 0.00 & 80.8\% &  & 21.31 &  \\ \cline{3-14}
  &  & BPIC15\textsubscript{3f} & 9.70 & 9.70 & 13.75 & 9.70 & 9.70 & 4.06 & 0.00 & 99.3\% &  & 4.08 &  \\ \cline{3-14}
  &  & BPIC15\textsubscript{4f} & 10.42 & 10.42 & 22.23 & 10.42 & 10.42 & 11.81 & 0.00 & 76.9\% &  & 15.37 &  \\ \cline{3-14}
  &  & BPIC15\textsubscript{5f} & 8.10 & 8.10 & 17.04 & 8.10 & 8.10 & 8.94 & 0.00 & 71.5\% &  & 12.51 &  \\ \cline{3-14}
  &  & BPIC17\textsubscript{f} & 1.47 & 1.47 & 18.94 & 1.47 & 1.47 & 17.47 & 0.00 & 100.0\% &  & 17.47 &  \\ \cline{3-14}
  &  & RTFMP & 0.03 & 0.03 & 2.36 & 0.03 & 0.03 & 2.32 & 0.00 & 86.3\% &  & 2.69 &  \\ \cline{3-14}
  &  & SEPSIS & 4.72 & t/out & 12.23 & 4.72 & 4.72 & 7.51 & 0.00 & 93.5\% &  & 8.03 &  \\ \cline{2-14}
  &\multirow{8}{*}{ private }& PRT1 & 0.29 & 0.29 & 3.95 & 0.29 & 0.29 & 3.66 & 0.00 & 99.4\% &  & 3.68 &  \\ \cline{3-14}
  &  & PRT2 & 8.32 & t/out & 34.53 & 8.32 & 8.32 & 26.22 & 0.00 & 100.0\% &  & 26.22 &  \\ \cline{3-14}
  &  & PRT3 & 2.54 & 2.54 & 6.50 & 2.54 & 2.54 & 3.96 & 0.00 & 99.4\% &  & 3.98 &  \\ \cline{3-14}
  &  & PRT4 & 1.91 & 1.91 & 3.59 & 1.91 & 1.91 & 1.68 & 0.00 & 64.3\% &  & 2.61 &  \\ \cline{3-14}
  &  & PRT6 & 1.08 & 1.08 & 1.26 & 1.08 & 1.08 & 0.19 & 0.00 & 9.3\% &  & 2.00 &  \\ \cline{3-14}
  &  & PRT7 & 1.40 & 1.40 & 2.22 & 1.40 & 1.41 & 0.83 & 0.01 & 37.0\% & 1.2\% & 2.23 & 1 \\ \cline{3-14}
  &  & PRT9 & 0.35 & t/out & 1.43 & 0.35 & 0.35 & 1.08 & 0.00 & 28.7\% &  & 3.77 &  \\ \cline{3-14}
  &  & PRT10 & 0.10 & 0.10 & 2.76 & 0.10 & 0.10 & 2.65 & 0.00 & 97.2\% &  & 2.73 &  \\ \hline
 \end{tabular}                            
 }                            
 \vspace{.5\baselineskip}                            
 \caption{Cost comparison and order of approximation for those log-model pairs where S-Components over-approximate}\label{tb:cost_approximation}                            
 \vspace{.5\baselineskip}                            
 }}                            
 \end{table}                                                      
\pagebreak
\section{Time performance -- single threaded}\label{app:time_performance_st}

 \begin{table}[htbp!]                 
 {\footnotesize{                 
 \setlength{\tabcolsep}{3pt}                 
 \centering{                 
 \begin{tabular}{|c|c|c|c c c|c c|c|}                 
 \hline                 
  &  &  & \multicolumn{3}{c|}{\textbf{ Baselines }}     & \multicolumn{3}{c|}{\textbf{ Our Approaches }}   \\ \cline{4-9}
\textbf{ Miner }& \textbf{ Domain }& \textbf{ Dataset }& \textbf{ ILP }& \textbf{ extended MEQ }& \textbf{ ALI }& \textbf{ Automata-based }& \textbf{ S-Component }& \textbf{ Hybrid }\\
  &  &  & \textbf{ Alignments }& \textbf{ Alignments } &  & \textbf{ Approach }& \textbf{ Approach }& \textbf{ Approach }\\ \hline
\multirow{20}{*}{ IM } & \multirow{12}{*}{ public }& BPIC12 &\textbf{ 125,948 }& t/out & 539,214 & 428,166 & 160,070 & 160,070 \\ \cline{3-9}
  &  & BPIC13\textsubscript{cp} &\textbf{ 295 }& 2,137 & 4,469 & 330 & 317 & 330 \\ \cline{3-9}
  &  & BPIC13\textsubscript{inc} & 2,040 & 79,597 & 29,657 & 24,917 &\textbf{ 504 }&\textbf{ 504 }\\ \cline{3-9}
  &  & BPIC14\textsubscript{f} & 98,667 & 592,238 & 130,455 & t/out &\textbf{ 9,844 }&\textbf{ 9,844 }\\ \cline{3-9}
  &  & BPIC15\textsubscript{1f} &\textbf{ 3,691 }& 6,015 & 20,203 & t/out & t/out & t/out \\ \cline{3-9}
  &  & BPIC15\textsubscript{2f} & 19,805 & 25,185 & 44,657 &\textbf{ 8,061 }& 16,418 & 16,418 \\ \cline{3-9}
  &  & BPIC15\textsubscript{3f} & 31,075 & 91,996 & 49,142 &\textbf{ 12,218 }& 69,432 &\textbf{ 12,218 }\\ \cline{3-9}
  &  & BPIC15\textsubscript{4f} & 12,093 & 38,816 & 35,125 &\textbf{ 3,837 }& 8,064 & 8,064 \\ \cline{3-9}
  &  & BPIC15\textsubscript{5f} &\textbf{ 8,174 }& 15,553 & 24,688 & 32,413 & 18,582 & 32,413 \\ \cline{3-9}
  &  & BPIC17\textsubscript{f} & 22,805 & 257,705 & 283,112 &\textbf{ 7,127 }& 8,996 &\textbf{ 7,127 }\\ \cline{3-9}
  &  & RTFMP & 3,924 & 147,135 & 6,891 & 1,779 &\textbf{ 1,024 }&\textbf{ 1,024 }\\ \cline{3-9}
  &  & SEPSIS & 6,964 & 17,785 & 29,281 & 58,571 &\textbf{ 2,984 }&\textbf{ 2,984 }\\ \cline{2-9}
  &\multirow{8}{*}{ private }& PRT1 & 1,379 & 12,454 & 10,891 & 2,527 &\textbf{ 693 }&\textbf{ 693 }\\ \cline{3-9}
  &  & PRT2 & t/out & t/out & 17,203 & t/out & t/out & t/out \\ \cline{3-9}
  &  & PRT3 & 425 & 2,175 & 6,438 &\textbf{ 392 }& 404 &\textbf{ 392 }\\ \cline{3-9}
  &  & PRT4 & 3,945 & 22,515 & 32,359 & 6,069 &\textbf{ 1,954 }& 6,069 \\ \cline{3-9}
  &  & PRT6 & 335 & 1,168 & 6,125 &\textbf{ 315 }& 330 &\textbf{ 315 }\\ \cline{3-9}
  &  & PRT7 &\textbf{ 300 }& 2,560 & 5,844 & 301 & 323 & 301 \\ \cline{3-9}
  &  & PRT9 & 29,538 & t/out & 10,843 & 68,047 &\textbf{ 1,937 }&\textbf{ 1,937 }\\ \cline{3-9}
  &  & PRT10 & 970 & 49,160 & 8,750 &\textbf{ 350 }& 724 & 724 \\ \hline \hline
\multirow{20}{*}{ SM } & \multirow{12}{*}{ public }& BPIC12 & 189,423 & t/out & 118,053 &\textbf{ 32,035 }& 111,256 &\textbf{ 32,035 }\\ \cline{3-9}
  &  & BPIC13\textsubscript{cp} & 350 & 3,481 & 4,351 & 258 &\textbf{ 232 }& 258 \\ \cline{3-9}
  &  & BPIC13\textsubscript{inc} & 1,764 & 96,106 & 18,906 &\textbf{ 766 }& 918 &\textbf{ 766 }\\ \cline{3-9}
  &  & BPIC14\textsubscript{f} & 40,161 & t/out & 157,733 &\textbf{ 26,605 }& 31,743 &\textbf{ 26,605 }\\ \cline{3-9}
  &  & BPIC15\textsubscript{1f} & 3,239 & 2,941 & 12,986 & 2,266 &\textbf{ 2,048 }& 2,266 \\ \cline{3-9}
  &  & BPIC15\textsubscript{2f} & 9,534 & 7,025 & 26,114 &\textbf{ 3,568 }& 3,678 &\textbf{ 3,568 }\\ \cline{3-9}
  &  & BPIC15\textsubscript{3f} & 7,541 & 12,926 & 25,067 &\textbf{ 2,531 }& 2,660 &\textbf{ 2,531 }\\ \cline{3-9}
  &  & BPIC15\textsubscript{4f} & 7,516 & 8,220 & 19,893 &\textbf{ 2,345 }& 2,388 &\textbf{ 2,345 }\\ \cline{3-9}
  &  & BPIC15\textsubscript{5f} & 9,111 & 6,303 & 24,608 &\textbf{ 2,710 }& 2,815 &\textbf{ 2,710 }\\ \cline{3-9}
  &  & BPIC17\textsubscript{f} & 28,474 & 91,585 & 265,169 &\textbf{ 6,357 }& 8,206 &\textbf{ 6,357 }\\ \cline{3-9}
  &  & RTFMP & 3,477 & 146,871 & 7,020 &\textbf{ 354 }& 593 &\textbf{ 354 }\\ \cline{3-9}
  &  & SEPSIS & 5,355 & t/out & 18,821 &\textbf{ 1,179 }& 1,219 &\textbf{ 1,179 }\\ \cline{2-9}
  & private & PRT1 & 2,391 & 66,135 & 10,750 &\textbf{ 664 }& 809 &\textbf{ 664 }\\ \cline{3-9}
  &  & PRT2 & 36,539 & t/out & 95,173 &\textbf{ 18,218 }& 18,531 &\textbf{ 18,218 }\\ \cline{3-9}
  &  & PRT3 & 365 & 2,627 & 6,422 &\textbf{ 332 }& 437 &\textbf{ 332 }\\ \cline{3-9}
  &  & PRT4 & 7,319 & 160,196 & 48,438 &\textbf{ 3,151 }& 6,831 &\textbf{ 3,151 }\\ \cline{3-9}
  &  & PRT6 &\textbf{ 230 }& 1,005 & 6,015 & 249 & 300 & 249 \\ \cline{3-9}
  &  & PRT7 &\textbf{ 256 }& 2,256 & 5,906 & 258 & 422 & 258 \\ \cline{3-9}
  &  & PRT9 & 27,542 & t/out & 18,453 &\textbf{ 1,472 }& 2,183 &\textbf{ 1,472 }\\ \cline{3-9}
  &  & PRT10 & 1,263 & 96,503 & 9,672 &\textbf{ 328 }& 362 &\textbf{ 328 }\\ \hline \hline
\multicolumn{3}{|c|}{ Total time spent (ms):  }   & 754,223 & 2,068,372 & 2,194,897 & 761,068 & 500,228 &\textbf{ 367,072 }\\ \hline
\multicolumn{3}{|c|}{ Total Outperforming:  }   & 7 & 0 & 1 & 23 & 9 &\textbf{ 26 }\\ \hline
\multicolumn{3}{|c|}{ Total second:  }   & 7 & 2 & 1 & 6 &\textbf{ 23 }& 9 \\ \hline
\multicolumn{3}{|c|}{ \#Timeouts:  }   & 1 & 8 &\textbf{ 0 }& 3 & 2 & 2 \\ \hline
 \end{tabular}                 
 }                 
 \vspace{.5\baselineskip}                 
 \caption{Time performance in milliseconds}\label{tb:time_performance}                 
 \vspace{.5\baselineskip}                 
 }}                 
 \end{table}                 

\section{Recomposing partial alignments is correct}\label{app:alg5_correct}


Theorem~\ref{thm:recompose_alignment_optimal} states that the sequence $\epsilon_c$ returned by Alg.~\ref{alg:constPSPwithSComponents}$^*(\logL,\dafsa,\sysNet)$, the modification of Alg.~\ref{alg:constPSPwithSComponents} constructing reachability graphs as described in Sect.~\ref{sec:s-components_invisible_label_conflicts}, is an alignment of $\dafsa$ to a sound, uniquely-labeled, free-choice workflow net $\sysNet$. In other words, the projection $\epsilon_c$ onto the left-hand component is trace $c$, and the projection on the right-hand component is a path through the reachability graph of $\wNet$. We prove both properties individually.


\begin{proof}[Proof of Thm~\ref{thm:recompose_alignment_optimal}.1]
We show $\lambda(\epsilon_c\trPr{\dafsa}) = c$ by induction on the prefixes $c'$ of $c$ in the for-loop in lines 10-39. For the empty prefix $c'$ before the for-loop, $\epsilon_c = \langle \rangle$. In each iteration of the for-loop with $\mathit{pos}_c \leq |c|$, the prefix $c'$ is extended with $\ell = c(\mathit{pos}_c) \in L$, and the current prefix of $\epsilon_c$ is extended with a synchronization $(\lhide,(n,\ell,n'),\perp)$ (line 27) or $(\match,(n,\ell,n'),(m,\ell,m'))$ (line 33). Thus, the proposition holds for both prefixes. The only other extension of the current prefix of $\epsilon_c$ in Alg.5 is with synchronizations $(\lhide,\perp,(m,\ell,m'))$ in line 18 which do not occur in $\lambda(\epsilon_c\trPr{\dafsa})$.

\end{proof}

Proving Thm~\ref{thm:recompose_alignment_optimal}.2 requires some further notation, definitions, and observations on Petri nets.

For a $\wNet$, let $\scomp = \{\wNet_1,\wNet_2, \dots, \wNet_k\}$ be the set of S-Components of $\wNet$. By the abuse of notation, let $\scomp(t)$ be the set of S-components in which $t$ is contained as $\wNet_i = ((P_i,T_i,F_i,\lambda_i),i_i,o_i) \in \scomp(t)$ iff $t \in T_i$,
for each $t\in T$; sets $\scomp(p), p\in P_i$, are defined accordingly.

In a  sound free-choice net $\wNet$, the pre- and post-sets of a transition $t$ (together) cover the same S-component, which follows from $\wNet$ being covered by S-components~\cite{Esparza1990synthesis_rules} and the free-choice structure:
\begin{equation}\label{eq:label_unique_pre_post_cover}
\bigcup_{p \in \inP{t}} \scomp(p) = \scomp(t) = \bigcup_{p \in \outP{t}} \scomp(p).
\end{equation}
In any free-choice net $\wNet$ with S-components $\{\wNet_1,\wNet_2, \ldots,\wNet_k\}$ and reachability graphs $\reachGraph(\wNet_j) = (M^j,A^j,m_0^j,M_f^j), j=1,\ldots,k$ holds:
\begin{equation}\label{eq:s_component_single_place_marked}
\textit{each reachable marking } m \in M^j \textit{ has the form } m = [p], p\in P_j
\end{equation}
\begin{equation}\label{eq:s_component_arc_single_places}
\textit{each arc } a \in A^j \textit{ has the form } a = ([p],\lambda(t),[p']), p,p'\in P_j, t \in T_j
\end{equation}
Without loss of generality, in a sound, free-choice workflow net $\wNet = (P,T,F,\lambda,m_0,m_f)$ holds $m_0 = [p_0], p_0 \in P, \inTr{p_0} = \emptyset$ and $m_f = \{[p_f]\}, p_f \in P, \outTr{p_f} = \emptyset$.
%


\begin{proof}[Proof of Thm~\ref{thm:recompose_alignment_optimal}.2]
We have to show that $\epsilon_c\trPr{\reachGraph}$ corresponds to a path through the $\reachGraph$ of $\wNet$.
Let $\epsilon_c\trPr{\reachGraph} = (a_1,\ldots,a_s)$.

By equation (\ref{eq:s_component_single_place_marked}), each $m_i = ([p_i^1],\ldots,[p_i^k])$ and for the $k$ S-components of $\wNet$. For such a vector $m = ([p^1],\ldots,[p^k])$ let, $\widehat{m} = \{p^1,\ldots,p^k\}$ be the set of marked places.

We show the proposition by showing that
\begin{inparaenum}[(a)]
	\item\label{item:a} $\widehat{m_0} = m_0^\wNet$,
	\item\label{item:b} each $\widehat{m_i} \in M_{\reachGraph(\wNet)},i=1,\ldots,s'$ is a marking of $\wNet$,
	\item\label{item:c} each $(\widehat{m_i},\ell_i,\widehat{m'_i}) \in A_{\reachGraph(\wNet)}$ is a step in $\wNet$, and
	\item\label{item:d} $\widehat{m'_{s'}} \in M_f^\wNet$.
\end{inparaenum}

Regarding (\ref{item:a}), the initial marking of each S-component $j=1,\ldots,k$ is $m_0^j = [p_0]$ as $m_0^\wNet = [p_0]$. Thus, the proposition holds by $m_0 = ([p_0],\ldots,[p_0])$ in line 9.

We show (\ref{item:b}) and (\ref{item:c}) by induction on the length $i$ of the prefixes of $\epsilon_c$. For $i=0$, (\ref{item:b}) holds for $m_0$ due to (\ref{item:a}), and (\ref{item:c}) holds trivially. For $i > 0$, if $\epsilon_c\trPr{\reachGraph}[i] = \perp$ then there is nothing to show. Otherwise, $\epsilon_c\trPr{\reachGraph}[i] = (\mathit{op},a_D,(m_{i-1},\ell_i,m_i))$ and $m_{i-1} \in A_{\reachGraph(\wNet)}$ by inductive assumption. We have to show: for $([p_{i-1}^1],\ldots,[p_{i-1}^k]) = m_{i-1}$ and $([p_i^1],\ldots,[p_i^k]) = m_i$, $(\widehat{m_{i-1}},\ell_i,\widehat{m_i}) \in A_{\reachGraph(\wNet)}$.

If $\ell_i \neq \ell = c(\mathit{pos}_c)$ in line 11, then $\configs_{\ell} \neq \emptyset$, and then $(m_{i-1},\ell_i,m_i)$ due to line 18, $m_i = m_{i-1}  \blacktriangleright (a_1,\dots,a_k)$, and $\mathit{lab}_x = \mathit{sync}_x = \scomp(t)$ for some transition $t \in T, \netLabel(t) = \ell_i = x$ (by lines 15,16). If there was no such $t$, then $\epsilon_c$ is due to line 40 and the proposition holds by Lem.~\ref{lem:global_alignment_is_alignment}.

Because $\wNet$ is uniquely labeled, $t$ is unique. By line 14, $([p_{i-1}^j],\ell_i,[p_i^j]) \in A_{\reachGraph}^j$ for each $j \in \scomp(t)$.
Due to (\ref{eq:label_unique_pre_post_cover}), the preset of $t$ is marked, $\inTr{t} = \{ p_{i-1}^j \mid j\in \scomp(t)\} \subseteq \widehat{m_{i-1}}$, and $t$ is enabled in $\widehat{m_{i-1}}$. Firing $t$ yields the successor marking $m^* = (\widehat{m_{i-1}} \setminus \inTr{t}) \cup \outTr{t} = (\widehat{m_{i-1}} \setminus \{ p_{i-1}^j \mid j\in \scomp(t) \}) \cup \{ p_i^j \mid j \in \scomp(t) \}$ by (\ref{eq:label_unique_pre_post_cover}). By construction of $(a_1,\dots,a_k)$ in line 17 from $\mathit{sync}_x = \scomp(t)$, we can rewrite $m^* = (\widehat{m_{i-1}} \setminus \{ p_{i-1}^j \mid a_j \neq \perp \}) \cup \{ p_i^j \mid a_j \neq \perp \}$ as Alg.~\ref{alg:remTau}$^*$ ensures transition effects are uniquely identified by their extended labels (see Sect.~\ref{sec:s-components_invisible_label_conflicts}). By line 18, and the definition of $\blacktriangleright$, $m^* = \widehat{m_i}$. Thus, $(\widehat{m_{i-1}} \setminus \inTr{t}) \cup \outTr{t} = \widehat{m_{i}}$ and propositions \ref{item:b} and \ref{item:c} hold.

If $\ell_i = \ell = c(\mathit{pos}_c)$ then $(m_{i-1},\ell_i,m_i)$ due to line 33 and a similar reasoning as above holds as there exists a unique transition $t$ with $\netLabel(t) = \ell_i$ and $([p_{i-1}^j],\ell_i,[p_i^j]) \in A_{\reachGraph}^j$ with $p_{i-1}^j \in \widehat{m_{i-1}}$ for each $j \in \scomp(t)$.

To prove (\ref{item:d}), we know $m_f^\wNet = \{ [p_f] \}$, by $N$ having a unique final place $p_f$. Thus, for $m'_{s'} = ([p^1_{s'}],\ldots,[p^k_{s'}])$, $p^j_{s'} = p_f$ has to hold for all $j=1,\ldots,k$. Suppose that for the recomposed alignment, there exists $j \in \{1,\ldots,k\}$ where in $m'_{s'}$, $[p_{s'}^j] \neq [p_f]$. Each $\epsilon_j$ calculated in line 5 of Alg.~\ref{alg:constPSPwithSComponents}$^*$ is an alignment. Thus, the path $\modPr{\epsilon_j}\trPr{L_j} = (a_1^j,\ldots,a_{s_j}^j)$ through $\reachGraph(\wNet_j)$ ends in the final place $a_{s_j}^j = (m^j,\ell_j,[p_f])$, and thus $\reachGraph(\wNet_j)$ has further arcs from $[p_{s'}^j]$ to $[p_f]$ that should have been considered by Alg.~\ref{alg:constPSPwithSComponents}$^*$. Case distinction:
\begin{itemize}
\item[(i)] For all $j = 1, \ldots, k$, $[p_{s'}^j] \neq [p_f]$ with two possible cases:

    (i-a) There exists some $t \in T$ of $\wNet$ with $\widehat{m'_{s'}} \supseteq \inTr{t}$ and $\netLabel(t) = x$. Then either $\mathit{sync}_x = \mathit{lab}_x$ in line 16 of Alg.~\ref{alg:constPSPwithSComponents}$^*$ and a synchronization with arc $(m'_{s'}, x, m'')$ would have been added to $\epsilon_c$ by the arguments for (c) given above. Or $x = \ell = c(\mathit{pos}_c), \mathit{pos}_c < |c|$ and a corresponding synchronziation would have been added in line 33 of Alg.5. Both cases contradict the algorithm.

    (i-b) There exists no $t \in T$ with $\widehat{m'_{s'}} \supseteq \inTr{t}$. But then $m'_{s'}$ is a deadlock contradicting $\wNet$ being sound.

\item[(ii)] There exist $j,r$, $[p_{s'}^j] \neq [p_f]$ and $[p_{s'}^r] = [p_f]$. By (b) and (c), $\widehat{m'_{s'}} \subseteq \{ p_f, p_{s'}^j \}$ is a reachable marking of $\wNet$ which contradicts $\wNet$ being sound.
\end{itemize}
\end{proof}

\end{document}